\definecolor{Brown}{rgb}{0.55,0.0,0.10}
\definecolor{dgreen}{rgb}{0.00,0.56,0.00}
\newcommand{\Menge}[2]{\bigg\{{#1}~\bigg |~{#2}\bigg\}}
\newcommand{\conv}{\ensuremath{\operatorname{conv}}}
\newcommand{\argmin}{\ensuremath{\operatorname{argmin}}}
\newcommand{\R}{\mathbb{R}}
\newcommand{\N}{\mathbb{N}}
\newcommand{\E}{\mathbb{E}}
\newcommand{\G}{\mathcal{G}}
\newcommand{\C}{\mathcal{C}}
\newcommand{\QQ}{\mathcal{Q}}
\newcommand{\RR}{\mathcal{R}}
\newcommand{\T}{\mathcal{T}}
\newcommand{\PP}{\mathcal{P}}
\newcommand{\mc}{\mathcal}
\newcommand{\bd}{\textbf}
\newcommand{\ds}{\displaystyle}
\begin{document}

\title{Transforming Monitoring Structures with
Resilient Encoders. Application to Repeated Games.
}


\author{Maël Le Treust        \and
        Samson Lasaulce   
}

\authorrunning{Maël Le Treust
\and
Samson Lasaulce } 

\institute{Maël Le Treust  \at
Université Paris-Est Marne-la-Vallée\\
Institut d'électronique et d'informatique Gaspard-Monge\\
e-mail  : mael.letreust@univ-mlv.fr\\
           \and
Samson Lasaulce \at
Laboratoire des Signaux et Systèmes\\
CNRS - Université Paris Sud XI - Supélec\\
e-mail : samson.lasaulce@lss.supelec.fr\\
}

\date{Received: date / Accepted: date}

\maketitle

\begin{abstract}
An important feature of a dynamic game is its monitoring structure namely, what the players effectively see from the played actions. We consider games with arbitrary monitoring structures. One of the purposes of this paper is to know to what extent an encoder, who perfectly observes the played actions and sends a complementary public signal to the players, can establish perfect monitoring for all the players. To reach this goal, the main technical problem to be solved at the encoder is to design a source encoder which compresses the action profile in the most concise manner possible. A special feature of this encoder is that the multi-dimensional signal (namely, the action profiles) to be encoded is assumed to comprise a component whose probability distribution is not known to the encoder and the decoder has a side information (the private signals received by the players when the encoder is off). This new framework appears to be both of game-theoretical and information-theoretical interest. In particular, it is useful for designing certain types of encoders that are resilient to single deviations and provide an equilibrium utility region in the proposed setting; it provides a new type of constraints to compress an information source (i.e., a random variable). Regarding the first aspect, we apply the derived result to the repeated prisoner's dilemma.

\keywords{Arbitrarily varying source \and Dynamic games \and Folk theorem \and Games with imperfect monitoring \and Information constraint \and Observation structure \and Source coding}

\end{abstract}

\section{Introduction}
\label{intro}

The set of equilibrium utilities of a non-cooperative dynamic game is strongly related to the observation capabilities of the players. For instance, in a long-run repeated prisoner's dilemma where the two players do not see anything from the played actions (blind players), the only equilibrium point corresponds to the inefficient outcome where both players defect \cite{Aumann(SurveyRG)81}\cite{sorin-handbook-1992}. On the other hand, when players perfectly observe all the actions which have been played (perfect monitoring assumption), efficient equilibria can be sustained; in particular, the social optimum is an equilibrium point of the infinitely repeated dilemma or its version with discount factor. This special case illustrates the potential need for being able to transform the monitoring structure of a repeated game into a new one. The relevance of such a transformation may appear in other types of settings such as stochastic games, multi-agent learning, or networked optimization. For example, perfect monitoring (PM) can be targeted to implement the standard fictitious play or best-response algorithms (see e.g., \cite{Peyton(book)04}). The desired final monitoring structure (i.e., after transformation) does not necessarily need to be PM and, for example, ensuring that the players observe (thanks to the transformation) a certain public signal can be sufficient to obtain efficient outcomes for the game. The solution proposed in this paper is to implement this monitoring structure transformation by adding an external agent or encoder (whose role is not strategic but only to encode signals and send them to the players to improve their observation capability) to the initial game. For the sake of clarity and simplicity, the encoder is assumed to perfectly observe the actions played and the desired structure, after transformation, is PM. Note that PM at the encoder is not always necessary to ensure PM for the players (see \cite{LetreustLasaulce(GAMECOMM)11}). Interestingly, there exist some practical scenarios where assuming PM at a terminal is relevant. In wireless communications, the decentralized multiple access channel case is known to be very important \cite{letreust-twc-2010}. In this scenario, there are one receiver (e.g., a WiFi access point or a base station) and several transmitters (e.g., mobile terminals) which choose freely their transmission policy (say their power allocation policy) in order to optimize some performance metric such as the individual transmission rate. Considering that the base station has a computational and observation capability much larger than the mobile transmitters is a typical assumption in wireless communications (see e.g., \cite{kowalewski-globecom-2000}\cite{dasilva-vtc-2001}). As a consequence, the receiver can, in particular, have the role of an encoder which sends a feedback on the played actions to the transmitters. Another important scenario of practical interest for which the framework proposed in this paper is fully relevant is the case of sensor networks with a fusion center (see e.g., \cite{akyildiz-cn-2002}).


One of the main issues addressed in this paper is the design of an encoder which is capable of transforming a monitoring structure by sending complementary public signals to the players. The problem comes from the fact that the set of public signals has a fixed cardinality. One of the consequences of this assumption is the existence of an information constraint on the played action profiles and more precisely on their distribution, and therefore on the feasible players' utilities. As explained further, characterizing this information constraint amounts to designing an encoder which represents the information source (namely, the action profile) in a manner as concise as possible. However, to make the source encoder able to operate at equilibrium (and therefore characterize equilibrium utilities), the encoder has to possess a certain property, called the resilience property \cite{LetreustLasaulce(NetGCoop)11}, which has a cost in terms of compression efficiency. In terms of communication, such a property ensures that, even when one player uses a distribution on his action sequences which is arbitrary and unknown to the encoder, PM remains guaranteed. In strategic terms, if we consider the case of repeated games (which is the case study chosen in this paper), it means that grim-trigger-like plans can be implemented.

The paper is structured as follows. A state of the art on the problem under
investigation is done in Sec. \ref{sec:review}. Sec. \ref{sec:info-constraints}
exploits information-theoretic tools to derive one of the central two results
of this paper which is the information constraint (\ref{eq:RobustSide})  stated in Theorem \ref{theo:RobustSide} and
explains how this constraint translates into a set of action profile distributions
(and therefore into feasible utilities) that are compatible with the perfect
monitoring assumption. Sec. \ref{sec:iid-equilibria} provides the second
important result, stated in Theorem \ref{theo:FolkEncodeurAssiste}, which is an achievable equilibrium utility region for
encoder-assisted infinitely repeated games with signals.
The paper is concluded in Sec. \ref{sec:concl}.

\section{Related works}
\label{sec:review}

Before mentioning some relevant works related to the one reported here, it is useful to define at this point a monitoring structure. A monitoring structure is a conditional or transition probability defined by~:
\begin{equation}
\daleth : \mc{A} \longrightarrow \Delta(\mc{S})
\end{equation}
where $\mc{A} = \mc{A}_1 \times \mc{A}_2 \times ...\times \mc{A}_K$ is the discrete set of action profiles, $K$ is the number of players, $\mc{A}_k$ is the discrete set of actions of player $k\in \mc{K}= \{1,2,...,K\}$, $\mc{S} = \mc{S}_1 \times \mc{S}_2 \times ...\times \mc{S}_K$, $\mc{S}_k$ is the discrete set of signals received by player $k$, and the notation $\Delta(\mc{S})$ stands for the set of probability distributions on $\mc{S}$ (unit simplex).

The first relevant body of related works comprises papers providing lossless \cite{shannon-bell-1948} and zero-errors \cite{Shannon-zero-error56}\cite{Witsenhausen76} source coding theorems. Indeed, the role of the encoder in this paper is to encode a sequence (or block) of  action profiles into a sequence of public signals which is observed by the players. As already mentioned, making this in a concise manner is of prime interest to characterize the information constraint. The considered source coding problem has two main features~: the decoders (namely, the players) have a side information on the source (the private signal) and we want the encoder to be resilient to single deviations that is, the past action profiles are decoded reliably even when the probability distribution of the action of a given player varies arbitrarily over time and remains unknown to both the encoder and decoders. Remarkably, the information theory literature provides the right framework to design such encoders. The corresponding framework is the one of arbitrary varying sources (AVS)~: the source distribution $\PP_v(a)\in \Delta(\mc{A})$ can vary from sample to sample, depending on a parameter or state $v\in\mc{V}$ which represents, in our setting, the probability distribution of the deviator's action. The most relevant works on AVS is based on graph coloring \cite{BondyMurty} and can be found in \cite{Ahlswede(ColoringPart1)79} and \cite{Ahlswede(ColoringPart2)80}. Indeed, the latter references deal with the scenario of two correlated sources either in the case where the destination is informed with the sequence of states or in the case where it is not known. The work reported in this paper is precisely related to the scenario of two arbitrary varying correlated sources of actions $\boldsymbol{a}$ and private signals $\boldsymbol{s}_k$ with a destination (i.e. player $k\in\mc{K}$) uninformed of the state (i.e. strategy of an eventual deviator)~; this scenario is described by Fig. \ref{fig:ReconstructionRobustSideCapaK}. One of our contributions, in addition to establishing a link between equilibrium utility regions and the AVS literature, is to show that the entropy positiveness condition (EPC) in \cite{Ahlswede(ColoringPart1)79}\cite{Ahlswede(ColoringPart2)80}, under which source coding rates (i.e. optimal compression level) can be characterized, can be removed and replaced with another mathematical condition which is of strong game-theoretic interest namely, the resilience property. Additionally, it holds for some useful special cases for which the EPC is not met, the case of deterministic channels in particular.

The second body of works concerns works on folk theorems. The stronger results have been obtained for one of the simplest classes of dynamic games namely, the one of repeated games (see e.g., \cite{sorin-handbook-1992} for a survey). The standard approach consists in assuming a given monitoring structure (e.g., standard-trivial monitoring \cite{Lehrer91}, public monitoring \cite{FudenbergLevineMaskin94}, or almost-perfect monitoring \cite{HornerOlszewski06}) and, then, deriving a folk theorem. Compared to these works, our approach is different since we do not try to characterize the equilibrium utilities of a repeated game with an arbitrary monitoring structure (which is an open problem \cite{RenaultTomala(GeneralProp)11}). Rather, our approach aims at transforming, with an additional encoder, an arbitrary monitoring structure of any dynamic game into a new monitoring structure for which the equilibrium utilities can be fully characterized~; in this paper, PM is the targeted final structure. Even though the final monitoring structure is PM, there are still some differences between a dynamic game with PM (the focus will be on repeated games here) and a dynamic game where players have PM thanks to the encoder~:
\begin{itemize}
  \item there exists an internal information constraint on the action distribution due to the fact that
      the set of public signals has a fixed cardinality~;

  \item action profiles are encoded by blocks by the encoder and each player decodes a block of played actions from a whole block of observations. Therefore PM is established with a delay~;

  \item only i.i.d equilibrium utilities (and convex combinations of them) are studied. This assumption
on the action profiles is well motivated in the paper and does not prevent us from deriving useful results which may be extended if needed.
\end{itemize}
For all of these reasons, we will use the term ``virtually perfect monitoring'' (VPM) to refer to such a framework.

To conclude on the most relevant references related to the work reported in this paper, we will mention a couple of references at the intersection between game and information theory. For instance, in \cite{lehrer-jet-1988}, \cite{NeymanBavly03}, \cite{Peretz11} entropy-based information constraints are used to characterize the individually rational levels of repeated games with bounded recall. In \cite{GossnerTomala07}, the authors characterize the maximum utility a team can guarantee against another in a class of repeated games with imperfect monitoring by exploiting a constraint on possible correlation schemes expressed in terms of entropy variation. In \cite{GossnerHernandezNeyman06}, the authors are exploiting an information constraint in the sense of the present work that is, the source coding rate has to be less than the channel capacity, although the constraint is not interpreted this way in their work. This leads to a characterization of equilibrium utilities a team of two players can implement when only one player is (noncausally) informed of the i.i.d. sequence of states of the repeated game.

\section{Virtual perfect monitoring of an arbitrarily varying information source}
\label{sec:info-constraints}

\subsection{Methodology}
\label{sec:subsec-methodology}

The scenario under consideration is as follows (see Figure \ref{fig:ReconstructionRobustSideCapaK}).
Let us fix a family of probability distributions $\PP_k^{\star} \in \Delta(\mc{A}_k)$ with $k\in\mc{K}$.
When a given action profile $a=(a_1,a_2,...,a_K)\in \mc{A}$ is drawn from the product probability
$\PP^{\star} = \PP^{\star}_1 \otimes \ldots \otimes \PP^{\star}_K \in \Delta(\mc{A})$, player $k\in\mc{K}$ receives a symbol
$s_k \in \mc{S}_k$ with a probability given
by the conditional probability
\begin{equation}
\daleth(s_k | a) = \sum_{s_{-k}\in\mc{S}_{-k}} \daleth(s_k, s_{-k} | a).
\end{equation}
An encoder $\C$, who perfectly monitors the played actions, encodes the
observed action profiles by blocks or sequences of size $n\geq1$ into a sequence
of public signals $s_0 \in \mc{S}_0$ which are received by all the players. These
public signals form a perfect channel of capacity $ \log_2|\mc{S}_0|$,
which is orthogonal to the one defined by $\daleth$ that is, player $k$
actually receives a pair of signals $(s_k, s_0)$ for every played action profile.
Note that player $k$ recall it's own action $a_k$.
The purpose of the encoder is to use the minimal amount of additional information,
in order for every player to acquire the information which is missing to have PM.
In what follows, we first define a code in our setup.
Second, we define the notion of virtually perfect monitoring (VPM) of actions profile
$a=(a_1,a_2,...,a_K)\in \mc{A}$ defined as an arbitrarily varying information source (AVS).
 Third, we prove a theorem which state an information constraint on the action
profile distribution which is due to the fact that the
communication channel between the encoder and players has a limited capacity.
Denote $\mc{A}^n$ (resp. $\mc{A}^{\infty}$) the set of sequences $a^n\in\mc{A}^n$ of length $n\in\N$ (resp.
of sequences $a^{\infty}\in\mc{A}^{\infty}$ of infinite length).

 \begin{figure}[h!]\label{fig:ReconstructionRobustSideCapaK}
\begin{center}
\psset{xunit=0.7cm,yunit=0.7cm}
\begin{pspicture}(-4,-1)(16,8)
\psframe(-3,8)(-2,7)
\psframe(-3,6)(-2,5)
\psframe(-3,2)(-2,1)
\rput[u](-2.5,7.5){$\#_1$}
\rput[u](-2.5,5.5){$\#_2$}
\rput[u](-2.5,1.5){$\#_K$}
\psdots(-2.5,4)(-2.5,3.5)(-2.5,3)
\psline[linewidth=1pt]{->}(-2.05,7.05)(0.3,4.75)
\psline[linewidth=1pt]{->}(-2.05,5.05)(0.3,4.65)
\psline[linewidth=1pt]{->}(-2.05,1.95)(0.28,4.3)
\rput[u](-1,6.5){$\bd{a}_1$}
\rput[u](-1,5.2){$\bd{a}_2$}
\rput[u](-1,2.2){$\bd{a}_K$}
\rput[u](2.2,4.9){$ (\bd{a}_1, \bd{a}_2,\ldots,\bd{a}_K)$}
\psdots(0.5,4.5)(2.5,4.5)
\psline[linewidth=1pt]{->}(0.5,4.5)(4,4.5)
\psline[linewidth=1pt](2.5,4.5)(2.5,0)
\psline[linewidth=1pt](2.5,0)(10.5,0)
\psline[linewidth=1pt]{->}(10.5,0)(10.5,4)
\pscircle(4.5,4.5){0.35}
\rput[u](4.5,4.5){$\daleth$}
\psline[linewidth=1pt]{->}(4.9,4.8)(7,7)
\psline[linewidth=1pt]{->}(5,4.5)(7,5)
\psline[linewidth=1pt]{->}(4.8,4.15)(7,2)
\rput[u](6,6.6){$\bd{s}_1$}
\rput[u](6,5.1){$\bd{s}_2$}
\rput[u](6,2.2){$\bd{s}_K$}
\rput[u](6.8,8.1){$\bd{a}_1$}
\rput[u](6.8,6.1){$\bd{a}_2$}
\rput[u](6.8,0.9){$\bd{a}_K$}
\psframe(7,8)(8,7)
\psframe(7,6)(8,5)
\psframe(7,2)(8,1)
\rput[u](7.5,7.5){$\#_1$}
\rput[u](7.5,5.5){$\#_2$}
\rput[u](7.5,1.5){$\#_K$}
\psdots(7.5,4)(7.5,3.5)(7.5,3)
\psframe(10,4)(11,5)
\rput[u](10.9,3.5){$\bd{a}$}
\rput[u](10.5,4.5){$\C$}
\psline[linewidth=1pt]{->}(10,4.8)(8,7)
\psline[linewidth=1pt]{->}(10,4.5)(8,5)
\psline[linewidth=1pt]{->}(10,4.15)(8,2)
\rput[u](9,6.5){$\bd{s}_0$}
\rput[u](9,5.2){$\bd{s}_0$}
\rput[u](9,2.2){$\bd{s}_0$}
\end{pspicture}
\caption{Each action profile of the game $\bd{a} = (\bd{a}_1,\bd{a}_2,\ldots,\bd{a}_K)$ generates a signal profile $(\bd{s}_1,\bd{s}_2,...\bd{s}_K)$ through a condition probability $\daleth$. Player $\#k$ (represented twice here above) only observes $\bd{s}_k$ from this action profile $\bd{a}$. The encoder $\C$, who perfectly monitors the played action profiles $\bd{a}$, builds a complementary public signal $\bd{s}_0$ which is observed by all the players. Each player has to reconstruct virtual perfect monitoring (VPM) from a sequence of pairs of signals $(\bd{s}_k,\bd{s}_0)$ and the knowledge of the sequence of its individual actions $\bd{a}_k$.}
\end{center}
\end{figure}
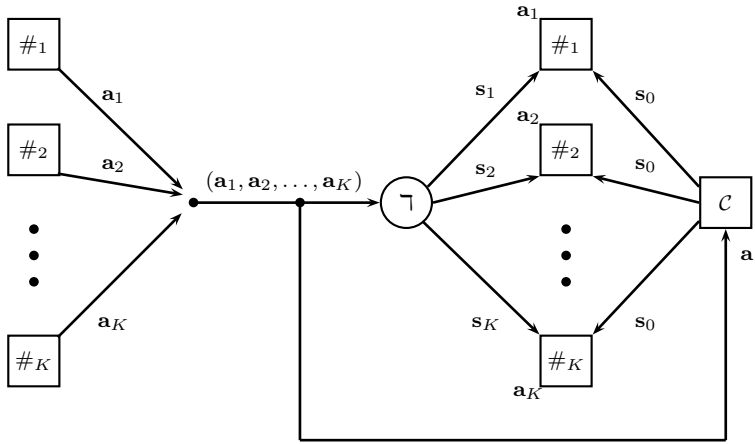

\subsection{Information constraint for resilient coding with side information at the decoder}
\label{sec:subsec-resilient-source-coding}

Here, we assume that the distribution of the source may vary
from stage (or action profile) to stage (or action profile)~; this is the framework of arbitrarily varying source (AVS) coding.

\begin{definition}[Arbitrarily Varying Source (AVS)]\label{def:AVSunilateral}
Let $\PP^{\star}\in \Delta(\mc{A}) $ a probability distribution
(mixed strategy) and $ \mc{V}$ the set of states of the source:
\begin{eqnarray}
\mc{V} = \ds{\cup_{k\in \mc{K}}} \Delta(\mc{A}_k^{ \infty}).\label{Condition:SuiteDEtats}
 \end{eqnarray}
The arbitrarily varying (AVS) information source $\bd{a}\in \mc{A}$ is at a certain
state $v \in \mc{V}$, when one component of the action
profile has a distribution which may vary arbitrarily over time and is fully
unknown to the coder. For example, when the sequence of the states is
$v = \QQ_i \in \Delta(\mc{A}_i^{ \infty}) \subset \mc{V}$, the sequence of
actions $\bd{a}^n=(\bd{a}_1^n,\ldots,\bd{a}_K^n)$ is drawn following a
probability distribution given by~:
 \begin{eqnarray}
\PP_{v}(\bd{a}_1^n,\ldots,\bd{a}_i^n,\ldots,\bd{a}_K^n)
= \bigg[\PP_1^{\star \;\otimes n}\otimes\ldots\otimes \QQ_i\otimes \ldots\otimes \PP_K^{\star \;\otimes n}\bigg](\bd{a}^n).\label{eq:DistribArbitrary}
\end{eqnarray}
\end{definition}
Now, we formally define the notion of code for the AVS represented by Fig. \ref{fig:ReconstructionRobustSideCapaK}.
\begin{definition}
A code $\lambda$ of size $n$ for the encoder $\C$ and decoders $\mc{K}$ consists of an encoding function $f_0$ and $K$ decoding functions $(g_k)_{k\in\mc{K}}$ defined as~:
\begin{equation}
\label{eq:Encoding}
\begin{array}{cccccc}
f_0 & : & \mc{A}^n   & \longrightarrow & \mc{S}_0^n&\\
g_k & : & \mc{S}_0^n  \times \mc{S}_k^{n} \times \mc{A}_k^n & \longrightarrow &\mc{A}^n,&\qquad \forall k\in\mc{K}
\end{array}.
\end{equation}
Denote by $\Lambda(n)$, the set of codes for which the length $n\in \N$ of the code-words is fixed.
\begin{eqnarray}
&\PP_e(\lambda) =& \sum_{k\in\mc{K}} \max_{i \in \mc{K}}\max_{v_i \in  \Delta(\mc{A}_i^{\infty})}
\PP_{v_i}(\bd{a}^n \neq g_k(\bd{s}_k^n,\bd{s}_0^n,\bd{a}_k^n)),\label{eq:ErrorProbaK}
\end{eqnarray}
The error probability $\PP_e(\lambda)$ of the code $\lambda\in\Lambda(n)$ is defined by equation
(\ref{eq:ErrorProbaK}) and corresponds to the sum of the error probability for each decoder
$k\in\mc{K}$, considering every possible deviation $v_i \in  \Delta(\mc{A}_i^{\infty})$ of
player $i \in \mc{K}$ (i.e. any variation of the source).
\end{definition}

\begin{definition}[Virtually Perfect Monitoring (VPM)]
Players $\mc{K}$ have a virtually perfect monitoring (VPM) of the information
source $\bd{a}\in \mc{A}$ if for all $\varepsilon>0$,
there exists a parameter $n\in \N$, and a code $\lambda\in \Lambda(n)$ such that:
\begin{eqnarray}
\PP_e(\lambda) &\leq& \varepsilon,\label{eq:cond:ErrorProba}
\end{eqnarray}
\end{definition}

The condition (\ref{eq:cond:ErrorProba}) means that it is possible to find coding and
decoding functions to represent any sequence of $n$ realizations of the
$K-$dimensional random variable $\boldsymbol{a}$ with $2^{n\log_2|\mc{S}_0|}$
indices or sequences of public signals in such a way that, any decoder $k$, based on the
knowledge of $(s_0^n,s_k^n,a_k^n)$, can find the sequence $a^n$ with an
arbitrarily small probability of error. In a game theoretical framework,
the players virtually perfect monitor the sequences of past actions played.

At this point, the main issue is to be able to characterize the set of
AVS information source that are compatible with
the VPM of the players $\mc{K}$.
The AVS hypothesis guarantee that the past actions
played will be observed by all the players even if one of them deviates,
 manipulates the coding scheme $\lambda\in \Lambda(n)$ in order to break reliability.
Theorem \ref{theo:RobustSide} provides an information constraint which guarantee
VPM for the AVS information source of player's actions $\bd{a}$.
To state this theorem, an auxiliary graph needs to be defined first.

\begin{definition}[Auxiliary graph]\label{def:graph} For each player $i\in \mc{K}$, an auxiliary graph $\G_i$ is defined as follows $\G_i=(\mc{A}_i, \mc{E}_i)$. The actions $a_i\in \mc{A}_i$ of player $i\in \mc{K}$ are the vertices of the graph. There exists an edge $e_i=(a_i,a_i')\in \mc{E}_i$ between two actions $a_i \in \mc{A}_i$ and $a'_i \in \mc{A}_i $ if~:
\begin{eqnarray*}
 & \exists\; a_{-i} \in  \textrm{Supp}\;\PP^{\star}_{-i},\; \exists k\in\mc{K},\; \exists s_k\in \mc{S}_k,\; \exists \delta>0,\text{ s.t. } &\\ &\min(\daleth(s_k|a_i,a_{-i}), \daleth(s_k|a_i',a_{-i}))\geq\delta&
\end{eqnarray*}
where $\textrm{Supp}\; \PP^{\star}_{-i}$ is the support of the probability distribution $\PP^{\star}_{-i}$ defined by $ \PP^{\star}_{-i} = \bigotimes_{j \neq i } \PP^{\star}_j \in \prod_{j\neq i} \Delta(\mc{A}_j)$.
\end{definition}
Two vertices $a_i \in \mc{A}_i$ and $a'_i \in \mc{A}_i $ are neighbors in the graph $\G_i$ if the probability that these actions lead, through $\daleth$, to the same signal $s_k\in \mc{S}_k$ for at least one player $k\in\mc{K}$ is not zero.
Now, to define the chromatic number \cite{BondyMurty} of the graph $\G_i$, we define the notion of coloring in our context.

\begin{definition}[Coloring]\label{def:Coloring} Let $\Phi_i$ a set of colors. A coloring of the graph $\G_i$ is a function $\phi_i: \mc{A}_i \longrightarrow \Phi_i$ which satisfies~:
\begin{equation}
\forall e_i=(a_i,a'_i)\in \mc{E}_i,\text{ we have that  } \phi_i(a_i)\neq \phi_i(a'_i).
\end{equation}
\end{definition}

A minimal coloring of the graph $\G_i$ is a coloring $\phi_i$ for which the cardinality of the set of colors $\Phi_i$ is minimal. The chromatic number $\chi_i$ of the graph $\G_i$ is the cardinality $|\Phi_i|$ of the set of colors of the minimal coloring of the graph $\G_i$. This is precisely this quantity which is used in the next theorem.

\begin{theorem}[Coding result for AVS]\label{theo:RobustSide}
Players $\mc{K}$ have a virtually perfect monitoring (VPM) of the arbitrarily varying (AVS)
information source $\bd{a}\in \mc{A}$ if the following condition is met~:
\begin{equation}
\textsf{R}^{\star} = \max_{i\in\mc{K}} \bigg[\max_{k\in\mc{K},\atop a_i\in \mc{A}_i}H(\bd{a}_{-i,k}|\bd{s}_k(a_i), \bd{a}_k)+\log_2 \chi_i \bigg] < \log_2 |\mc{S}_0|, \label{eq:RobustSide}
\end{equation}
where~:
\begin{itemize}
  \item[$\bullet$] $\bd{a}_{-i,k}$ is the action profile without the components $i$ and $k$. It is distributed as $\PP^{\star}_{-i,k}\in \prod_{j\neq i,\atop j\neq k } \Delta(\mc{A}_j)$~;

  \item[$\bullet$] $\bd{s}_k(a_i)$ is the signal received by player $k$ when the action $a_i$ is fixed. It is induced by $\bd{a}_{-i}$ and the transition probability $\daleth$~:
  \begin{eqnarray}
\daleth_{a_i} : \mc{A}_{-i} \longrightarrow& \Delta(\mc{S}_k)&\\
a_{-i}\longrightarrow&\daleth_{a_i}(s_k|a_{-i})&=\daleth(s_k|a_i,a_{-i}) \nonumber\\
&&= \sum_{s_{-k}\in\mc{S}_{-k}}\daleth(s_k,s_{-k}|a_i,a_{-i})~;
\end{eqnarray}

\item[$\bullet$] $\log_2|\mc{S}_0|$ is given by the cardinality of the set of public signals and corresponds to the capacity of the perfect channel between the encoder $\C$ and the players $\mc{K}$.
\end{itemize}
\end{theorem}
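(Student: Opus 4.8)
The plan is to construct a block code $\lambda\in\Lambda(n)$ whose public-message rate is essentially $\textsf{R}^{\star}$ and to show that, whenever $\textsf{R}^{\star}<\log_2|\mc{S}_0|$, the public channel can carry this message while every decoder reconstructs $\bd{a}^n$ with vanishing error, uniformly over the identity $i$ of the deviator and over the arbitrary state $v_i\in\Delta(\mc{A}_i^\infty)$. The message produced by $f_0$ is split into three parts. First, the encoder, which observes $\bd{a}^n$ exactly, runs a typicality test on each coordinate sequence $\bd{a}_j^n$ against its nominal law $\PP_j^\star$ and transmits (at negligible rate $\tfrac1n\log_2 K\to0$) the index $\hat\imath$ of the coordinate that fails the test --- the candidate deviator; if none fails, any fixed index is used. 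Second, it transmits the colour sequence $\phi_{\hat\imath}(\bd{a}_{\hat\imath}^n)$ of a minimal colouring of $\G_{\hat\imath}$, at rate $\log_2\chi_{\hat\imath}$. Third, it transmits a Slepian--Wolf bin index of the remaining, genuinely i.i.d.\ block $\bd{a}_{-\hat\imath,k}^n$, the binning being common to all decoders and of rate $\max_{k\in\mc{K},a_i\in\mc{A}_i}H(\bd{a}_{-\hat\imath,k}\mid\bd{s}_k(a_i),\bd{a}_k)$.

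Decoding at player $k$ then proceeds in two stages. In the first stage player $k$ recovers $\bd{a}_{-\hat\imath,k}^n$ by looking for the unique sequence in the announced bin that is jointly typical with its side information $(\bd{s}_k^n,\bd{a}_k^n)$; because the channel $\daleth$ feeding $\bd{s}_k$ still depends on the unknown, time-varying $\bd{a}_{\hat\imath}$, the bin rate is taken as the worst case $\max_{a_i}H(\bd{a}_{-\hat\imath,k}\mid\bd{s}_k(a_i),\bd{a}_k)$, which I will show dominates the effective per-letter conditional entropy for every realisation of the deviator's action, so that the random-binning error bound applies uniformly in $v_{\hat\imath}$. In the second stage player $k$ now knows the whole context $\bd{a}_{-\hat\imath}^n=(\bd{a}_{-\hat\imath,k}^n,\bd{a}_k^n)$, so it decodes the deviator's action letter by letter: by Definition~\ref{def:graph}, two actions carrying the same colour are non-adjacent in $\G_{\hat\imath}$, hence for the recovered context $a_{-\hat\imath,t}\in\operatorname{Supp}\PP^\star_{-\hat\imath}$ the signal supports $\{s_k:\daleth(s_k\mid a_{\hat\imath},a_{-\hat\imath,t})>0\}$ of the same-coloured candidates are pairwise disjoint; the observed $\bd{s}_{k,t}$ therefore singles out the true $\bd{a}_{\hat\imath,t}$, and the colouring rate $\log_2\chi_{\hat\imath}$ suffices regardless of the deviator's distribution.

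Summing the three rates and taking the worst decoder gives a per-deviator requirement $\max_{k,a_i}H(\bd{a}_{-\hat\imath,k}\mid\bd{s}_k(a_i),\bd{a}_k)+\log_2\chi_{\hat\imath}$; since the encoder must be dimensioned for whichever coordinate turns out to be the deviator, the message length is set by the outer maximum over $\hat\imath\in\mc{K}$, i.e.\ by $\textsf{R}^{\star}$. The hypothesis $\textsf{R}^{\star}<\log_2|\mc{S}_0|$ then leaves enough room in the public channel, and letting $n\to\infty$ drives every term of $\PP_e(\lambda)$ below $\varepsilon$, which is exactly VPM.

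The step I expect to be the main obstacle is the robustness of the first decoding stage: I must prove that a single random binning, with rate fixed by the letter-wise worst case $\max_{a_i}H(\bd{a}_{-\hat\imath,k}\mid\bd{s}_k(a_i),\bd{a}_k)$, decodes $\bd{a}_{-\hat\imath,k}^n$ reliably for every arbitrarily varying sequence $\bd{a}_{\hat\imath}^n$ --- this is the genuinely AVS part of the argument, where the i.i.d.\ test-channel intuition breaks down and one has to control the atypical behaviour of the side information through worst-case conditional-entropy and method-of-types estimates in the spirit of \cite{Ahlswede(ColoringPart1)79,Ahlswede(ColoringPart2)80}. A secondary delicate point is guaranteeing that the coordinate flagged by the typicality test is indeed the deviator (or else is typical enough to be binned safely), so that colouring is applied to the right component; this is precisely the mechanism by which the resilience property replaces the entropy-positiveness condition of \cite{Ahlswede(ColoringPart1)79}.
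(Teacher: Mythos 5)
Your overall architecture (deviator--detection test at the encoder, transmission of the flagged index, a colouring of $\G_{\hat\imath}$ for the flagged component, and a Slepian--Wolf bin for the rest, with the same rate accounting) is the paper's, but you invert the two decoding stages, and the step you yourself flag as ``the main obstacle'' is not merely hard --- the claim it rests on is false. You assert that the per-letter worst case $\max_{a_i}H(\bd{a}_{-i,k}\mid\bd{s}_k(a_i),\bd{a}_k)$ ``dominates the effective per-letter conditional entropy for every realisation of the deviator's action,'' so that a single whole-block binning at that rate can be decoded by joint typicality against $(\bd{s}_k^n,\bd{a}_k^n)$ before $\bd{a}_{\hat\imath}^n$ is known. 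Conditional entropy is \emph{concave} in the joint law, so mixing over the unknown, arbitrarily varying $a_{\hat\imath,t}$ can only increase the decoder's effective uncertainty, and it can strictly exceed every per-$a_i$ value. Concretely, take three players with binary actions, deviator $i$, decoder $k$, third player $j$ with $\PP^{\star}_j$ uniform, and $s_k=a_i\oplus a_j$: for every \emph{fixed} $a_i$ one has $H(\bd{a}_j\mid\bd{s}_k(a_i),\bd{a}_k)=0$, yet if the deviator plays i.i.d.\ uniform (a legitimate state $v_i$), then $\bd{s}_k^n$ is independent of $\bd{a}_j^n$ and decoder $k$ faces a full bit per symbol of residual uncertainty; by Fano, no bin of rate below $1$ bit/symbol can be decoded, whatever method-of-types refinement you invoke. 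So your stage~1 cannot be run first at the announced rate: the information is simply not there, and your stage~2 (which is fine, and indeed cleaner than the paper's corresponding step, because you exploit the recovered context) never gets off the ground.

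The paper's proof avoids ever needing such a mixture claim, and this is exactly the structural idea you are missing: it decodes the deviator's sequence \emph{first} --- that is what the $\log_2\chi_i$ term buys --- and then uses $\bd{a}_i^n$ to \emph{segment time}. The encoder and each decoder split the block into the sub-blocks $\{t:\,a_{i,t}=a_i\}$ of lengths $n_{a_i}=N(a_i|a_i^n)$; within the sub-block indexed by $a_i$, the pairs $(\bd{a}_{-i,t},\bd{s}_{k,t})$ are genuinely i.i.d.\ with the \emph{known} law $\PP^{\star}_{-i}\otimes\daleth_{a_i}$, so standard Slepian--Wolf binning applies per sub-block at rate $H(\bd{a}_{-i,k}\mid\bd{s}_k(a_i),\bd{a}_k)+\varepsilon$, while sub-blocks shorter than a threshold $\bar{n}_1$ are sent uncompressed at negligible total rate. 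The quantity $\max_{a_i}H(\bd{a}_{-i,k}\mid\bd{s}_k(a_i),\bd{a}_k)$ then enters only as an upper bound on the convex combination $\sum_{a_i}\frac{n_{a_i}}{n}H(\bd{a}_{-i,k}\mid\bd{s}_k(a_i),\bd{a}_k)$, i.e.\ it is legitimate only \emph{after} conditioning on the decoded $\bd{a}_i^n$ --- never as a statement about the mixture your decoder faces. (Your deviator-identification step and the treatment of the ``test flags the wrong player'' case are essentially the paper's and are not the issue.) To repair your proposal you must swap the two stages and add the time-segmentation argument, at which point it coincides with the proof in the paper.
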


Several comments are in order. First, let us comment on the main assumptions. The i.i.d assumption over time made on the source to be encoded is common in the information theory literature and will only be briefly commented. Solving the i.i.d. case might not only be helpful but even sufficient for solving the case with arbitrary correlation between consecutive source samples. To be more specific, if the source generates $B$ blocks of $\ell$ correlated symbols for $B$ sufficiently large, $\ell < +\infty$, and i.i.d. blocks, then the information constraint directly follows from the original i.i.d case (concerning i.i.d symbols) by considering vectors of symbols instead of symbols. Beyond this framework, the source coding literature comprises works dealing with refinements such as universal coding \cite{Gallager(AVS)76} and information-spectrum based coding \cite{Han(InfoSpectrumBook)06}.
Now, from a game-theoretic perspective, studying sequences of i.i.d profiles (up to one component) is not only an intermediate case which can be challenging technically (think of repeated games with arbitrary monitoring structures) but also to design implementable equilibrium action plans. As for relaxing the i.i.d assumption over space (over the components), provided the resilience property is relaxed and the joint distribution on the actions is known to the encoder, it only consists in changing scalar quantities into vectors (of size $K$). When resilience to single deviations is required, the spatial i.i.d assumption is useful to derive information constraint (as advocated by the proof provided in App. \ref{sec:DemoTheoRobustSide}) but studying necessity is a possible extension of this work. At last note that the spatial i.i.d assumption allows one to study mixed strategies which is known to be important.

Now, let us comment on the result i.e., the information constraint defined by (\ref{eq:RobustSide}). The presence of the maximum over $i$ is due to the fact that the location of the component (which corresponds to the deviator in a game), whose distribution is unknown, is itself unknown to $\C$. The second maximum over $k$ and $a_i$ indicates the case where the deviator $i$ chooses the worst action $a_i$ in terms of coding efficiency for to the worst decoder $k$. The conditioning w.r.t. $(\bd{s}_k(a_i), \bd{a}_k)$ in the entropy translates the knowledge of the decoder in terms of side information, which therefore reduces the entropy. The isolated term $\log_2 \chi_i$ corresponds to the amount of information needed by $\C$ to encode a component separately~; since the probability distribution of $\bd{a}_i$ is unknown, symbol-by-symbol coding is optimal here. Without side information at the decoder $i$ this quantity would be $\log_2 |\mc{A}_i|$. At last, the righthandside term $\log_2|\mc{S}_0|$ corresponds to the channel capacity of a broadcast channel with a public message and for which the decoders directly observe the signal sent by the encoder (see e.g., \cite{cover-book-2006}).

To conclude this section, let us comment on the proof of this theorem. Although the detailed proof of this theorem is provided in Sec. \ref{sec:DemoTheoRobustSide}, we would like to mention here some technical differences w.r.t the derivation made by Ahlswede in \cite{Ahlswede(ColoringPart2)80}. The imposed condition is totally different. Imposing resilience to single deviations to the source encoder requires to transmit without error the sequence of actions of the deviating player. Our proof is based on a sequence of coloring where the vertices are the symbols whereas Ahlswede \cite{Ahlswede(ColoringPart2)80} use a coloring where the vertices are the sequences of symbols. To exploit the law of large numbers for the sequences of symbols, his proof requires an additional condition which is EPC. In our framework, this condition is removed and replaced with a condition over the admissible sequences of states (\ref{Condition:SuiteDEtats}) and by the feature that the random signal $\bd{s}$ depends on the state $v$ only through the action $\bd{a}$. Our result is applicable to the case of deterministic transition probability $\daleth$ whereas this special type of transition probabilities does not meet EPC.

\section{Equilibrium utilities of an encoder-assisted repeated games with signals}
\label{sec:iid-equilibria}

The goal of this section is to characterize equilibrium utilities of an infinite repeated game with signals where an additional encoder establishes VPM.
To this end, notations, definitions and results of the preceding section are used.

\subsection{Game formulation and main result}
\label{sec:game-model}

We consider an encoder-assisted repeated game with signals. The stage or constituent game is given by the triplet $\left(\mc{K}, (\mc{A}_k)_{k\in \mc{K}}, (u_k)_{k\in \mc{K}} \right)$, where $u_k \in \R$ is the utility function of player $k\in \mc{K}$. The private monitoring  structure is given by the conditional probability $\daleth(s|a): \mc{A} \longrightarrow \Delta(\mc{S})$. The encoder $\C$ is assumed to perfectly monitor the past action profile $a\in\mc{A}$ and send a public message $s_0 \in\mc{S}_0$ to the players.\\
A strategy for the encoder (by abuse of language we use the term strategy here even though in this paper the encoder has no utility in the game-theoretic sense) is a  sequence of causal functions or mappings  $\sigma =  (\sigma^t)_{t\geq 1}$ with $\forall t\geq1,\; \quad \sigma^t : \mc{A}^{t-1} \times \mc{S}_0^{t-1} \rightarrow \mc{S}_0$~; $t$ stands for the stage index and $a_t$ is the profile played at stage $t$~; the set of strategies of the encoder will be denoted by $\Sigma$.\\
A behavior strategy for a player is a  sequence of causal functions or mappings $(\tau_k^t)_{t\geq 1}$ with $\forall t\geq1,\; \quad\tau_k^t : (\mc{A}_k \times \mc{S}_k \times \mc{S}_0)^{t-1}  \rightarrow  \Delta(\mc{A}_k)$~; the notation $\tau=(\tau_1,\tau_2,...\tau_K)$ will stand for a profile of behavior strategies for the repeated game~; the set of behavior strategies will be denoted by $\T = \prod_{k\in\mc{K}}\T_k $. \\
At last, we will denote by $\PP_{\sigma, \tau}$ the probability distribution on the infinite sequences of actions, private and public signals $((a_k^{\infty})_{k\in\mc{K}},(s_k^{\infty})_{k\in\mc{K}},s_0^{\infty}) \in \mc{A}^{\infty}\times \mc{S}^{\infty} \times \mc{S}_0^{\infty} $ induced by the pair of strategies $(\sigma, \tau) \in \Sigma \times \T$. At this point, one can define a uniform equilibrium of the encoder-assisted repeated game with signal.

\begin{definition}[Equilibrium points]\label{def:EqInfini} A pair of strategies $(\sigma, \tau)\in \Sigma \times \T $ of the encoder $\mc{C}$ and the players $\mc{K}$ is a uniform equilibrium of the encoder-assisted repeated game with signals if~:\\\\
 (i) For each player $k\in \mc{K}$, the expected utility,
 \begin{equation}
 \gamma_k^T(\sigma, \tau) = \E_{\sigma, \tau} \left(\frac{1}{T} \sum_{t=1}^T u_k(a_t)\right),\label{eq:UtilInfini}
 \end{equation}
 has a limit when $T \rightarrow +\infty$; \\\\
 (ii) $\forall \varepsilon >0, \; \exists \bar{T}> 0,\; \forall T\geq  \bar{T}, \; \forall k\in \mc{K},\; \forall \tau'_k \in \T_k  $, such that,
  \begin{equation}
  \gamma_k^T(\sigma, \tau'_k, \tau_{-k}) \leq  \gamma_k^T(\sigma,  \tau_k, \tau_{-k}) + \varepsilon.\label{eq:UtilEqCond}
   \end{equation}
The point $U^{\star}=(U_1^{\star},U_2^{\star},...,U_K^{\star}) \in \R^K$ is a vector of equilibrium utilities if there exists a pair of strategies $(\sigma^{\star}, \tau^{\star})$ such that:
\begin{equation}
\forall k \in \mc{K}, \ \lim_{T\rightarrow +\infty}  \gamma_k^T(\sigma^{\star}, \tau^{\star}) = U_k^{\star}.
\end{equation}
\end{definition}
The set of the equilibrium points of the encoder-assisted repeated game with signals will be denoted by $NE^{\infty}_{enc}$.

\begin{definition}[Individually rational points]\label{def:minmaxIRreal}
The independent min-max level $\upsilon_k$ of player $k\in \mc{K}$ is defined by (\ref{eq:NiveauMinmax}) and is also called punishment or defense level. The individually rational $IR$ utilities are defined by (\ref{eq:IndivRat}) and correspond to the utilities that Pareto-dominate the min-max levels defined as~
\begin{eqnarray}
\upsilon_k&=&\min_{\PP_{-k}\in \prod_{i\neq j} \Delta (\mc{A}_j)} \max_{\PP_k\in \Delta(\mc{A}_k)} \E_{\PP_k,\PP_{-k}}\bigg[u_k(\bd{a}_k,\bd{a}_{-k})\bigg],\quad k\in \mc{K},\label{eq:NiveauMinmax} \\
IR&=& \Menge{ (x_k)_{k\in \mc{K}} \in\R^K}{ x_k\geq \upsilon_k \quad \forall k\in \mc{K} }.\label{eq:IndivRat}
\end{eqnarray}
\end{definition}

\begin{definition}[Information constraint set] The set $\RR$ of mixed actions that satisfy the information constraint
(\ref{eq:RobustSide})  is defined by~:
\begin{eqnarray}
\RR=\Menge{ \PP\in \prod_{k\in \mc{K}}\Delta(\mc{A}_k) }{ \max_{i\in\mc{K}} \bigg[\max_{k\in\mc{K},\atop a_i\in \mc{A}_i}H(\bd{a}_{-i,k}|\bd{s}_k(a_i), \bd{a}_k)+\log_2 \chi_i \bigg]< \log_2|\mc{S}_0| }.\nonumber\\ \label{eq:CondDefMixte}
\end{eqnarray}
\end{definition}

\begin{theorem}[Folk theorem with VPM]\label{theo:FolkEncodeurAssiste}
The  set of utilities $\conv u (\RR)\cap IR$ is included in the set of uniform
equilibrium utilities for the encoder-assisted repeated game with signals~:
 \begin{eqnarray}
\conv u(\RR) \cap IR  \qquad &\subset& \qquad NE^{\infty}_{enc}.
\end{eqnarray}
Moreover, for any utility vector in this set, VPM can be implemented by the encoder $\C$ and the players $\mc{K}$.
\end{theorem}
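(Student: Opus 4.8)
The plan is to establish a standard folk-theorem argument adapted to the VPM setting. The target utility vector lies in $\conv u(\RR)\cap IR$, so I would first fix such a vector $U^\star$ and write it as a convex combination $U^\star=\sum_{m=1}^M \alpha_m\, u(\PP^{(m)})$ with each $\PP^{(m)}\in\RR$ and each $\alpha_m$ rational (rationality is obtained by a density/continuity argument, approximating the $\alpha_m$ and invoking continuity of $u$ and openness of the constraint (\ref{eq:RobustSide})). I would then build the equilibrium strategy profile $(\sigma^\star,\tau^\star)$ on two phases. In the \emph{main (cooperative) phase}, time is partitioned into blocks; within the blocks the players play the i.i.d.\ mixed profiles $\PP^{(m)}$ in proportions $\alpha_m$, so that the long-run average utility converges to $U^\star$ by the law of large numbers. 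Since each $\PP^{(m)}\in\RR$, the information constraint (\ref{eq:RobustSide}) holds strictly, so Theorem~\ref{theo:RobustSide} guarantees that for every $\varepsilon>0$ there is a block length $n$ and a resilient code $\lambda\in\Lambda(n)$ achieving $\PP_e(\lambda)\le\varepsilon$. The encoder $\C$ applies $f_0$ to each realized block of action profiles and broadcasts the $s_0$-sequence; each player then uses $g_k$ on $(s_0^n,s_k^n,a_k^n)$ to recover the whole action block $a^n$. This is precisely how PM is established with a (one-block) delay.

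\textbf{Equilibrium (incentive) verification.} The heart of the argument is the second phase: I would specify that any detected deviation triggers a \emph{grim-trigger / min-max punishment} in which the non-deviating players switch permanently to the mixed profile $\PP_{-k}$ attaining the min-max level $\upsilon_k$ of the deviator in (\ref{eq:NiveauMinmax}). The key point is that this is incentive-compatible: because $U^\star\in IR$ we have $U_k^\star\ge\upsilon_k$ for every $k$, so holding a deviator to $\upsilon_k$ makes deviation unprofitable in the limiting time-average payoff (\ref{eq:UtilEqCond}). To make the punishment \emph{executable}, the other players must actually detect the deviation at the end of each block, and this is exactly what the resilience property buys us: by the definition of $\PP_e$ in (\ref{eq:ErrorProbaK}), VPM decodes the true action block $a^n$ correctly (with probability $\ge 1-\varepsilon$) even when one player $i$ feeds an arbitrary, unknown distribution $v_i\in\Delta(\mc{A}_i^\infty)$ into the source. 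Hence no unilateral deviation can corrupt the decoding so as to escape detection, and after decoding, each player compares the recovered $a_k$-coordinates against the prescribed equilibrium play to trigger punishment.

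\textbf{Limit and $\varepsilon$-optimality.} I would then verify conditions (i) and (ii) of Definition~\ref{def:EqInfini}. For (i), along the equilibrium path the per-block empirical frequencies converge, so $\gamma_k^T$ has a limit equal to $U_k^\star$; the occasional decoding errors contribute a vanishing term because utilities are bounded and $\PP_e\le\varepsilon$ can be driven to $0$. For (ii), a standard block-deviation estimate bounds the gain from any $\tau_k'$ by the one-block payoff advantage (finite) times the fraction of time before detection (one block) plus the error slack $\varepsilon$, against which the permanent min-max loss $U_k^\star-\upsilon_k\ge0$ dominates once $T$ is large; choosing $n$ large and $\varepsilon$ small yields the required $\bar T$. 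The delay inherent in block decoding is absorbed in the time-averaging as $T\to\infty$.

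\textbf{Main obstacle.} The delicate step is reconciling the \emph{block structure and decoding delay} with the resilience guarantee during a deviation. A deviator could, in principle, behave correctly on many blocks and deviate arbitrarily within a single block; I must ensure the resilient code still recovers that block's true actions (so the deviation is caught at the block's end rather than after it has been repeated), and that the at-most-one-block payoff advantage before punishment is negligible in the time-average. Showing that $\PP_e$ in (\ref{eq:ErrorProbaK}) — which already takes the worst case over the deviator's identity $i$ and over all arbitrary $v_i$ — controls \emph{every} such single-block manipulation, and that the subsequent permanent punishment is self-enforcing (the punishers themselves having no profitable deviation, which follows again from $IR$ and from re-invoking VPM to monitor the punishment phase), is where the argument must be handled most carefully.
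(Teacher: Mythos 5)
Your construction follows the same overall architecture as the paper's proof: block coding with the resilient code of Theorem \ref{theo:RobustSide}, VPM established with a one-block delay, grim-trigger punishment at the min-max level (\ref{eq:NiveauMinmax}), verification of conditions (i) and (ii) of Definition \ref{def:EqInfini}, and time-sharing/cyclic repetition to pass from $u(\RR)\cap IR$ to its convex hull. However, there is a genuine gap in your incentive verification: you conflate \emph{correct decoding} (which is what the resilience property in (\ref{eq:ErrorProbaK}) guarantees) with \emph{detection of deviations}. Since the prescribed equilibrium play is a mixed action $\PP^{\star}_k$ drawn i.i.d., there is no deterministic ``prescribed play'' against which a player can compare the decoded block action by action; detection must be statistical. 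The paper makes this precise with the typicality test (\ref{eq:TestStat}) of Sec. \ref{sec:DemoTestStat}: player $i$ is declared a deviator only if his decoded block falls outside $A_{\varepsilon}^{\star n}(\PP^{\star}_i)$. Consequently your claim that ``no unilateral deviation can corrupt the decoding so as to escape detection'' is false: a deviator whose block action sequences remain typical for $\PP^{\star}_k$ (chosen, for instance, as a function of his private signals) is never detected, and your condition (ii) estimate --- one block of bounded gain before detection, permanent min-max loss afterwards --- simply does not apply to such strategies.

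The missing idea is the paper's two-case analysis of deviations (Sec. \ref{sec:DemoCondEquil1Case} and \ref{sec:DemoCondEquil2Case}). Non-typical deviations are detected with probability at least $1-\varepsilon$ --- and here resilience is exactly what makes the test reliable, since the decoded block equals the true one even under an arbitrary $v_i \in \Delta(\mc{A}_i^{\infty})$ --- and are then punished; this is the case your argument covers. Typical deviations escape detection but are harmless: because the other players' actions are i.i.d. and independent of the deviator's, a deviation that stays typical for $\PP^{\star}_k$ yields a joint empirical distribution close to $\PP^{\star}$, hence a time-average utility within $\varepsilon \cdot \max_{a\in\mc{A}}|u_k(a)|$ of $U_k^{\star}$; this is the chain (\ref{eq:DemoLemmeDevTypique1})--(\ref{eq:DemoLemmeDevTypique3}) in the paper. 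Only the combination of the two cases yields condition (ii), so without it your proof of the equilibrium property is incomplete. A secondary remark: your concern about the punishers' own incentives during the punishment phase is not needed here, since Definition \ref{def:EqInfini} is a uniform (Nash-type) equilibrium rather than a subgame-perfect one, so off-path optimality of the punishers never has to be checked.
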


The proof is provided in Sec. \ref{sec:DemoJeu} and is based on Theorem \ref{theo:RobustSide}. The framework of AVS is exploited to characterize the communication possibilities for the encoder. The first feature of the problem is that the coding scheme must be reliable even if one of the players deviates. The second main feature is that the coding scheme must also take into account the private signals received by the players. These two hypotheses allow us to determine the amount of additional information needed from the encoder in order to
implement VPM. Interestingly, the proof of Theorem \ref{theo:FolkEncodeurAssiste} relies on classical grim-trigger strategies but implemented in a blockwise manner and by exploiting VPM and strong typicality \cite{cover-book-2006} as a statistical test whose result indicates to every player whether to keep on following the main plan.

\subsection{Application to the repeated prisoner's dilemma}\label{sec:repetead-prisoners-dilemma}

We consider a prisoner's dilemma whose matrix form is given by Tab. \ref{tab:prisoners-dilemma}. Let $|\mc{A}|=4$ and $|\mc{S}_0|= 3$. Note that the encoder cannot send the action profile profile directly to the players. The goal of this section is to describe the mixed strategies
$\PP^{\star}\in\Delta(\mc{A})$ that are compatible with the information constraints (\ref{eq:RobustSide}).
\begin{table}[!h]
\begin{center}
\begin{tabular}{|c|c|c|}
  \hline
      &       \textsf{L}      &     \textsf{R} \\
   \hline
  \textsf{T}   &  (3, 3)      & (0, 4) \\
    \hline
  \textsf{B}  &  (4, 0)      & (1, 1) \\
  \hline
\end{tabular}
\caption{The prisoner's dilemma in a matrix form.}\label{tab:prisoners-dilemma}
\end{center}
\end{table}
If this constraint is satisfied, the encoder can compress the sequence of past actions, encode it
into a sequence of public signals and the players can decode the sequence of past action
with an error probability that goes to zero when the length of the sequences goes to infinity.
Denote $\mc{A}_1 = \{T,B\}$ and $\mc{A}_2=\{L,R\}$.

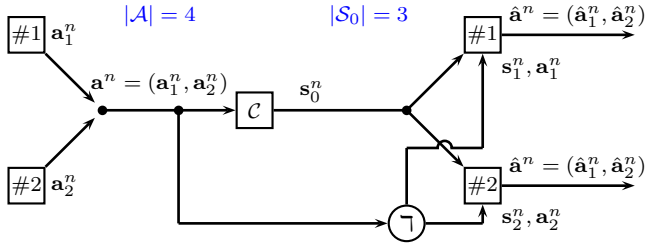
\begin{figure}[h!]
\begin{center}
\psset{xunit=0.5cm,yunit=0.5cm}
\begin{pspicture}(-4,1)(16,7)
\psframe(-2,2)(-1,3)
\psframe(-2,6)(-1,7)
\psframe(4,4)(5,5)
\psframe(10,2)(11,3)
\psframe(10,6)(11,7)
\pscircle(8.5,1.5){0.25}
\psdots(0.5,4.5)(2.5,4.5)(8.5,4.5)
\psline[linewidth=1pt]{->}(-1,3)(0.3,4.3)
\psline[linewidth=1pt]{->}(-1,6)(0.3,4.7)
\psline[linewidth=1pt]{->}(0.5,4.5)(4,4.5)
\psline[linewidth=1pt](5,4.5)(8.5,4.5)
\psline[linewidth=1pt]{->}(8.5,4.5)(10,3)
\psline[linewidth=1pt]{->}(8.5,4.5)(10,6)
\psline[linewidth=1pt]{->}(11,2.5)(14.5,2.5)
\psline[linewidth=1pt]{->}(11,6.5)(14.5,6.5)
\psline[linewidth=1pt](2.5,4.5)(2.5,1.5)
\psline[linewidth=1pt]{->}(2.5,1.5)(8,1.5)
\psline[linewidth=1pt](9,1.5)(10.5,1.5)
\psline[linewidth=1pt]{->}(10.5,1.5)(10.5,2)
\psline[linewidth=1pt](8.5,2)(8.5,3.5)
\psline[linewidth=1pt](8.5,3.5)(9.35,3.5)
\psarc[linecolor=black,linewidth=1pt](9.5,3.5){0.0937}{0}{180}
\psline[linewidth=1pt](9.65,3.5)(10.5,3.5)
\psline[linewidth=1pt]{->}(10.5,3.5)(10.5,6)
\rput[u](-1.5,2.5){$\#2$}
\rput[u](-1.5,6.5){$\#1$}
\rput[u](10.5,2.5){$\#2$}
\rput[u](10.5,6.5){$\#1$}
\rput[u](4.5,4.5){$\C$}
\rput[u](8.5,1.5){$\daleth$}
\rput[u](7.5,7){$\textcolor[rgb]{0.0,0.00,1.00}{|\mc{S}_0|=3}$}
\rput[u](2,7){$\textcolor[rgb]{0.0,0.00,1.00}{|\mc{A}|=4}$}
\rput[u](-0.5,2.5){$\bd{a}_2^n$}
\rput[u](-0.5,6.5){$\bd{a}_1^n$}
\rput[u](2,5.2){$\bd{a}^n=(\bd{a}_1^n,\bd{a}^n_2)$}
\rput[u](13,7){$\hat{\bd{a}}^n=(\hat{\bd{a}}_1^n,\hat{\bd{a}}^n_2)$}
\rput[u](13,3){$\hat{\bd{a}}^n=(\hat{\bd{a}}_1^n,\hat{\bd{a}}^n_2)$}
\rput[l](11,1.6){$\bd{s}_2^n,\bd{a}_2^n$}
\rput[l](11,5.6){$\bd{s}_1^n,\bd{a}_1^n$}
\rput[u](6,5){$\bd{s}_0^n$}
\end{pspicture}
\caption{This figure illustrates the encoder-assisted monitoring structure
for the repeated version of the prisoner's dilemma. The Theorem \ref{theo:FolkEncodeurAssiste}
provide a set $\RR$ of mixed strategies $\PP^{\star}\in\Delta(\mc{A})$ that allow the encoder $\C$
to establish VPM and the players $\mc{K}$ to implement an equilibrium strategy.
}\label{fig:ReconstructionRobustSideCapa2}
\end{center}
\end{figure}
To have a better understanding on how the results derived
in Sec. \ref{sec:info-constraints} and \ref{sec:game-model}
are exploited here, we consider a particular monitoring structure
$\daleth$ described by Fig. \ref{fig:MonitoringDilemme} with $\delta \in [0,1]$.
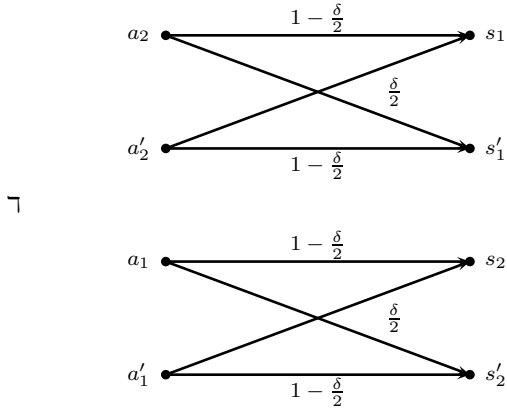
\begin{figure}[!ht]
\begin{center}
\psset{xunit=1cm,yunit=0.75cm}
\begin{pspicture}(0,-0.5)(4,6.5)
\psdot(0,0)
\psdot(0,2)
\psdot(4,0)
\psdot(4,2)
\psdot(0,4)
\psdot(0,6)
\psdot(4,4)
\psdot(4,6)
\psline[linewidth=1pt]{->}(0,0)(4,0)
\psline[linewidth=1pt]{->}(0,2)(4,2)
\psline[linewidth=1pt]{->}(0,4)(4,4)
\psline[linewidth=1pt]{->}(0,6)(4,6)
\psline[linewidth=1pt]{->}(0,0)(4,2)
\psline[linewidth=1pt]{->}(0,2)(4,0)
\psline[linewidth=1pt]{->}(0,4)(4,6)
\psline[linewidth=1pt]{->}(0,6)(4,4)
\rput[r](-0.2,0){$a_1'$}
\rput[r](-0.2,2){$a_1$}
\rput[r](-0.2,4){$a_2'$}
\rput[r](-0.2,6){$a_2$}
\rput[l](4.2,0){$s_2'$}
\rput[l](4.2,2){$s_2$}
\rput[l](4.2,4){$s_1'$}
\rput[l](4.2,6){$s_1$}
\rput[u](2,2.3){$1-\frac{\delta}{2}$}
\rput[u](2,6.3){$1-\frac{\delta}{2}$}
\rput[d](2,-0.3){$1-\frac{\delta}{2}$}
\rput[d](2,3.7){$1-\frac{\delta}{2}$}
\rput[u](3,1){$\frac{\delta}{2}$}
\rput[u](3,5){$\frac{\delta}{2}$}
\rput[u](-2,3){$\daleth$}
\end{pspicture}
\caption{Private monitoring structure $\daleth$ that depends on the parameter
$\delta \in [0,1]$.}\label{fig:MonitoringDilemme}
\end{center}
\end{figure}
This means that if the action $a_{-k} \in \mc{A}_{-k}$ was played, player $k \in\{1,2\}$ observes the right signal $s_k \in \mc{S}_k$ with probability $1-\frac{\delta}{2}$ and observes the wrong signal $s_k' \in \mc{S}_k$ with probability $\frac{\delta}{2}$. When $\delta = 0$, all the players have perfect monitoring. On the other hand, when $\delta = 1$, they cannot distinguish anything from the signal they observe (trivial monitoring). For this monitoring structure $\daleth(\bd{s}_1, \bd{s}_2|\bd{a}_1,\bd{a}_2)$ with $\delta \in [0,1]$, we want to determine the set $\conv u(\RR) \cap IR$ of utility profiles which are compatible with the information constraint (\ref{eq:RobustSide}). For the scenario under investigation, the information constraint (\ref{eq:RobustSide}) for $\delta>0$ rewrites as~:\\
\begin{footnotesize}
\begin{equation}
\begin{array}{cccl}
&\textsf{R}^{\star} = \max_{i\in\mc{K}} \bigg[\max_{k\in\mc{K},\atop a_i\in \mc{A}_i}H(\bd{a}_{-i,k}|\bd{s}_k(a_i), \bd{a}_k)+\log_2 \chi_i \bigg]
 &<  \log_2|\mc{S}_0| \\&&\nonumber\\
\stackrel{(a)}{\Leftrightarrow} & \ds{\max}
\left[
\begin{array}{c}
H(\bd{a}_{2}|\bd{s}_1) + \log_2\chi_1,\\
H(\bd{a}_{1}|\bd{s}_2) + \log_2\chi_2,
\end{array}\right]
&<  \log_2|\mc{S}_0|\\&&\nonumber\\
\stackrel{(b)}{\Leftrightarrow} &
\ds{\max}
\left[
\begin{array}{c}
 \sum_{a_{2} \in \mc{A}_{2},\atop  s_1\in \mc{S}_1} \PP^{\star}(a_{2})\daleth(s_1|a_{2})
\log_2 \frac{\sum_{\tilde{a}_{2} \in \mc{A}_{2}}\PP^{\star}(\tilde{a}_{2})\daleth(s_1|\tilde{a}_2)}{\PP^{\star}(a_{2})\daleth(s_1|a_2)} + \log_2\chi_1,\\
 \sum_{a_{1} \in \mc{A}_{1},\atop  s_2\in \mc{S}_2} \PP^{\star}(a_{1})\daleth(s_2|a_{1})
\log_2 \frac{\sum_{\tilde{a}_{1} \in \mc{A}_{1}}\PP^{\star}(\tilde{a}_{1})\daleth(s_2|\tilde{a}_1)}{\PP^{\star}(a_{1})\daleth(s_2|a_1)} + \log_2\chi_2
\end{array}\right]
&<  \log_2|\mc{S}_0| \\&&\nonumber\\
\stackrel{(c)}{\Leftrightarrow} & \ds{\max}
\left[
\begin{array}{cc}
 & \PP^{\star}(a_{2})(1 - \frac{\delta}{2}) \cdot \log_2\bigg(\frac{\PP^{\star}(a_{2})(1 - \frac{\delta}{2}) + \PP^{\star}(a_{2}') \frac{\delta}{2}}{\PP^{\star}(a_{2})(1 - \frac{\delta}{2})}\bigg)\\
+& \PP^{\star}(a_{2}') \frac{\delta}{2} \cdot \log_2\bigg(\frac{\PP^{\star}(a_{2})(1 - \frac{\delta}{2}) + \PP^{\star}(a_{2}') \frac{\delta}{2}}{\PP^{\star}(a_{2}') \frac{\delta}{2} }\bigg)\\
+& \PP^{\star}(a_{2})\frac{\delta}{2} \cdot \log_2\bigg(\frac{ \PP^{\star}(a_{2})\frac{\delta}{2} + \PP^{\star}(a_{2}')(1 - \frac{\delta}{2}) }{\PP^{\star}(a_{2})\frac{\delta}{2} }\bigg)\\
+& \PP^{\star}(a_{2}')(1 - \frac{\delta}{2}) \cdot \log_2\bigg(\frac{ \PP^{\star}(a_{2})\frac{\delta}{2} + \PP^{\star}(a_{2}')(1 - \frac{\delta}{2}) }{\PP^{\star}(a_{2}')(1 - \frac{\delta}{2}) }\bigg),\\
 & \PP^{\star}(a_{1})(1 - \frac{\delta}{2}) \cdot \log_2\bigg(\frac{\PP^{\star}(a_{1})(1 - \frac{\delta}{2}) + \PP^{\star}(a_{1}') \frac{\delta}{2}}{\PP^{\star}(a_{1})(1 - \frac{\delta}{2})}\bigg)\\
+& \PP^{\star}(a_{1}') \frac{\delta}{2} \cdot \log_2\bigg(\frac{\PP^{\star}(a_{1})(1 - \frac{\delta}{2}) + \PP^{\star}(a_{1}') \frac{\delta}{2}}{\PP^{\star}(a_{1}') \frac{\delta}{2} }\bigg)\\
+& \PP^{\star}(a_{1})\frac{\delta}{2} \cdot \log_2\bigg(\frac{ \PP^{\star}(a_{1})\frac{\delta}{2} + \PP^{\star}(a_{1}')(1 - \frac{\delta}{2}) }{\PP^{\star}(a_{1})\frac{\delta}{2} }\bigg)\\
+& \PP^{\star}(a_{1}')(1 - \frac{\delta}{2}) \cdot \log_2\bigg(\frac{ \PP^{\star}(a_{1})\frac{\delta}{2} + \PP^{\star}(a_{1}')(1 - \frac{\delta}{2}) }{\PP^{\star}(a_{1}')(1 - \frac{\delta}{2}) }\bigg),\\
\end{array}\right]
&<  \log_2|\mc{S}_0| -1\\&&\nonumber\\
\end{array}
\end{equation}
\end{footnotesize}
where (a) follows from the fact that $\bd{a}_{2}$ and $\bd{s}_1$ are independent of $\bd{a}_{1}$ and then the entropy
$H(\bd{a}_{2}|\bd{s}_1, \bd{a}_{1})$ reduce to $H(\bd{a}_{2}|\bd{s}_1)$. Using the same argument, $H(\bd{a}_{1}|\bd{s}_2, \bd{a}_{2})$ reduce to $H(\bd{a}_{1}|\bd{s}_2)$. (b) follow from the definition of the conditional entropy  and (c) follow the fact that the chromatic number of the graphs $\mc{G}_1$, $\mc{G}_2$ of both players are equals to $\chi_1 = \chi_2=2$ as soon as $\delta>0$.
\begin{figure}[h!l]
\vspace{1.7cm}
\includegraphics[width=0.85\textwidth]{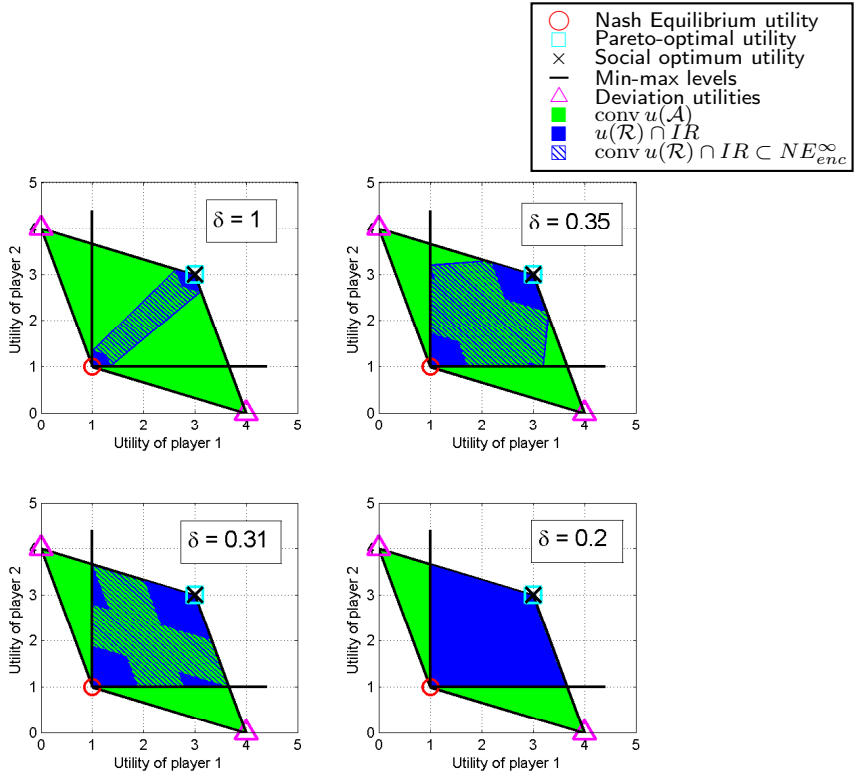}
\caption{The repeated version of the prisoner's dilemma is considered. The encoder-assisted monitoring structure of the game is described in
Fig. \ref{fig:ReconstructionRobustSideCapa2} where the private monitoring $\daleth$ of the players is described by Fig. \ref{fig:MonitoringDilemme}  and depends on a parameter $\delta\in[0,1]$. For $\delta=\{0.2,0.31,0.35,1\}$, the blue region represents the set $u(\RR) \cap IR$ of utilities that satisfy the information constraint (\ref{eq:RobustSide}). The hatched blue region represents the convexe hull $\conv u(\RR) \cap IR$ of the utility that can be supported by a uniform equilibrium strategy (Theorem \ref{theo:FolkEncodeurAssiste}). In that case, the encoder can maintain virtually perfect monitoring even if one of the players deviates. Note that for $\delta\leq 0.31$, the precision of the private monitoring is sufficient to guarantee the same equilibrium utility region as for the Folk theorem with perfect monitoring.}\label{fig:RobustPayoffRegion4}
\psset{xunit=0.5cm,yunit=0.25cm}
\begin{pspicture}(-15.5,-53)(-9.5,-48)
\psframe[fillcolor=white](0,0)(8.5,9)
\rput(0.75,8){$\textcolor[rgb]{0.98,0.00,0.00}{\bigcirc}$}
\rput(0.75,7){$\textcolor[rgb]{0.00,1.00,1.00}{\Box}$}
\rput(0.75,6){$\times$}
\rput(0.75,4){$\textcolor[rgb]{1.00,0.00,1.00}{\bigtriangleup}$}
\rput(0.75,3){$\textcolor[rgb]{0.00,1.00,0.00}{\blacksquare}$}
\rput(0.75,2){$\textcolor[rgb]{0.00,0.00,1.00}{\blacksquare}$}
\rput(0.75,1){$\textcolor[rgb]{0.00,0.00,1.00}{\square}$}
\psline(0.5,5)(1,5)
\psline[linewidth=0.5pt,linecolor=blue](0.925,0.65)(0.575,1.35)
\psline[linewidth=0.5pt,linecolor=blue](0.75,0.65)(0.575,1)
\psline[linewidth=0.5pt,linecolor=blue](0.925,1)(0.75,1.35)
\rput[l](1.5,8){ \sf{Nash Equilibrium utility}}
\rput[l](1.5,7){ \sf{Pareto-optimal utility}}
\rput[l](1.5,6){ \sf{Social optimum utility}}
\rput[l](1.5,5){ \sf{Min-max levels}}
\rput[l](1.5,4){ \sf{Deviation utilities}}
\rput[l](1.5,3){ \sf{$\conv u(\mc{A})$}}
\rput[l](1.5,2){ \sf{$u(\RR) \cap IR$}}
\rput[l](1.5,1){ \sf{$\conv u(\RR) \cap IR \subset NE_{enc}^{\infty}$}}
\end{pspicture}
\end{figure}

Setting $\delta$  to $1$, $0.35$, $0.31$ and $0.2$, the above information constraint can be translated into Fig. \ref{fig:RobustPayoffRegion4}. This figure represents the set of feasible average utility profiles which are both individually rational and compatible with the information constraint
(\ref{eq:RobustSide}). Let us interpret these numerical results that depend on the precision parameter $\delta\in[0,1]$ of the private monitoring $\daleth$.
\begin{itemize}
\item[$\circ$] \emph{Trivial monitoring:} $\delta=1$. The players have no information from their private signal, about the actions of their opponent. Theorem \ref{theo:FolkEncodeurAssiste} show that for some utility vectors represented by the blue hatched region $\conv u(\RR) \cap IR$, the encoder is able to send to both players, the sequences of past actions (with $|\mc{A}| = 4$) using an alphabet of $3 = |\mc{S}_0|$ symbols of public signals.
\item[$\circ$] \emph{Noisy imperfect monitoring:} $0.31<\delta<1$. The private signals received by the players reveal a partial information about the past actions of the opponent. Only a portion of the utility region is compatible with the information constraint (\ref{eq:RobustSide}). The virtual perfect monitoring and the equilibrium condition are not always implementable.
\item[$\circ$] \emph{Less noisy imperfect monitoring:} $0<\delta\leq0.31$. The blue hatched utility region $\conv u(\RR) \cap IR$ is equal to the utility region of the Folk theorem \cite{aumann-1981} with perfect monitoring $\conv u(\mc{A}) \cap IR$.
\item[$\circ$] \emph{Perfect monitoring:} $\delta=0$. The utility region coincides with the set of feasible and individually rational utilities.
\end{itemize}
The proposed approach provides an equilibrium strategy $(\sigma^{\star},\tau^{\star}) \in \Sigma \times\T$ that supports any utility profile in the blue hatched region $U^{\star} = \conv u(\RR) \cap IR$ of Fig. \ref{fig:RobustPayoffRegion4}, while ensuring that the encoder can maintain VPM even in the presence of single deviations.

\section{Conclusion}
\label{sec:concl}

This paper considers games where players have both a private signals they receive through the
initial monitoring structure and a public signal which is sent by an encoder. The encoder
is assumed to perfectly monitors the played actions and to send a public signal to the players.
The purpose of the encoder is to establish virtual perfect monitoring.
Technically, the encoder to be designed takes into account the side information at the receiver and possesses
the property of resilience to single deviations. It is shown that, the internal information
constraint imposes a restriction in terms of feasible utilities in order to establish virtual perfect monitoring and
provide an equilibrium utility region (as proved in the case of infinitely repeated games).

The proposed work can be extended in many respects. The targeted monitoring structure can be chosen to be different (e.g., a $2-$connected observation graph or a given public signal). The proposed information constraint might be relaxed by assuming that the encoder sends complementary private signals. An interesting result would be to establish a converse, proving that the information constraint is necessary and sufficient. The i.i.d assumption might be relaxed with the aim to characterize equilibrium utilities which do not assume i.i.d action profiles.


\appendix

\section{Proof of Theorem \ref{theo:RobustSide}}\label{sec:DemoTheoRobustSide}

We construct a coding scheme based on  graph coloring and statistical tests. Two points have to be considered carefully.
First, the side information  $s_k^n \in \mc{S}_k^n$ may provide some relevant information for player $k$ even if another player $i\in \mc{K}$ deviates.
Second, the transition probability $\daleth$ that generates the side information $s_k \in \mc{S}_k$ is controlled by the actions $a_k\in \mc{A}_k$
of each player $k\in\mc{K}$.\\\\
\bd{Parameter.} We choose a parameter $\varepsilon>0$ such that:
\begin{equation}
\textsf{R}^{\star} + 2 \varepsilon = \max_{i\in\mc{K}} \bigg[\max_{k\in\mc{K},\atop a_i\in \mc{A}_i}H(\bd{a}_{-i,k}|\bd{s}_k(a_i), \bd{a}_k)+\log_2 \chi_i \bigg]+ 2 \varepsilon \leq \log_2|\mc{S}_0|.
\end{equation}
\bd{Encoding function $f_0$.} The encoder proceeds to the  statistical test provided by (\ref{equation:StatisticalTest}) and constructs for a given sequence of actions $a^n=(a_1^n,\ldots,a_K^n) \in \mc{A}^n$, the following set~:
\begin{eqnarray}\label{equation:StatisticalTest}
\arg\min_{k\in \mc{K}}\sum_{a_{-k}\in \mc{A}_{-k}}\bigg|\frac{N(a_{-k}|a_{-k}^n)}{n}-\PP^{\star}_{-k}(a_{-k})\bigg|.
\end{eqnarray}
It chooses one component $i\in \mc{K}$ that minimizes (\ref{equation:StatisticalTest}).
The symbols of the component  $i\in \mc{K}$ will be encoded using the minimal coloring
$\phi_i: \mc{A}_i \longrightarrow \Phi_i$ (Def. \ref{def:Coloring})
of the  graph $\G_i$ defined by Def. (\ref{def:graph}). Denote by $\chi_i$ the chromatic number
of the graph $\G_i$ and
\begin{itemize}
\item[$\bullet$] encode the index of the chosen component $i \in \mc{K} $ using $|K|$ sequences $s_0^n \in \mc{S}_0^n$ of public signals~;
\item[$\bullet$] encode the sequence of colors $c_i^n \in \Phi_i^n$ that corresponds to the sequence of actions $a_i^n\in \mc{A}_i^n$ with at each stage $c_i=\phi_i(a_i)$ using $\chi_i^n$ sequences $s_0^n \in \mc{S}_0^n$ of public signals.
\end{itemize}
The other components $a_{-i}^n \in \mc{A}_{-i}^n$ will be encoded depending on the transition probability $\daleth$ and on the sequence $a_i^n \in \mc{A}_i^n$.
For example, if the symbol $a_i \in \mc{A}_i^n$ has been used at a high enough frequency, the sequences of signals $(s_k^n(a_i))_{k\in\mc{K}}$,
drawn from the transition probability $\daleth_{a_i} : \mc{A}_{-i} \longrightarrow \Delta(\mc{S}_k)$, are
sufficiently long to use a source coding scheme of the type Slepian and Wolf \cite{slepian-it-1973}.
Otherwise, the information $a_{-i}^n \in \mc{A}_{-i}^n$ should be encoded directly, without any compression.
The encoder splits the sequences $s_k^n \in \mc{S}_k^n$ into sub-sequences $(s_k^{n_{a_i}})_{a_i\in \mc{A}_i}$
indexed by the symbols $a_i\in \mc{A}_i$ where $n_{a_i} = N(a_i|a_i^n)$. The sub-sequence $s_k^{n_{a_i}} \in \mc{S}_k^{n_{a_i}}$
has length $n_{a_i}\in \N$ and is drawn i.i.d. from the joint probability
$\PP^{\star}_{-i} \otimes \daleth_{a_i} \in \Delta(\mc{A}_{-i} \times \mc{S})$. The encoder evaluates
the partition $(\tilde{\mc{A}}_i,\tilde{\mc{A}}_i^c)$ of the symbols $a_i\in \mc{A}_i$ defined as follows.
For each $\varepsilon>0$, there exists an $\bar{n}_1$ such that the error probability of the Slepian and
Wolf \cite{slepian-it-1973} coding is upper bounded by $\varepsilon > 0$~:
\begin{itemize}
\item[$\bullet$] $a_i\in \tilde{\mc{A}}_i$, if $N(a_i|a_i^n)=n_{a_i}\leq \bar{n_1}$ and then the sequence $a_{-i}^{n_{a_i}} \in \mc{A}_{-i}^{n_{a_i}}$ is encoded with $|\mc{A}_{-i}|^{n_{a_i}}$ sequences $s_0^n \in \mc{S}_0^n$ of public signals.
\item[$\bullet$] $a_i\in \tilde{\mc{A}}_i^c$ if $N(a_i|a_i^n)=n_{a_i}> \bar{n_1}$ and then the sequence $a_{-i}^{n_{a_i}} \in \mc{A}_{-i}^{n_{a_i}}$ is encoded using the "random binning technique" of Slepian and Wolf  \cite{slepian-it-1973}. \\
\end{itemize}
\bd{The random binning technique} \cite{slepian-it-1973} consists in randomly assign the  $2^{n_{a_i}H(\bd{a}_{-i})}$ typical
sequences $a_{-i}^{n_{a_i}}\in\mc{A}_{-i}^{n_{a_i}}$ to one of the
$2^{n_{a_i}(\max_{k\in\mc{K}} H(\bd{a}_{-i,k}|\bd{s}_k(a_i), \bd{a}_k)+ \varepsilon)}$ bin. Note that
$H(\bd{a}_{-i,k}|\bd{s}_k(a_i), \bd{a}_k) = H(\bd{a}_{-i}|\bd{s}_k(a_i), \bd{a}_k)$.
Each bin $\mc{B}(s_0^n)$ is indexed by a sequence $s_0^n \in \mc{S}_0^n$ of public signals and
contains $2^{n_{a_i}(\min_{k \in \mc{K}}I(\bd{a}_{-i};\bd{s}_k(a_i),\bd{a}_k)-\varepsilon)}$ typical
sequences $a_{-i}^{n_{a_i}}\in\mc{A}_{-i}^{n_{a_i}}$. The encoder $\C$ observes a sequence of realized
actions $a_{-i}^{n_{a_i}}\in\mc{A}_{-i}^{n_{a_i}}$. If this sequence is typical, then it send  to all
the players $\mc{K}$, the sequence of public signals $s_0^n\in\mc{S}_0^n$ corresponding to the bin
containing the sequence $a_{-i}^{n_{a_i}} \in \mc{B}(s_0^n)$.
If it is not, the encoder $\C$ declares an error.\\\\
\bd{Decoding function $g_k$ of player $k\in\mc{K}$.} The decoding player receives the index
$i \in \mc{K} $ of the player chosen by the statistical test (\ref{equation:StatisticalTest}).
Using the appropriate codebook, it decodes separately the information regarding the component
$i \in \mc{K} $ and the other components $j\in \mc{K}\backslash\{i\}$.
\begin{itemize}
\item[$\bullet$] Knowing component $i\in \mc{K}$ chosen by the statistical test, the side information $s_k\in \mc{S}_k$ and the color $c_i\in \Phi_i$, the decoding player $k\in\mc{K}$ decodes a unique stage symbol $a_i\in \mc{A}_i$ for component $i \in \mc{K} $.
\end{itemize}
The decoding player $k\in \mc{K}$ knows the entire sequence of actions $a_i^n \in \mc{A}_i^n$ and it characterizes the partition  $\tilde{\mc{A}_i}$ and $\tilde{\mc{A}_i}^c$ of the set of symbols $\mc{A}_i$.
\begin{itemize}
\item[$\bullet$] For the transition $\daleth_{a_i}$, controlled by the symbol $a_i\in \tilde{\mc{A}_i}$, the sequence of symbols $a_{-i}^{n_{a_i}} \in \mc{A}_{-i}^{n_{a_i}}$ is directly decoded.
\item[$\bullet$] For the transition $\daleth_{a_i}$, controlled by the symbol $a_i\in \tilde{\mc{A}_i}^c$, the sequence of actions $a_{-i}^{n_{a_i}}  \in \mc{A}_{-i}^{n_{a_i}}$ is decoded using Slepian and Wolf decoding \cite{slepian-it-1973}. The decoding player $k\in\mc{K}$ find into the bin $\mc{B}(s_0^n)$ corresponding to the sequence of public signals $s_0^n\in\mc{S}_0^n$, a sequence $a_{-i}^{n_{a_i}}  \in \mc{A}_{-i}^{n_{a_i}}$ which is jointly typical with the sequence of side information $s_k^{n}(a_i) \in \mc{S}_k^n$ for the probability distribution $\PP_{-i}\otimes \daleth_{a_i}\in \Delta(\mc{A}_{-i}\times \mc{S}_k)$.
\end{itemize}
\bd{Cardinality of $\mc{S}_0^n$.}
Let $\bar{n_2}>\frac{\log|K|+\bar{n_1}|\mc{A}_i|\log|A_{-i}|}{\varepsilon}$. Then for all $n\geq\bar{n_2}$, the cardinality of the set of sequences $|\mc{S}_0|^n$ is greater than the number of sequences of the coding scheme $2^{n(R^{\star}+3\varepsilon)}$.
\begin{eqnarray}
&&  \frac{\log \bigg(|\mc{K}|\cdot\chi_i^n\cdot |\mc{A}_{-i}|^{\sum_{a_i\in \tilde{\mc{A}_i}}n_{a_i}}\cdot\prod_{a_i\in \tilde{\mc{A}_i}^c}2^{n_{a_i}(\max_{k\in\mc{K}} H(\bd{a}_{-i,k}|\bd{s}_k(a_i), \bd{a}_k)+ \varepsilon)}\bigg)}{n}\nonumber \\
 &\leq &  \frac{\log|K|}{n} + \log\chi_i+ \frac{\bar{n_1}|\tilde{\mc{A}_i}|}{n}\log|\mc{A}_{-i}| + \sum_{a_i\in \tilde{\mc{A}_i}^c}\frac{n_{a_i}}{n}(\max_{k\in\mc{K}} H(\bd{a}_{-i,k}|\bd{s}_k(a_i), \bd{a}_k)+ \varepsilon)\nonumber\\
&\leq&  \max_{i\in \mc{K}} \bigg[\max_{k\in\mc{K},\atop a_i\in\mc{A}_i} H(\bd{a}_{-i,k}|\bd{s}_k(a_i), \bd{a}_k) + \log\chi_i\bigg] + \frac{\log|K|+\bar{n_1}|\mc{A}_i|\log|A_{-i}|}{n}+\varepsilon\nonumber\\
 &\leq& \max_{i\in \mc{K}} \bigg[\max_{k\in\mc{K},\atop a_i\in\mc{A}_i} H(\bd{a}_{-i,k}|\bd{s}_k(a_i), \bd{a}_k) + \log\chi_i\bigg] + 2\varepsilon\nonumber \\
&=& \sf{R}^{\star}+3\varepsilon\\
&\leq& \log_2|\mc{S}_0|.
\end{eqnarray}
\bd{Error probability.}
Suppose that player $i\in \mc{K}$ chooses his sequence of actions $a_i^n \in \mc{A}_i^n$ with an arbitrary sequence of distribution. There are two possibilities. First, the statistical test (\ref{equation:StatisticalTest}) returns the deviating player $i\in \mc{K}$.
Second the statistical test returns another player $j\neq i$.
\begin{itemize}
\item[$\bullet$]
Suppose that the statistical test (\ref{equation:StatisticalTest}) returns the deviating player $i\in \mc{K}$.
In that case, the "random binning technique" of Slepian and Wolf \cite{slepian-it-1973} guarantees that for all $a_i \in \tilde{\mc{A}}_i^c$ the sequence of vectors of actions
$a_{-i}^{n_{a_i}} \in \mc{A}_{-i}^{n_{a_i}}$ is perfectly reconstructed with large probability. Let us define the following events:
\begin{eqnarray}
E_1&=&  \cup_{k\in\mc{K},\atop a_i \in \mc{A}_i} \bigg\{ (\bd{a}_{-i}^{n_{a_i}}, \bd{s}_k^{n_{a_i}})\notin A_{\varepsilon}^{{\star}{n}}( \mc{A}_{-i}\times \mc{S}_k)\bigg\}.
\end{eqnarray}
There exists a player $k\in\mc{K}$ for which the random sequences of actions and private signals $(\bd{a}_{-i}^{n_{a_i}}, \bd{s}_k^{n_{a_i}}) \in \mc{A}_{-i}^{n_{a_i}} \times \mc{S}_k^{n_{a_i}}$ are not typical.
\begin{eqnarray}
E_2&=&  \cup_{k\in\mc{K},\atop a_i \in \mc{A}_i} \bigg\{ \exists \bd{a}_{-i}^{'n_{a_i}}\neq \bd{a}_{-i}^{n_{a_i}} \in \mc{B}(s_0^{n_{a_i}}) ,\; (\bd{a}_{-i}^{'n_{a_i}},\bd{s}_k^{n_{a_i}}(a_i), \bd{a}_k^{n_{a_i}})\in A_{\varepsilon}^{{\star}{n}}(\mc{A}_{-i}\times \mc{S}_k)\bigg\}.\nonumber \\
\end{eqnarray}
There exists another sequence $\bd{a}_{-i}^{'n_{a_i}}$ in the bin $\mc{B}(s_0^{n_{a_i}})$ corresponding to the sequence of public signals $s_0^{n_{a_i}}\in\mc{S}_0^{n_{a_i}}$ that is jointly typical with the sequences of private signals $\bd{s}_k^{n_{a_i}}(a_i)$ and actions $\bd{a}_k^{n_{a_i}}$ of the player $k\in\mc{K}$.\\
\begin{itemize}
\item[$\circ$]From Lemma \ref{lemma:TypicalSequencesCond}  of App. \ref{sec:AppTypicaSeq}, the error probability $\PP(E_1)$ is lower than  $\varepsilon \cdot K\cdot |\mc{A}_i| >0$ as soon as $n$ is sufficiently large.
\item[$\circ$]From Lemma \ref{lemma:MutualProb} of App. \ref{sec:AppTypicaSeq}, the expected error probability $\E_{\lambda}[\PP(E_2)]$ of the random code $\mu \in \Delta(\Lambda(n))$ is lower than  $\varepsilon \cdot K\cdot |\mc{A}_i| >0$ as soon as $n$ is sufficiently large and the condition (\ref{eq:condMutualCovering}) is satisfied.
\begin{eqnarray}
 |\mc{B}(s_0^{n_{a_i}})| \leq 2^{n (\min_{k \in \mc{K}}I(\bd{a}_{-i};\bd{s}_k(a_i),\bd{a}_k)-\varepsilon)} \label{eq:condMutualCovering}
\end{eqnarray}
Lemma \ref{lemma:MutualProb} applies because the random sequence
$\bd{a}_{-i}^{'n_{a_i}}$ is generated independently
of the random sequences $(\bd{s}_k^{n_{a_i}}(a_i), \bd{a}_k^{n_{a_i}})$.
This ensures the existence of a code $\lambda \in \Lambda(n)$ such that the
error probability  $\PP_{\lambda}(E_2) \leq 2 \varepsilon$ is upper bounded.\\
\end{itemize}
\item[$\bullet$]
Suppose that the the statistical test returns another player $j\neq i \in \mc{K} $.
This implies the following inequality:\\
\begin{eqnarray}\label{equation:StatisticalTestJoueurs}
&&\sum_{a_{-j}\in \mc{A}_{-j}}\bigg|\frac{N(a_{-j}|a_{-j}^n)}{n}-\PP_{-j}(a_{-j})\bigg| \leq \sum_{a_{-i}\in \mc{A}_{-i}}\bigg|\frac{N(a_{-i}|a_{-i}^n)}{n}-\PP^{\star}_{-i}(a_{-i})\bigg|.\nonumber\\ \\ \nonumber
\end{eqnarray}
For every player $j\in \mc{K}\backslash \{i\}$, the sequence $a_j^n \in \mc{A}_j^n$ is drawn i.i.d. from stage to stage with the distribution
$\PP_j^{\star} \in \Delta(\mc{A}_j)$. From Lemma \ref{lemma:TypicalSequencesCond},
these action sequences are typical with large probability as $n$ goes to infinity.
Then, the sequence of actions $a_i^n \in \mc{A}_i^n$ is typical with large probability and then correctly encoded and decoded. There exists $n$ sufficiently large such that the
error probability $\PP_e(\lambda) \leq \varepsilon $ is upper bounded.
\end{itemize}
We therefore proved the existence a code $\lambda \in \Lambda(n)$ such that the error probability of the code $\PP_e(\lambda) \leq 2 \varepsilon \cdot K\cdot |\mc{A}_i| $ is upper bounded.

\section{Proof of Theorem \ref{theo:FolkEncodeurAssiste}}\label{sec:DemoJeu}

We prove the following inclusion $\conv u(\RR) \cap IR \subset NE^{\infty}_{enc} $.
First, we consider a utility vector $U\in u(\RR) \cap IR$ and provide a
pair of strategies for the encoder and the players
$(\sigma^{\star}, \tau^{\star})\in \Sigma \times \T $
that forms a uniform equilibrium (see Def. \ref{def:EqInfini}.
The first condition (i) is satisfied when the asymptotic utility of the strategies
 $(\sigma^{\star}, \tau^{\star})\in \Sigma \times \T $ converges toward the utility $U$.
The second condition (ii) is satisfied when no unilateral deviation $\tau'_k\in \T_k $
provides to player $k\in\mc{K}$ a gain larger than $\epsilon>0$.

\subsection{Construction of strategies $(\sigma^{\star},\tau^{\star}) \in \Sigma \times \T $}\label{sec:DemoConstructionStrat}
\subsubsection{Block coding scheme}\label{sec:DemoSchemaCodage}

The $T>0$ stages of the repeated game are divided into $B$ blocks
of stages of length $n$, represented by Fig. \ref{fig:GameCourse}.
Denote $\mc{B}$ the set of blocks,
$b\in\mc{B}$ the index of one block and $B = |\mc{B}|\in \N$ the number of blocks.
Denote  $s_k^n(b)\in\mc{S}_k^{n}$ the sequence of signals received during the block $b\in \mc{B}$.
Fix the parameter $\epsilon>0$ and let us describe the strategies
 $(\sigma^{\star},\tau^{\star})\in\Sigma \times\T$ that satisfy both conditions
(\ref{eq:DemoCondition1}) and (\ref{eq:DemoCondition2}) for all $T \geq \bar{T}$.
\begin{eqnarray}
|\gamma_k^T(\sigma^{\star}, \tau^{\star}) - U_k^{\star} |\qquad \leq& \epsilon,&\qquad \forall k\in\mc{K},\label{eq:DemoCondition1}\\
\gamma_k^T(\sigma^{\star}, \tau^{\star}) + \epsilon \qquad\geq& \gamma_k^T(\sigma^{\star}, \tau_k', \tau^{\star}_{-k}),&\qquad \forall k\in\mc{K},\quad \forall \tau_k'\in\T_k.\label{eq:DemoCondition2}
\end{eqnarray}
Suppose that the number of blocks $B\in\N$ satisfies condition (\ref{eq:DemoConditionB})~:
\begin{eqnarray}
B\geq \frac{8 \cdot\max_{a\in\mc{A}} |u_k(a)|}{\epsilon}. \label{eq:DemoConditionB}
\end{eqnarray}

\subsubsection{Strategy of the encoder $\sigma^{\star} \in \Sigma $}\label{sec:DemoStrategieCodage}
The coding strategy $\sigma^{\star}\in \Sigma $ consists in sending
a sequence of public signals $\bd{s}_0^n \in \mc{S}_0^n$ to each player
 so that they can reconstruct the sequence $\bd{a}^n \in \mc{A}^n$ of past actions.
In order to communicate, the encoder $\C$ and the players $\mc{K}$ implement a code $\lambda = (f_0,(g_k)_{k\in\mc{K}})$
investigated in Sec. \ref{sec:info-constraints} and
defined by~:
\begin{equation}\label{eq:Encoding2}
\begin{array}{cccccc}
f_0 & : & \mc{A}^n   & \longrightarrow & \mc{S}_0^n&,\\
g_k & : & \mc{S}_0^n  \times \mc{S}_k^{n} \times \mc{A}_k^n & \longrightarrow &\mc{A}^n,&\qquad \forall k\in\mc{K}.
\end{array}
\end{equation}
Condition $U\in \conv u(\RR) \cap IR$ implies that the probability distribution
$\PP^{\star} \in \prod_{k\in\mc{K}} \Delta(\mc{A}_k)$ belong to the set
$\RR$ described by (\ref{eq:CondDefMixte}) and satisfies the condition (\ref{eq:RobustSide})
of the Theorem $\ref{theo:RobustSide}$. This coding result implies that for all
 $\varepsilon>0$, there exists a parameter $n\in\N$ and a code
$\lambda \in  \Lambda(n)$ with $\lambda = (f_0,(g_k)_{k\in\mc{K}})$ such that the error probability
of the coding scheme is upper bounded by $\varepsilon$.
Denote by $\hat{{a}}^n(k)\in\mc{A}^n$ the sequence of actions
obtained as output by the decoder $k\in\mc{K}$.
\begin{eqnarray}
&\PP_e(\lambda) =& \sum_{k\in\mc{K}} \max_{i \in \mc{K}}\max_{v_i \in  \Delta(\mc{A}_i^{\infty})}
\PP_{v_i}(\bd{a}^n \neq g_k(\bd{s}_k^n,\bd{s}_0^n,\bd{a}_k^n)).\label{eq:ErrorProbaKStrat}
\end{eqnarray}
The strategy of the encoder $\sigma^{\star}\in\Sigma$
is built as follows. At the beginning of the block $b\in \mc{B}$ with $b\geq 2$, the encoder $\C$
observes the sequence of actions $a^n(b-1)\in\mc{A}^n$ over the block $b-1\in\mc{B}$ and
choose the sequence of public signals $s_0^n(b)$
over block $b\in\mc{B}$ using the encoding function $f_0$
(\ref{eq:Encoding2}) provided by the code
$\lambda$ that satisfies the condition (\ref{eq:ErrorProbaKStrat}).
\begin{eqnarray}
s_0^n(b) = f_0\bigg( a^n(b-1) \bigg) \in \mc{S}_0^n.
\end{eqnarray}
Over the first block $b_1\in\mc{B}$, the encoder send an arbitrary sequence $s_0(b_1)\in\mc{S}_0^n$.

\subsubsection{Decoding scheme}\label{sec:DemoDecodage}
At the end of the block $b\in \mc{B}$ with $b\geq 3$, player $k$ implements the decoding function
 $g_k$ (\ref{eq:Encoding2}) provided by the code $\lambda$
that satisfies the condition (\ref{eq:ErrorProbaKStrat}).
The player $k\in\mc{K}$ recalls his own actions $a_k^n(b-1) \in \mc{A}_k^n$ and
observes the sequences of private signals
$s_k^n(b-1) \in \mc{S}_k^n$ and public signals $s_0^n(b) \in \mc{S}_0^n$
sent by the encoder $\C$. The player $k\in\mc{K}$ evaluates the sequence
 $\hat{\bd{a}}^n(k,b-1)$ of actions of block $b-1\in \mc{B}$ using the decoding function $g_k$.
\begin{eqnarray}
\hat{\bd{a}}^n(k,b-1) = g_k\bigg(s_k^n(b-1), s_0^n(b), a_k^n(b-1)\bigg) \in \mc{A}^n.
\end{eqnarray}
Condition (\ref{eq:ErrorProbaKStrat}) guarantees that at the beginning of block
$b+1\in \mc{B}$, each player  $k\in\mc{K}$ observes the sequence of actions $\hat{\bd{a}}(b-1) \in \mc{A}$
of the other players during the block $b-1\in\mc{B}$ with an error probability arbitrarily low.
Over the two first blocks  $b_1, b_2\in\mc{B}$, no decoding strategy is implemented.

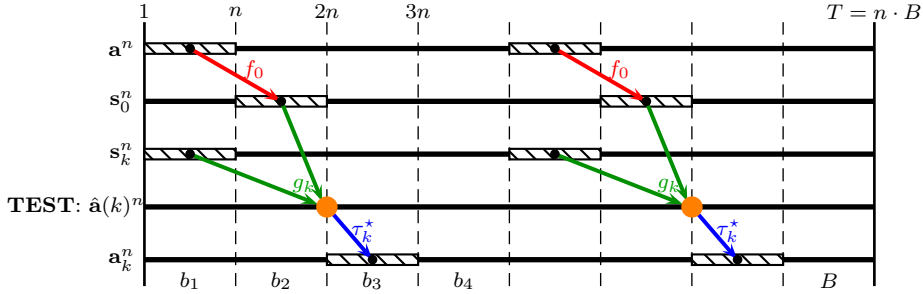
\begin{figure}[!ht]
\begin{center}
\psset{xunit=0.6cm,yunit=0.35cm}
\begin{pspicture}(-2,-1.5)(16,9.5)
\psline[linewidth=2pt](0,0)(16,0)
\psline[linewidth=2pt](0,2)(16,2)
\psline[linewidth=2pt](0,4)(16,4)
\psline[linewidth=2pt](0,6)(16,6)
\psline[linewidth=2pt](0,8)(16,8)
\psline[linewidth=1pt](0,-1)(0,9)
\psline[linewidth=0.5pt,linestyle=dashed ](2,-1)(2,9)
\psline[linewidth=0.5pt,linestyle=dashed ](4,-1)(4,9)
\psline[linewidth=0.5pt,linestyle=dashed ](6,-1)(6,9)
\psline[linewidth=0.5pt,linestyle=dashed ](8,-1)(8,9)
\psline[linewidth=0.5pt,linestyle=dashed ](10,-1)(10,9)
\psline[linewidth=0.5pt,linestyle=dashed ](12,-1)(12,9)
\psline[linewidth=0.5pt,linestyle=dashed ](14,-1)(14,9)
\psline[linewidth=1pt](16,-1)(16,9)
\rput[u](-0.5,0){$\bd{a}_{k}^n$}
\rput[u](-1.5,2){\textbf{TEST}: $\hat{\bd{a}}(k)^n$}
\rput[u](-0.5,4){$\bd{s}_k^n$}
\rput[u](-0.5,6){$\bd{s}_0^n$}
\rput[u](-0.5,8){$\bd{a}^n$}
\rput[u](6,9.4){$3n$}
\rput[u](4,9.4){$2n$}
\rput[u](2,9.4){$n$}
\rput[u](0,9.4){$1$}
\rput[u](16,9.4){$T =n \cdot B$}
\rput[u](1,-0.8){$b_1$}
\rput[u](3,-0.8){$b_2$}
\rput[u](5,-0.8){$b_3$}
\rput[u](7,-0.8){$b_4$}
\rput[u](15,-0.8){$B$}
\pspolygon[fillstyle=vlines*](0,7.8)(0,8.2)(2,8.2)(2,7.8)
\pspolygon[fillstyle=vlines*](0,3.8)(0,4.2)(2,4.2)(2,3.8)
\pspolygon[fillstyle=vlines*](2,5.8)(2,6.2)(4,6.2)(4,5.8)
\pspolygon[fillstyle=vlines*](4,-0.2)(4,0.2)(6,0.2)(6,-0.2)
\psline[linewidth=1.5pt,linecolor=red  ]{->}(1,8)(3,6)
\rput[u](2.4,7.2){\textcolor[rgb]{0.98,0.00,0.00}{$f_0$}}
\psline[linewidth=1.5pt,linecolor=dgreen ]{->}(3,6)(3.9,2.2)
\psline[linewidth=1.5pt,linecolor=dgreen ]{->}(1,4)(3.8,2.1)
\rput[u](3.5,2.8){\textcolor[rgb]{0.00,0.64,0.00}{$g_k$}}
\psline[linewidth=1.5pt,linecolor=blue ]{->}(4,2)(5,0)
\rput[u](4.8,1.1){\textcolor[rgb]{0.00,0.00,1.00}{$\tau^{\star}_k$}}
\psdots[linewidth=3pt,linecolor=orange](4,2)
\psdots(5,0)(1,4)(1,8)(3,6)
\pspolygon[fillstyle=vlines*](8,7.8)(8,8.2)(10,8.2)(10,7.8)
\pspolygon[fillstyle=vlines*](8,3.8)(8,4.2)(10,4.2)(10,3.8)
\pspolygon[fillstyle=vlines*](10,5.8)(10,6.2)(12,6.2)(12,5.8)
\pspolygon[fillstyle=vlines*](12,-0.2)(12,0.2)(14,0.2)(14,-0.2)
\psline[linewidth=1.5pt,linecolor=red  ]{->}(9,8)(11,6)
\rput[u](10.4,7.2){\textcolor[rgb]{0.98,0.00,0.00}{$f_0$}}
\psline[linewidth=1.5pt,linecolor=dgreen ]{->}(11,6)(11.9,2.2)
\psline[linewidth=1.5pt,linecolor=dgreen ]{->}(9,4)(11.8,2.1)
\rput[u](11.5,2.8){\textcolor[rgb]{0.00,0.64,0.00}{$g_k$}}
\psline[linewidth=1.5pt,linecolor=blue ]{->}(12,2)(13,0)
\rput[u](12.8,1.1){\textcolor[rgb]{0.00,0.00,1.00}{$\tau^{\star}_k$}}
\psdots[linewidth=3pt,linecolor=orange](12,2)
\psdots(13,0)(9,4)(9,8)(11,6)
\end{pspicture}
\caption{The strategies of the encoder $\C$ and the players $\mc{K}$
 $(\sigma^{\star},\tau^{\star})\in\Sigma \times \mc{T}$
are described at sections  \ref{sec:DemoStrategieCodage} and \ref{sec:DemoStratEquil}.
The actions $\bd{a}^n(b)$ over block $b\in\mc{B}$ are encoded over the next block
 $b+1 \in \mc{B}$ into a sequence of public signals $\bd{s}_0^n(b+1)$.
At the end of block $b+1 \in \mc{B}$, player $k\in\mc{K}$ decode the sequence of actions
$\hat{\bd{a}}^n(b)$ over block $b\in\mc{B}$ from the sequences of signals
$\bd{s}_0^n(b+1)$ and $\bd{s}_k^n(b)$. Player $k\in\mc{K}$ performs a statistical test
in order to detect the possible unilateral deviations.
The result of this statistical test determines the sequence of actions
 $\bd{a}_k^n(b+2)$ player $k\in\mc{K}$ will play during the block $b+2 \in\mc{B}$.}\label{fig:GameCourse}
\end{center}
\end{figure}

\subsubsection{Statistical test}\label{sec:DemoTestStat}
Each player $k\in \mc{K}$ performs a statistical test at the beginning of each block $b +1 \in \mc{B}$.
Define the event $\bd{E}_k^i(b+1)$ using the set of typical sequences
 $A_{\varepsilon}^{\star{n}}(\PP^{\star}_{i})$ stated by the definition \ref{def:TypicalSequencesCond}.
\begin{eqnarray}
\bd{E}_k^i(b+1) = \begin{cases}
0 \text{ if } \hat{a}_{i}^n(b-1) \in A_{\varepsilon}^{\star{n}}(\PP^{\star}_{i}),\\
1 \text{ if } \hat{a}_{i}^n(b-1) \notin A_{\varepsilon}^{\star{n}}(\PP^{\star}_{i}).
\end{cases}\label{eq:TestStat}
\end{eqnarray}
When $\bd{E}_k^i(b+1) = 1$, player $k\in \mc{K}$
declare player $i\in \mc{K}$ deviates from the prescribed strategy
$\tau^{\star}_i \in\mc{T}_i$, during block $b-1\in\mc{B}$.

\subsubsection{Main plan}\label{sec:DemoPlanPrincipal}
The main plan consists in playing the same mixed action $\PP^{\star}$ i.i.d. from stage to stage.
\begin{eqnarray}
\PP^{\star}(a^t)= \PP^{\star}(a_1)\otimes \ldots \otimes \PP^{\star}(a_K) \in \prod_{k\in\mc{K}} \Delta(\mc{A}_k),\qquad \forall t\geq 1.\label{PlanPrincipal}
\end{eqnarray}

\subsubsection{Punishment plan for player $i\in \mc{K}$}\label{sec:DemoPlanPunition}
The punishment plan $\bar{\PP}(i)=(\bar{\PP}_k(i))_{k \neq i}\in \prod_{k\neq i} \Delta(\mc{A}_k)$
of player $i\in \mc{K}$ consists of a vector of mixed actions of other players
that minimize the utility of player $i\in \mc{K}$.
\begin{eqnarray}
 \bar{\PP}(i) = \bigg(\bar{\PP}_k(i)\bigg)_{k\neq i} \in
 \argmin_{ \PP_{-i}\in \prod_{k\neq i} \Delta(\mc{A}_k)} \left[ \max_{\PP_i \in \Delta(\mc{A}_i)} \E_{\PP_i,\PP_{-i}}\bigg[u_k(a_i,a_{-i})\bigg]\right],\quad \forall i\in \mc{K}.\nonumber \\ \label{PlanPunition}
\end{eqnarray}
The punishment plan for player $i\in \mc{K}$ by player $k \in \mc{K}$ is denoted
$\bar{\PP}_k(i)\in \Delta(\mc{A}_k)$ and is given by (\ref{PlanPunition}).
If all the players $k\neq i$ play the strategy $\bar{\PP}(i) = (\bar{\PP}_k(i))_{k\neq i}$,
the player $i\in \mc{K}$ cannot obtain a utility greater than his min-max level $\upsilon_i\in \R$
characterized by (\ref{eq:NiveauMinmax}).

\subsubsection{Equilibrium strategy $\tau^{\star}=(\tau^{\star}_k)_{k\in \mc{K}} \in \T$}\label{sec:DemoStratEquil}

At the beginning of each block $b\geq 3 \in\mc{B}$, the equilibrium strategy is described
as follows:

\begin{itemize}
\item[$\bullet$] Player $k$ implements the decoding scheme
(Sec. \ref{sec:DemoDecodage}) and reconstructs the actions $\hat{a}_{-k}(b-2) \in \mc{A}_{-k}^n$
played by the other players $j\neq k$ during block $b-2 \in\mc{B}$.
\item[$\bullet$] Player $k$ implements the statistical test $\bd{E}_i^k$
defined section \ref{sec:DemoTestStat}, in order to detect possible unilateral deviations.
\item[$\bullet$] If the statistical test is negative, ($\forall b'\leq b,\;\forall i \in\mc{K},\quad\bd{E}_k^i(b) = 0$),
then player $k\in \mc{K}$ play the main plan $\PP^{\star}_k\in \Delta(\mc{A}_k)$ stated section
 \ref{sec:DemoPlanPrincipal} during every stage of block $b\in\mc{B}$.
\item[$\bullet$] If the statistical test is positive, ($\exists b'\leq b,\;\exists i \in\mc{K},\quad\bd{E}_k^i(b) = 1$),
then player $k\in \mc{K}$ play the punishment plan $\bar{\PP}_k(i)\in \Delta(\mc{A}_k)$ stated section
\ref{sec:DemoPlanPunition} corresponding to the player $i\in \mc{K}$ until the end of the last block $B\in\mc{B}$.
If several deviations are detected simultaneously $\bd{E}_k^i(b)=\bd{E}_j(b) = 1$,
then player $k\in \mc{K}$ punishes anyone of those players who is the
smaller, according to a total order over $\mc{K}$, previously fixed.
\end{itemize}
Over the first two blocks $b_1, b_2 \in\mc{B}$, players $\mc{K}$ play the main plan $\PP^{\star} \in \Delta(\mc{A})$.
The equilibrium strategy $\tau^{\star}=(\tau^{\star}_k)_{k\in \mc{K}} \in \T$
is defined at each stage $t\geq 1$ as follows:
\begin{eqnarray}
  \tau^{\star t}_k(h^t) &=& \left\{
          \begin{array}{lll}
           \PP^{\star}_k&\in \Delta(\mc{A}_k) \qquad &\text{while } \bd{E}_k^i(b) = 0, \quad  \forall i \neq k,\; \forall b \leq
           \lfloor\frac{t}{n} ,          \rfloor   \\
           \bar{\PP}_k(i)&\in \Delta(\mc{A}_k)  \qquad &\text{otherwise.}\\
          \end{array}
        \right.\label{StrategieOptimal}
\end{eqnarray}

\subsection{Condition (i) of definition \ref{def:EqInfini}: convergence of the utilities}\label{sec:DemoConvUtil}
Let us fix a parameter $\epsilon>0$ and prove that there exists a
$\bar{T}>1$ such that for all $T\geq \bar{T}$, the  utilities of the encoder and the players $\mc{K}$
 $(\sigma^{\star},\tau^{\star})\in\Sigma \times \mc{T}$
defined in Sec. \ref{sec:DemoStrategieCodage} and \ref{sec:DemoStratEquil},
are $\epsilon$-closed to the utility $U\in \conv u(\RR) \cap IR$.
Remark that $\epsilon>0$ and $\varepsilon>0$ are two distinct parameters.
Define the following event:
\begin{eqnarray}
\bd{E} = \begin{cases}
1 \text{ if } \qquad\qquad\exists b\in\mc{B}, \exists i,k\in\mc{K} \text{ such that }\bd{E}_k^i(b) = 1,\\
0 \text{ otherwise. } \label{eq:EvenementErreurJeu}
\end{cases}
\end{eqnarray}
When $\bd{E}=0$, then no unilateral deviation is detected during the course of the game.
\begin{lemma}\label{lemme:EvenementJeu}
Suppose that the encoder $\C$ and the players $\mc{K}$ implements the strategies
$(\sigma^{\star},\tau^{\star})\in\Sigma \times \mc{T}$.
Then for all $\varepsilon>0$, there exists a block length $n_1\in \N$, such that for all
$n\geq n_1$, the probability of event $\bd{E} = 1$ is bounded as follows:
\begin{eqnarray}
\PP(\bd{E} = 1) \leq 2\varepsilon \cdot B \cdot K^2. \label{eq:LemmeEvenementJeu}
\end{eqnarray}
\end{lemma}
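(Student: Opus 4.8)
The plan is to bound each individual false alarm $\PP(\bd{E}_k^i(b)=1)$ by $2\varepsilon$ and then sum over the $B\cdot K^2$ triples $(b,i,k)$ via a union bound. Since both the encoder and all players follow the prescribed strategies $(\sigma^{\star},\tau^{\star})$, nobody deviates: every player plays the main plan (\ref{PlanPrincipal}), so each block of actions $a_i^n(b-1)$ of player $i$ is drawn i.i.d. from the marginal $\PP^{\star}_i$. Consequently the test $\bd{E}_k^i$ can trigger only for one of two reasons, namely a decoding error at player $k$, or a spontaneous atypicality of an honestly generated sequence.

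First I would record the event inclusion
\begin{equation*}
\{\hat{a}_i^n(b-1)\notin A_{\varepsilon}^{\star{n}}(\PP^{\star}_i)\}\subseteq \{\hat{a}_i^n(b-1)\neq a_i^n(b-1)\}\cup\{a_i^n(b-1)\notin A_{\varepsilon}^{\star{n}}(\PP^{\star}_i)\},
\end{equation*}
which holds because, whenever the decoded and true sequences coincide, atypicality of the former forces atypicality of the latter. This splits the false-alarm probability into a decoding-error term and a pure-atypicality term. For the decoding-error term I would invoke Theorem \ref{theo:RobustSide}: since $U\in\conv u(\RR)\cap IR$ forces $\PP^{\star}\in\RR$, the distribution $\PP^{\star}$ satisfies the information constraint (\ref{eq:RobustSide}), VPM holds, and the code $\lambda$ has error probability $\PP_e(\lambda)\leq\varepsilon$ for $n$ large; in particular each decoder $k$ recovers the whole block (hence the $i$-th component) correctly except on an event of probability at most $\varepsilon$. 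Here the honest i.i.d. play is merely the nominal, non-deviating case already covered by the AVS guarantee, so no worst-case over $v_i$ is needed. For the atypicality term I would apply the law of large numbers through Lemma \ref{lemma:TypicalSequencesCond}: an i.i.d. sequence drawn from $\PP^{\star}_i$ falls outside $A_{\varepsilon}^{\star{n}}(\PP^{\star}_i)$ with probability at most $\varepsilon$ once $n$ is large enough. Taking $n_1$ to be the larger of the two resulting thresholds, both terms are at most $\varepsilon$, whence $\PP(\bd{E}_k^i(b)=1)\leq 2\varepsilon$ for every triple $(b,i,k)$.

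Finally, writing $\{\bd{E}=1\}=\bigcup_{b\in\mc{B}}\bigcup_{i,k\in\mc{K}}\{\bd{E}_k^i(b)=1\}$ and applying the union bound over these $B\cdot K^2$ events gives $\PP(\bd{E}=1)\leq 2\varepsilon\cdot B\cdot K^2$, which is the claim (\ref{eq:LemmeEvenementJeu}). The only genuinely delicate points are the event decomposition above and the observation that honest play does not require the full resilience of Theorem \ref{theo:RobustSide}; once these are in place the bound reduces to a clean combination of the coding theorem and the asymptotic equipartition property, and the constant $2\varepsilon\cdot B\cdot K^2$ arises exactly from the two $\varepsilon$-terms per triple, the $B$ blocks, and the $K^2$ ordered pairs $(i,k)$.
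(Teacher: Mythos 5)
Your skeleton is the right one and it matches the paper's: the event inclusion $\{\hat{a}_i^n\notin A_{\varepsilon}^{\star{n}}(\PP^{\star}_i)\}\subseteq\{\hat{a}_i^n\neq a_i^n\}\cup\{a_i^n\notin A_{\varepsilon}^{\star{n}}(\PP^{\star}_i)\}$, the atypicality term handled by Lemma \ref{lemma:TypicalSequencesCond}, the decoding term handled by Theorem \ref{theo:RobustSide}, and a union bound over the $B\cdot K^2$ triples. But there is a genuine gap in your very first step: under the prescribed profile $(\sigma^{\star},\tau^{\star})$ it is \emph{not} true that every block of actions is unconditionally i.i.d.\ from $\PP^{\star}$. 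The strategy $\tau^{\star}$ defined in (\ref{StrategieOptimal}) is adaptive: player $k$ plays $\PP^{\star}_k$ only \emph{while} all his tests are negative, and switches permanently to the punishment plan $\bar{\PP}_k(i)$ after a positive test. False alarms (spontaneous atypicality or decoding errors) occur with positive probability even under honest play, and on that event the actions of subsequent blocks are drawn from punishment distributions, for which the test statistics essentially always fire. Hence for a late block $b$ the unconditional probability $\PP(\bd{E}_k^i(b)=1)$ is of the order of $\PP(\text{some earlier false alarm})$, which is not bounded by $2\varepsilon$; your per-triple bound cannot be justified as stated, and a straight union bound with unconditional terms becomes circular.

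The fix --- and this is exactly what the paper does --- is to make the union bound sequential. Iterating $\PP(A\cup B)=\PP(A)+\PP(B\mid A^c)\PP(A^c)$ over blocks gives
\begin{equation*}
\PP(\bd{E}=1)\;\leq\;\sum_{b\in\mc{B}}\PP\bigg(\cup_{i,k\in\mc{K}}\big\{\bd{E}_k^i(b)=1\big\}\;\bigg|\;\cap_{i,k\in\mc{K}}\big\{\bd{E}_k^i(b-1)=0,\ldots,\bd{E}_k^i(b_1)=0\big\}\bigg),
\end{equation*}
and it is precisely conditional on no prior detection that all players have indeed been playing the main plan i.i.d.\ through block $b$, so that your two $\varepsilon$-bounds (Lemma \ref{lemma:TypicalSequencesCond} for atypicality, Theorem \ref{theo:RobustSide} for the decoding error) apply to each conditional term; these are the paper's inequalities (\ref{eq:DemoLemmeConv1}) and (\ref{eq:DemoLemmeConv2}). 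Summing the conditional bounds over the $B$ blocks and the $K^2$ pairs $(i,k)$ then yields $2\varepsilon\cdot B\cdot K^2$. Once this conditioning is inserted, the rest of your argument goes through verbatim.
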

The result of Lemma \ref{lemme:EvenementJeu} is useful for the proof of Lemma \ref{lemme:UtiliteEpsilon}.
\begin{lemma}\label{lemme:UtiliteEpsilon}
Suppose that the encoder $\C$ and the players $\mc{K}$ implement the strategies
$(\sigma^{\star},\tau^{\star})\in\Sigma \times \mc{T}$.
Then for all $\varepsilon>0$, there exists a block length $n_1\in \N$, such that for all
$n\geq n_1$, the expected utility satisfies the following equation~:
\begin{eqnarray}
\bigg|\gamma_k^T(\sigma^{\star},\tau^{\star}) - \E_{\PP^{\star}} \bigg[ u_k(\bd{a}_k,\bd{a}_{-k}) \bigg]\bigg|
\leq  4  \varepsilon \cdot \max_{a\in\mc{A}} |u_k(a)| \cdot  B\cdot K^2, \qquad \forall k\in\mc{K}. \label{eq:UtiliteEpsilon}
\end{eqnarray}
\end{lemma}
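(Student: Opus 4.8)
The plan is to compare the play actually generated by $(\sigma^{\star},\tau^{\star})$ with a ``reference'' play in which every player follows the main plan $\PP^{\star}$ at every stage, and to charge the entire discrepancy between the two to the event $\bd{E}=1$ of a (false) deviation detection, whose probability is already controlled by Lemma \ref{lemme:EvenementJeu}. Concretely, I would build a coupling. Let $(\bd{a}_t)_{t\geq 1}$ be the action profiles produced by $(\sigma^{\star},\tau^{\star})$ (with no genuine deviator), and let $(\tilde{\bd{a}}_t)_{t\geq 1}$ be the profiles obtained when all players play the main plan (\ref{PlanPrincipal}) unconditionally, so that $(\tilde{\bd{a}}_t)$ is i.i.d.\ with law $\PP^{\star}$ and $\E[u_k(\tilde{\bd{a}}_t)]=\E_{\PP^{\star}}[u_k]$ for every $t$. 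Both processes are defined on the same probability space using the same main-plan randomizations; by the definition of $\tau^{\star}$ in (\ref{StrategieOptimal}), a player departs from $\PP^{\star}_k$ only after some statistical test fires, i.e.\ only on the event $\bd{E}=1$ of (\ref{eq:EvenementErreurJeu}). Hence $\bd{a}_t=\tilde{\bd{a}}_t$ for all $t$ on $\{\bd{E}=0\}$.

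Next I would bound the per-stage gap. Since $\E_{\PP^{\star}}[u_k]=\E[u_k(\tilde{\bd{a}}_t)]$, and since the two processes coincide on $\{\bd{E}=0\}$, we may write
\[
\E_{\sigma^{\star},\tau^{\star}}[u_k(\bd{a}_t)]-\E_{\PP^{\star}}[u_k]
=\E\big[(u_k(\bd{a}_t)-u_k(\tilde{\bd{a}}_t))\,\un_{\{\bd{E}=1\}}\big].
\]
Using $|u_k(\bd{a}_t)-u_k(\tilde{\bd{a}}_t)|\leq 2\max_{a\in\mc{A}}|u_k(a)|$ then gives, uniformly in $t$,
\[
\big|\E_{\sigma^{\star},\tau^{\star}}[u_k(\bd{a}_t)]-\E_{\PP^{\star}}[u_k]\big|
\leq 2\max_{a\in\mc{A}}|u_k(a)|\cdot\PP(\bd{E}=1).
\]
Averaging over $t=1,\dots,T$ and recalling the definition (\ref{eq:UtilInfini}) of $\gamma_k^T$ yields the same bound for $|\gamma_k^T(\sigma^{\star},\tau^{\star})-\E_{\PP^{\star}}[u_k]|$. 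Finally I would invoke Lemma \ref{lemme:EvenementJeu}: for every $\varepsilon>0$ there is $n_1$ such that $\PP(\bd{E}=1)\leq 2\varepsilon\cdot B\cdot K^2$ for all $n\geq n_1$. Substituting gives precisely the announced estimate $4\varepsilon\cdot\max_{a\in\mc{A}}|u_k(a)|\cdot B\cdot K^2$.

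The only delicate point—which I would treat as the main obstacle—is to make the coupling rigorous, namely to verify that on $\{\bd{E}=0\}$ the realized actions are exactly the i.i.d.\ $\PP^{\star}$ profiles of the reference process. This holds because, by (\ref{StrategieOptimal}), the action each player chooses at every stage is dictated solely by the outcomes of the statistical tests, and $\{\bd{E}=0\}$ means that every test passes; in particular, a decoding error that leaves the decoded block typical never alters the utilities, since those depend only on the actions actually played and not on the decoded sequences. Granting this identification, the remainder of the argument reduces to the boundedness of $u_k$, the triangle inequality, and the already-established control of $\PP(\bd{E}=1)$.
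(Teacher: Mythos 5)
Your proof is correct, but it follows a genuinely different route from the paper's. The paper argues \emph{pathwise via typicality}: it decomposes the expectation defining $\gamma_k^T$ over the typical set $A_{\varepsilon}^{\star{T}}(\PP^{\star})$ and its complement, uses the fact that on a typical $T$-sequence the empirical average $\frac{1}{T}\sum_t u_k(a^t)$ is within $\varepsilon\cdot\max_a|u_k(a)|$ of $\E_{\PP^{\star}}[u_k(\bd{a})]$, and then controls $\PP(\bd{a}^T\notin A_{\varepsilon}^{\star{T}})$ by conditioning on $\bd{E}$ and invoking Lemma \ref{lemme:EvenementJeu}. You instead argue \emph{in expectation via a coupling}: you compare the realized play with an i.i.d.\ reference process $(\tilde{\bd{a}}_t)$ sharing the main-plan randomizations, note that the two coincide on $\{\bd{E}=0\}$, and charge the whole per-stage gap to $\PP(\bd{E}=1)$, so that $|\E[u_k(\bd{a}_t)]-\E_{\PP^{\star}}[u_k(\bd{a})]|\leq 2\max_a|u_k(a)|\cdot\PP(\bd{E}=1)$ uniformly in $t$; averaging and applying Lemma \ref{lemme:EvenementJeu} gives exactly the constant $4\varepsilon\cdot\max_a|u_k(a)|\cdot B\cdot K^2$. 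Your route buys two things: (i) it dispenses entirely with typicality of the $T$-length action sequence, needing only the exact identity $\E[u_k(\tilde{\bd{a}}_t)]=\E_{\PP^{\star}}[u_k(\bd{a})]$ rather than a law-of-large-numbers statement; (ii) it sidesteps the conditional-probability manipulations in the paper's chain (the paper's step (\ref{eq:SuiteTypiqueSachantE0}) tacitly treats the law of $\bd{a}^T$ given $\bd{E}=0$ as i.i.d.\ $\PP^{\star}$, which strictly requires exactly the coupling observation you make explicit, and also contains a sign/typo issue, $\in$ where $\notin$ is meant). What the paper's route buys in exchange is that the typicality machinery it sets up (display (\ref{eq:DemoLemmeBorneTypiqueUtil1})--(\ref{eq:DemoLemmeBorneTypiqueUtil5})) is reused verbatim later, in Lemma \ref{lemme:condEquil1} and in the typical-deviation case of Sec. \ref{sec:DemoCondEquil2Case}, where a pathwise bound on $\frac{1}{T}\sum_t u_k(a^t)$ (not merely an in-expectation one) is what is actually needed; your coupling identity does not by itself deliver those later steps. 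Your treatment of the delicate point is the right one: the induction showing that, stage by stage, a player's realized action departs from the shared randomization only after some test $\bd{E}_k^i(b)$ fires, together with the remark that decoding errors which leave the decoded block typical never influence the actions played, is precisely what makes $\bd{a}_t=\tilde{\bd{a}}_t$ on $\{\bd{E}=0\}$ legitimate.
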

For the parameter $\epsilon>0$ and a fixed number of block $B\in\N$, there
exists a parameter $\varepsilon>0$ and a block length $n_1\in\N$ such that
$4  \varepsilon \cdot \max_{a\in\mc{A}} |u_k(a)| \cdot  B\cdot K^2 \leq \epsilon$.
From Lemma \ref{lemme:UtiliteEpsilon}, the strategy defined over $T=n\cdot B$ stages
induce, for each player $k\in\mc{K}$, a utility that satisfies:
\begin{eqnarray}
\bigg|\gamma_k^T(\sigma^{\star},\tau^{\star}) - U_k^{\star} \bigg|
\leq  \epsilon, \qquad \forall k\in\mc{K}. \label{eq:UtiliteEpsilonU}
\end{eqnarray}
By repeating the strategies cyclically, we prove that there exists a $\bar{T}\geq\frac{N\cdot B}{\epsilon}$
such that for all $T'\geq \bar{T}$ and for all players $k\in \mc{K}$,
the expected $T'$ stage utility $ \gamma^{T'}(\sigma^{\star},\tau^{\star})$ is
$\epsilon$-closed of utility $U\in \conv u(\RR) \cap IR$.
Strategies $(\sigma^{\star},\tau^{\star} ) \in \Sigma \times \T$
satisfy the condition (i) of definition \ref{def:EqInfini}.

\begin{proof}[Lemma \ref{lemme:EvenementJeu}]
Denote $\hat{\bd{a}}_i^k(b)$ the sequence of actions of player $i\in\mc{K}$
observed by player $k\in\mc{K}$ over block $b\in\mc{B}$.
For all $\varepsilon>0$, there exists $n_1\in\N$ such that for all $n\geq n_1$, we have:
\begin{eqnarray}
& \PP\bigg( \bd{a}_i^n(b) \notin A_{\varepsilon}^{\star{n}}(\PP^{\star}_{i}) \bigg| \cap_{i,k\in\mc{K}}\bigg\{\bd{E}_k^i(b-1) = 0, \ldots, \bd{E}_k^i(b_1) = 0\bigg\}\bigg)\leq \varepsilon,& \forall i, k\in\mc{K},\;\forall b\in\mc{B},\nonumber \\ && \label{eq:DemoLemmeConv1}\\
& \PP\bigg(\hat{\bd{a}}_i^n(k,b) \neq  \bd{a}_i^n(b) \bigg| \cap_{i,k\in\mc{K}}\bigg\{\bd{E}_k^i(b-1) = 0, \ldots, \bd{E}_k^i(b_1) = 0\bigg\}\bigg)
\leq \varepsilon,& \forall i,k\in\mc{K},\;\forall b\in\mc{B}.\nonumber \\ &&\label{eq:DemoLemmeConv2}
\end{eqnarray}
Equations (\ref{eq:DemoLemmeConv1}) and (\ref{eq:DemoLemmeConv2})
come from the definition of strategies $(\sigma^{\star},\tau^{\star} ) \in \Sigma \times \T$.
When no deviation is detected, the players implement the main plan (Sec. \ref{sec:DemoPlanPrincipal}) by playing i.i.d. the mixed action
 $\PP^{\star} \in\Delta(\mc{A})$. \\
Equation (\ref{eq:DemoLemmeConv1}) is a consequence of Lemma \ref{lemma:TypicalSequencesCond} for the typical sequences and (\ref{eq:DemoLemmeConv2}) is a consequence of the coding result stated by Theorem \ref{theo:RobustSide} in Sec.
\ref{sec:subsec-resilient-source-coding} for an i.i.d. information source $\PP^{\star} \in \Delta(\mc{A})$.
More precisely, this inequality is a consequence of (\ref{eq:ErrorProbaKStrat}) that
guarantees at the beginning of block $b\in \mc{B}$, the players observe the sequence of actions played by the
other players over the block $b-2\in\mc{B}$ with probability $1- \varepsilon$. \\\\
Let us evaluate the probability of event $\bd{E}=1$.
\begin{tiny}
\begin{eqnarray}
\PP(\bd{E} = 1) &=&\PP\bigg(\cup_{b\in\mc{B},\atop i,k\in\mc{K}} \bigg\{\bd{E}_k^i(b) = 1\bigg\}\bigg) \label{eq:preuveConvergenceUtil1}\\
&\leq&   \sum_{b\in\mc{B}} \PP\bigg( \cup_{i,k\in\mc{K}} \bd{E}_k^i(b) = 1\bigg| \cap_{i,k\in\mc{K}}\bigg\{\bd{E}_k^i(b-1) = 0, \ldots, \bd{E}_k^i(b_1) = 0\bigg\}\bigg) \label{eq:preuveConvergenceUtil2}\\
&\leq&  \sum_{b\in\mc{B}, \atop i,k\in\mc{K}} \PP\bigg( \bigg\{\bd{a}_i^n(b) \notin A_{\varepsilon}^{\star{n}}(\PP^{\star}_{i})\bigg\}
\cup\bigg\{\hat{\bd{a}}_i^n(k,b) \neq  \bd{a}_i^n(b)\bigg\}\bigg| \cap_{i,k\in\mc{K}}\bigg\{\bd{E}_k^i(b-1) = 0, \ldots, \bd{E}_k^i(b_1) = 0\bigg\}\bigg)\nonumber\\&& \label{eq:preuveConvergenceUtil3}\\
&\leq&  \sum_{b\in\mc{B}, \atop i,k\in\mc{K}}
\PP\bigg( \bd{a}_i^n(b) \notin A_{\varepsilon}^{\star{n}}(\PP^{\star}_{i}) \bigg| \cap_{i,k\in\mc{K}}\bigg\{\bd{E}_k^i(b-1) = 0, \ldots, \bd{E}_k^i(b_1) = 0\bigg\}\bigg)\nonumber\\
&+& \sum_{b\in\mc{B}, \atop i,k\in\mc{K}}
\PP\bigg(\hat{\bd{a}}_i^n(k,b) \neq  \bd{a}_i^n(b) \bigg| \cap_{i,k\in\mc{K}}\bigg\{\bd{E}_k^i(b-1) = 0, \ldots, \bd{E}_k^i(b_1) = 0\bigg\}\bigg)\label{eq:preuveConvergenceUtil4}\\
&\leq&  2\varepsilon \cdot B \cdot K^2.  \label{eq:preuveConvergenceUtil5}
\end{eqnarray}
\end{tiny}
Equality (\ref{eq:preuveConvergenceUtil1}) comes from the definition of error event
provided by (\ref{eq:EvenementErreurJeu}).\\
Inequality (\ref{eq:preuveConvergenceUtil2}) comes from the property $\PP(A\cup B) = \PP(A) + \PP(B|A^c)\cdot \PP(A^c)$.\\
Inequalities (\ref{eq:preuveConvergenceUtil3}) and (\ref{eq:preuveConvergenceUtil4}) comes from the inequality of Boole.\\
Inequality (\ref{eq:preuveConvergenceUtil5}) comes from the inequalities (\ref{eq:DemoLemmeConv1}) and (\ref{eq:DemoLemmeConv2}).\\\\
As a conclusion, for all $\varepsilon>0$, there exists a $n_1\in\N$ such that for all
$n\geq n_1$, the condition (\ref{eq:LemmeEvenementJeu}) is satisfied.
\end{proof}

\begin{proof}[Lemma \ref{lemme:UtiliteEpsilon}]
When the event $\bd{E}=0$ occurs, then all the statistical tests
of players $\mc{K}$ at the beginning of each block
$b\in\mc{B}$ are negative (i.e.
$\bd{E}_k^i(b) = 0$, $\forall b\in\mc{B}$, $\forall i,k \in\mc{K}$).
In this case, the strategy $\tau^{\star}\in\T$ indicate
that the sequence of actions are generated with the same mixed strategy
 $\PP^{\star} \in \prod_{k\in \mc{K}}\Delta(\mc{A}_k) $ from stage to stage.
From the proof of Lemma \ref{lemme:EvenementJeu},
for all $\varepsilon>0$, there exists $n_1 \in \N$ such that for all
$n\geq n_1$, the sequences of block actions are typical with large probability.
More precisely, because $T = n \cdot B\geq n \geq n_1$, the sequences of actions $\bd{a}^T$
are typical with large probability.
\begin{eqnarray}
&&\PP\bigg(\bd{a}^n \in A_{\varepsilon}^{\star{n}}(\PP^{\star})\bigg| \bd{E}= 0  \bigg) \leq \varepsilon, \label{eq:SuiteTypiqueSachantE0}\\
&\Longrightarrow\qquad& \PP\bigg(\bd{a}^T \in A_{\varepsilon}^{\star{T}}(\PP^{\star})\bigg| \bd{E}= 0  \bigg) \leq \varepsilon. \label{eq:SuiteTypiqueSachantE0bis}
\end{eqnarray}
Suppose that $n\geq n_1$ defined from Lemma \ref{lemme:EvenementJeu}.
Recall the definition of the typical sequences and some implications thereof:
\begin{eqnarray}
&& {a}^T \in A_{\varepsilon}^{\star{T}}(\PP^{\star})\label{eq:DemoLemmeBorneTypiqueUtil1} \\
&\Longleftrightarrow& \sum_{a\in\mc{A}} \bigg| \frac{N(a | a^t)}{T} - \PP^{\star}(a) \bigg| \leq  \varepsilon\label{eq:DemoLemmeBorneTypiqueUtil2}\\
&\Longrightarrow& \sum_{a\in\mc{A}} \bigg| \frac{N(a | a^t)}{T}u_k(a) - \PP^{\star}(a)u_k(a) \bigg| \leq  \varepsilon \cdot \max_{a\in\mc{A}} |u_k(a)|\label{eq:DemoLemmeBorneTypiqueUtil3}\\
&\Longrightarrow&  \bigg| \sum_{a\in\mc{A}}\frac{N(a | a^t)}{T}u_k(a) - \sum_{a\in\mc{A}}\PP^{\star}(a)u_k(a) \bigg| \leq  \varepsilon \cdot \max_{a\in\mc{A}} |u_k(a)|\label{eq:DemoLemmeBorneTypiqueUtil4}\\
&\Longrightarrow&  \bigg| \frac{1}{T} \sum_{t = 1}^T u_k(a^t) - \E_{\PP^{\star}} \bigg[ u_k(\bd{a}) \bigg] \bigg| \leq  \varepsilon \cdot \max_{a\in\mc{A}} |u_k(a)|.  \label{eq:DemoLemmeBorneTypiqueUtil5}
\end{eqnarray}
Inequality (\ref{eq:DemoLemmeBorneTypiqueUtil2}) comes from the definition \ref{def:TypicalSequencesCond} of typical sequences.\\
Inequalities (\ref{eq:DemoLemmeBorneTypiqueUtil3}) come from the homogeneity property.\\
Equation (\ref{eq:DemoLemmeBorneTypiqueUtil4}) comes from the triangle inequality.\\
Equation (\ref{eq:DemoLemmeBorneTypiqueUtil5}) is a reformulation of (\ref{eq:DemoLemmeBorneTypiqueUtil4}) and allows us to obtain the following equations~:

\begin{eqnarray}
&&\bigg|\gamma_k^T(\sigma^{\star},\tau^{\star}) - \E_{\PP^{\star}} \bigg[ u_k(\bd{a}) \bigg]\bigg| \\
&= & \bigg|\sum_{a^T \in \mc{A}^T} \PP_{\sigma^{\star},\tau^{\star}}(a^T) \cdot  \frac{1}{T} \sum_{t = 1}^T u_k(a^t)  - \E_{\PP^{\star}} \bigg[ u_k(\bd{a}) \bigg]\bigg| \label{eq:DemoLemmeBorneUtil1}\\
&= & \bigg|\sum_{a^T \in A_{\varepsilon}^{\star{T}}(\PP^{\star})} \PP_{\sigma^{\star},\tau^{\star}}(a^T) \cdot  \frac{1}{T} \sum_{t = 1}^T u_k(a^t)
 + \sum_{{a}^T \notin A_{\varepsilon}^{\star{T}}(\PP^{\star})} \PP_{\sigma^{\star},\tau^{\star}}(a^T) \cdot  \frac{1}{T} \sum_{t = 1}^T u_k(a^t) - \E_{\PP^{\star}} \bigg[ u_k(\bd{a}) \bigg]\bigg| \nonumber \\ && \label{eq:DemoLemmeBorneUtil2}\\
&\leq & \sum_{a^T \in A_{\varepsilon}^{\star{T}}(\PP^{\star})} \PP_{\sigma^{\star},\tau^{\star}}(a^T) \cdot
\bigg|  \frac{1}{T} \sum_{t = 1}^T u_k(a^t)   -  \E_{\PP^{\star}} \bigg[ u_k(\bd{a}) \bigg]\bigg|
 + \sum_{{a}^T \notin A_{\varepsilon}^{\star{T}}(\PP^{\star})} \PP_{\sigma^{\star},\tau^{\star}}(a^T) \cdot  \frac{1}{T} \sum_{t = 1}^T |u_k(a^t)| \nonumber \\ &&\label{eq:DemoLemmeBorneUtil3}\\
&\leq & \sum_{a^T \in A_{\varepsilon}^{\star{T}}(\PP^{\star})} \PP_{\sigma^{\star},\tau^{\star}}(a^T)
\cdot \varepsilon \cdot \max_{a\in\mc{A}} |u_k(a)|
 + \sum_{{a}^T \notin A_{\varepsilon}^{\star{T}}(\PP^{\star})} \PP_{\sigma^{\star},\tau^{\star}}(a^T) \cdot \varepsilon \cdot \max_{a\in\mc{A}} |u_k(a)| \label{eq:DemoLemmeBorneUtil4} \\
 &\leq &  \varepsilon \cdot \max_{a\in\mc{A}} |u_k(a)| + \PP\bigg(\bd{a}^T \notin A_{\varepsilon}^{\star{T}}\bigg) \cdot  \varepsilon \cdot \max_{a\in\mc{A}} |u_k(a)| \label{eq:DemoLemmeBorneUtil5}\\
  &\leq &   \max_{a\in\mc{A}} |u_k(a)| \cdot \bigg( \varepsilon +
  \PP(\bd{a}^T \notin A_{\varepsilon}^{\star{T}}|\bd{E}=0 )\cdot \PP(\bd{E}=0 )
  +   \PP(\bd{a}^T \notin A_{\varepsilon}^{\star{T}}|\bd{E}=1 )\cdot \PP(\bd{E}=1 )
  \bigg)    \label{eq:DemoLemmeBorneUtil6}\\
    &\leq &   \max_{a\in\mc{A}} |u_k(a)| \cdot \bigg(\varepsilon +
  \PP(\bd{a}^T \notin A_{\varepsilon}^{\star{T}}|\bd{E}=0 ) +  \PP(\bd{E}=1 )
  \bigg)     \label{eq:DemoLemmeBorneUtil7}\\
      &\leq &   \max_{a\in\mc{A}} |u_k(a)| \cdot \bigg(2\varepsilon  +  \PP(\bd{E}=1 )
  \bigg)     \label{eq:DemoLemmeBorneUtil8}\\
        &\leq &   \max_{a\in\mc{A}} |u_k(a)| \cdot \bigg(2\varepsilon  +  2\varepsilon \cdot B \cdot K^2
  \bigg)      \label{eq:DemoLemmeBorneUtil9}\\
          &\leq &  4  \varepsilon \cdot \max_{a\in\mc{A}} |u_k(a)| \cdot  B\cdot K^2.       \label{eq:DemoLemmeBorneUtil10}
\end{eqnarray}
Equalities (\ref{eq:DemoLemmeBorneUtil1}) and (\ref{eq:DemoLemmeBorneUtil2}) come from the definition
 of the expected $T$-stages utility, see (\ref{eq:UtilInfini}).\\
Inequality (\ref{eq:DemoLemmeBorneUtil3}) comes from the triangle inequality.\\
Inequality (\ref{eq:DemoLemmeBorneUtil4}) comes from (\ref{eq:DemoLemmeBorneTypiqueUtil5}).\\
Inequalities (\ref{eq:DemoLemmeBorneUtil5}), (\ref{eq:DemoLemmeBorneUtil6}) and (\ref{eq:DemoLemmeBorneUtil7})
are reformulation of (\ref{eq:DemoLemmeBorneUtil4}).\\
Inequality (\ref{eq:DemoLemmeBorneUtil8}) comes from (\ref{eq:SuiteTypiqueSachantE0bis})
because  by assumption $T\geq n\geq n_1$.\\
Inequality (\ref{eq:DemoLemmeBorneUtil9}) comes from Lemma  \ref{lemme:EvenementJeu} because by assumption $n\geq n_1$.\\
Inequality (\ref{eq:DemoLemmeBorneUtil10}) is a reformulation with $B\cdot K^2 \geq 1$ which concludes the proof of Lemma \ref{lemme:UtiliteEpsilon}.\\\\
As a conclusion, for all $\varepsilon>0$, there exists $n_1\in\N$ such that for all
$n\geq n_1$, the condition (\ref{eq:UtiliteEpsilon}) is satisfied.
\end{proof}

\subsection{Condition (ii) of definition \ref{def:EqInfini} }\label{sec:DemoCondEquil}
In order to prove that the strategies $(\sigma^{\star}, \tau^{\star})\in \Sigma \times \T $
support a uniform equilibrium, we suppose that player $k\in \mc{K}$ implement
a deviating strategy  $\tau'_k \neq\tau^{\star}_k$ and we prove that
the deviation gain is less than $\epsilon>0$.\\\\
\subsubsection{First case: non-typical deviations}\label{sec:DemoCondEquil1Case}
Let $b - 1 \in\mc{B}$ the first action block over which the sequence of actions
$a_k(b-1) \notin A_{\varepsilon}^{\star{n}}(\PP^{\star}_k)$ of player $k\in\mc{K}$
is not typical. Denote $t_1(b)$ and $t_n(b)$ the indexes of the first and
the last stage of block $b\in\mc{B}$. \\\\

\textbf{Evaluate}, for player $k\in \mc{K}$, the utilities associated with the
strategies  $\tau^{\star}_k$ and $\tau'_k$.
\begin{eqnarray}
\gamma_k^T(\sigma^{\star}, \tau^{\star}_k,\tau^{\star}_{-k})
&=&   \E_{\sigma^{\star},\tau^{\star}_k,\tau^{\star}_{-k}} \bigg[ \frac{1}{T} \sum_{t=1}^T u_k(a^t) \bigg],\label{folktheoCondisigma} \\
\gamma_k^T(\sigma^{\star}, \tau'_k,\tau^{\star}_{-k})
&=&  \frac{1}{T} \E_{\sigma^{\star},\tau_k',\tau^{\star}_{-k}} \bigg[  \sum_{t=1}^T u_k(a^t) \bigg]\\
&=& \frac{1}{T}  \E_{\sigma^{\star},\tau_k',\tau^{\star}_{-k}}\bigg[
\sum_{t=1}^{t_n(b-2)} u_k(a^t) + \sum_{t=t_1(b-1)}^{t_n(b)} u_k(a^t) + \sum_{t=t_1(b+1)}^{T} u_k(a^t) \bigg].\nonumber \\ &&
\end{eqnarray}

\textbf{Approximation of the utility} associated with the strategies
$(\sigma^{\star},\tau^{\star})\in\Sigma \times \T$ between blocks $b+1\in\mc{B}$ and $B\in\mc{B}$.
\begin{lemma}\label{lemme:condEquil1}
Suppose that the encoder $\C$ and the players $\mc{K}$ follow the strategies
$(\sigma^{\star},\tau^{\star})\in\Sigma \times \mc{T}$.
Then for all $\epsilon>0$ and for all number of blocks $B\in\N$, there exists a block length $n_1$
such that for all $n \geq n_1 $, the following inequality is satisfied for all $1\leq b\leq B-1$:
\begin{eqnarray}
\E_{\sigma^{\star},\tau_k^{\star},\tau^{\star}_{-k}}\bigg[  \sum_{t=t_1(b+1)}^T u_k(a^t) \bigg] \geq n(B-b) \cdot (U_k^{\star} - \frac{\epsilon}{2}).\label{eq:LemmeCondEquil1}
\end{eqnarray}
\end{lemma}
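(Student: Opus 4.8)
The plan is to show that, under the equilibrium profile $(\sigma^{\star},\tau^{\star})$, the tail sum $\sum_{t=t_1(b+1)}^T u_k(a^t)$ behaves, up to a small error, like an i.i.d.\ sum of $n(B-b)$ copies of $u_k$ drawn from $\PP^{\star}$, whose expectation is exactly $n(B-b)\,U_k^{\star}$ because $U_k^{\star}=\E_{\PP^{\star}}[u_k(\bd{a})]$ for the mixed action $\PP^{\star}$ defining the main plan. This is essentially the one-sided, tail version of Lemma~\ref{lemme:UtiliteEpsilon}, and the only genuinely new requirement is that the bound be uniform in $b$.

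First I would reuse the error event $\bd{E}$ of (\ref{eq:EvenementErreurJeu}) and recall, via Lemma~\ref{lemme:EvenementJeu}, that for every $\varepsilon>0$ there is an $n_1$ with $\PP(\bd{E}=1)\leq 2\varepsilon\cdot B\cdot K^2$ for all $n\geq n_1$; this is where the threshold $n_1$ in the statement originates. The key structural point is that on $\{\bd{E}=0\}$ no statistical test ever fires, so $\tau^{\star}$ prescribes the main plan at every stage and the realized action sequence — in particular its restriction to blocks $b+1,\ldots,B$ — is drawn i.i.d.\ from $\PP^{\star}$.

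Concretely, I would couple the realized process with the auxiliary process in which all players play $\PP^{\star}$ i.i.d.\ over the whole horizon, ignoring the tests. Writing $S$ and $\tilde S$ for the respective tail sums $\sum_{t=t_1(b+1)}^{T}u_k(a^t)$, the two processes coincide on $\{\bd{E}=0\}$, so $S=\tilde S$ there, while on $\{\bd{E}=1\}$ at most $n(B-b)$ stages differ, each contributing at most $\max_{a\in\mc{A}}|u_k(a)|$ in absolute value. Since $\E[\tilde S]=n(B-b)\,U_k^{\star}$ exactly, this gives
\begin{equation}
\E_{\sigma^{\star},\tau^{\star}}[S]\;\geq\;n(B-b)\,U_k^{\star}-2\,n(B-b)\,\Big(\max_{a\in\mc{A}}|u_k(a)|\Big)\,\PP(\bd{E}=1),
\end{equation}
so that, after inserting $\PP(\bd{E}=1)\leq 2\varepsilon B K^2$, the per-stage error is at most $4\varepsilon\,(\max_{a}|u_k(a)|)\,B K^2$, which is independent of both $b$ and $n$.

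Finally, since $B$, $K$ and $\max_{a}|u_k(a)|$ are fixed before $\varepsilon$ is chosen, I would pick $\varepsilon$ small enough that $4\varepsilon\,(\max_{a}|u_k(a)|)\,B K^2\leq \tfrac{\epsilon}{2}$ and take $n\geq n_1$; the previous display then becomes $\E_{\sigma^{\star},\tau^{\star}}[S]\geq n(B-b)(U_k^{\star}-\tfrac{\epsilon}{2})$ simultaneously for all $1\leq b\leq B-1$, which is exactly the claim. The one delicate point I would spell out is the coupling step: one must verify that conditioning on $\{\bd{E}=0\}$ does not bias the tail law away from i.i.d.\ $\PP^{\star}$, which holds precisely because on that event the trigger never activates and realized play equals the main plan stage by stage. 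An entirely equivalent route, matching the style of the earlier proofs, is to redo the typicality bookkeeping of (\ref{eq:DemoLemmeBorneTypiqueUtil1})--(\ref{eq:DemoLemmeBorneUtil10}) with the block range restricted to $b+1,\ldots,B$ rather than the full horizon.
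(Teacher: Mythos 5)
Your argument is correct, and its primary route is genuinely different from the paper's. The paper proves Lemma \ref{lemme:condEquil1} by rerunning the typicality bookkeeping of Lemma \ref{lemme:UtiliteEpsilon} with the stage range restricted to blocks $b+1,\ldots,B$ --- essentially the chain (\ref{eq:DemoLemmeBorneUtil1})--(\ref{eq:DemoLemmeBorneUtil10}) over the shorter horizon, which is exactly the ``equivalent route'' you relegate to your last sentence --- obtaining $\big|\E_{\sigma^{\star},\tau^{\star}}\big[\frac{1}{n(B-b)}\sum_{t=t_1(b+1)}^{T}u_k(a^t)\big]-\E_{\PP^{\star}}[u_k(\bd{a})]\big|\leq 4\varepsilon\max_{a\in\mc{A}}|u_k(a)|\cdot B K^2$ and then choosing $\varepsilon$ so that this error is at most $\epsilon/2$. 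You instead couple the realized process with an auxiliary process playing $\PP^{\star}$ i.i.d.\ throughout: on $\{\bd{E}=0\}$ the two coincide stage by stage, so $S=\tilde{S}$ there, while on $\{\bd{E}=1\}$ the tail sums differ by at most $2n(B-b)\max_{a\in\mc{A}}|u_k(a)|$, and Lemma \ref{lemme:EvenementJeu} bounds $\PP(\bd{E}=1)$. Both arguments hinge on the same two ingredients --- Lemma \ref{lemme:EvenementJeu} and the final choice of $\varepsilon$ as a function of $(\epsilon, B, K, \max_a|u_k(a)|)$ --- and both give a per-stage error $4\varepsilon\max_a|u_k(a)|BK^2$ that is independent of $b$ and $n$, so the quantifier order of the lemma is respected either way. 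What your coupling buys is economy and transparency: you never need the tail sequence to be typical, only the event-wise identity on $\{\bd{E}=0\}$, and since $\E[\tilde{S}]=n(B-b)U_k^{\star}$ exactly, all the error sits in the single term $\E\big[(S-\tilde{S})\mathbf{1}_{\{\bd{E}=1\}}\big]$; this also disposes cleanly of the ``delicate point'' you raise, because conditioning on $\{\bd{E}=0\}$ genuinely does bias the law of the tail (it conditions on tests passing), but your indicator decomposition never invokes that conditional law. What the paper's route buys is uniformity of machinery with Lemma \ref{lemme:UtiliteEpsilon}, at the cost of repeating the typical/non-typical split. One cosmetic slip: ``each contributing at most $\max_{a\in\mc{A}}|u_k(a)|$'' should be $2\max_{a\in\mc{A}}|u_k(a)|$ per stage for the \emph{difference} of the two sums, but your displayed inequality already carries the correct factor of $2$.
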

\begin{proof}[Lemma \ref{lemme:condEquil1}]
Let us fix the parameter $\epsilon>0$ and suppose that the encoder $\C$
and the players $\mc{K}$ follows the strategies $(\sigma^{\star},\tau^{\star})\in\Sigma \times \mc{T}$.
This proof is built on Lemma \ref{lemme:UtiliteEpsilon} that prove for all
$\varepsilon>0$, there exists a block length $n_1\in \N$, such that for all
$n\geq n_1$, the expected utility satisfies the following equation~:
\begin{eqnarray}
\bigg|\gamma_k^T(\sigma^{\star},\tau^{\star}) - \E_{\PP^{\star}} \bigg[ u_k(\bd{a}_k,\bd{a}_{-k}) \bigg]\bigg|
\leq  4  \varepsilon \cdot \max_{a\in\mc{A}} |u_k(a)| \cdot  B\cdot K^2, \qquad \forall k\in\mc{K}. \label{eq:UtiliteEpsilon}
\end{eqnarray}
For a fixed number of blocks $B\in\N$, we choose $\varepsilon>0$ such that
$ \epsilon\geq 4  \varepsilon \cdot \max_{a\in\mc{A}} |u_k(a)| \cdot  B\cdot K^2)$.
Using the same reasoning as in Lemma  \ref{lemme:UtiliteEpsilon}, we prove that for all
$\varepsilon>0$, there exists a block length $n_1\in \N$, such that for all $n\geq n_1$,
the expected utility satisfies the following equation for all $k\in\mc{K}$:
\begin{footnotesize}
\begin{eqnarray}
&&\bigg|\E_{\sigma^{\star},\tau_k^{\star},\tau^{\star}_{-k}}\bigg[\frac{1}{n(B-b)}  \sum_{t=t_1(b+1)}^T u_k(a^t) \bigg]  - \E_{\PP^{\star}} \bigg[ u_k(\bd{a}_k,\bd{a}_{-k}) \bigg]\bigg|
\leq  4  \varepsilon \cdot \max_{a\in\mc{A}} |u_k(a)| \cdot  B\cdot K^2,\nonumber \\ &&\label{eq:UtiliteEpsilonB1}\\
&\Longrightarrow & \bigg|\E_{\sigma^{\star},\tau_k^{\star},\tau^{\star}_{-k}}\bigg[ \sum_{t=t_1(b+1)}^T u_k(a^t) \bigg]  - n(B-b)\cdot U_k^{\star}\bigg|
\leq  n(B-b)\cdot4  \varepsilon \cdot \max_{a\in\mc{A}} |u_k(a)| \cdot  B\cdot K^2,  \label{eq:UtiliteEpsilonB2}\\
&\Longrightarrow & \E_{\sigma^{\star},\tau_k^{\star},\tau^{\star}_{-k}}\bigg[ \sum_{t=t_1(b+1)}^T u_k(a^t) \bigg] \geq n(B-b)\cdot (U_k^{\star} - 4  \varepsilon \cdot \max_{a\in\mc{A}} |u_k(a)| \cdot  B\cdot K^2).  \label{eq:UtiliteEpsilonB3}
\end{eqnarray}
\end{footnotesize}
For a fixed block number $B\in\N$, we choose the parameter $\frac{\epsilon}{2} \geq 4  \varepsilon \cdot \max_{a\in\mc{A}} |u_k(a)| \cdot  B\cdot K^2 $ and the block length $n_1\in\N$ that satisfies (\ref{eq:DemoLemmeConv1}) and (\ref{eq:DemoLemmeConv2}) of Lemma \ref{lemme:EvenementJeu}.
We obtain the inequality (\ref{eq:LemmeCondEquil1}) of Lemma \ref{lemme:condEquil1}.
\end{proof}

\textbf{Upper bound on the deviation gain} obtained by player $k\in\mc{K}$
using the deviation strategy  $\tau_k'\in\T_k$, over blocks $b-1\in\mc{B}$ and $b\in\mc{B}$.
\begin{lemma}\label{lemme:condEquil2}
For every block $b\in \mc{B}$, the following inequality is satisfied:
\begin{eqnarray}
\E_{\sigma^{\star},\tau_k',\tau^{\star}_{-k}}\bigg[  \sum_{t=t_1(b-1)}^{t_n(b)} u_k(a^t) \bigg] &\leq&   2n \cdot\max_{a\in\mc{A}}|u_k(a)|, \qquad \forall k\in\mc{K}. \label{eq:LemmeCondEquil2}
 \end{eqnarray}
\end{lemma}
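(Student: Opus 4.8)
The plan is to establish (\ref{eq:LemmeCondEquil2}) as a crude uniform bound that holds for \emph{every} deviation strategy $\tau_k'\in\T_k$, exploiting only the finiteness of the stage payoff and the fact that the summation window spans exactly two blocks. Accordingly, the strategy of the proof is to replace each realized stage utility by its worst-case magnitude and simply count stages.

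First I would count stages. By the block decomposition of Sec.~\ref{sec:DemoSchemaCodage}, each block consists of $n$ consecutive stages, so block $b-1$ runs over the stages $t_1(b-1),\dots,t_n(b-1)$ and block $b$ over $t_1(b),\dots,t_n(b)$. Hence the index $t$ in the sum ranges over precisely the $2n$ stages of these two consecutive blocks. Second, the key (and essentially only) estimate is the pointwise bound on the one-shot payoff: for any realized action profile $a^t\in\mc{A}$ and any stage $t$ one has $u_k(a^t)\le |u_k(a^t)|\le \max_{a\in\mc{A}}|u_k(a)|$. Crucially, this bound is insensitive to the strategies that generated $a^t$, and in particular to the deviation $\tau_k'$. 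Summing over the $2n$ stages and taking the expectation under $\PP_{\sigma^{\star},\tau_k',\tau^{\star}_{-k}}$, which preserves the inequality, yields
\[
\E_{\sigma^{\star},\tau_k',\tau^{\star}_{-k}}\bigg[\sum_{t=t_1(b-1)}^{t_n(b)} u_k(a^t)\bigg]
\;\le\; \sum_{t=t_1(b-1)}^{t_n(b)} \max_{a\in\mc{A}}|u_k(a)|
\;=\; 2n\cdot\max_{a\in\mc{A}}|u_k(a)|,
\]
which is exactly (\ref{eq:LemmeCondEquil2}).

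There is no genuine obstacle here: the lemma deliberately uses the coarsest available bound, and its entire content is the observation that a single deviation can inflate player $k$'s payoff on at most the two-block window $\{b-1,b\}$ (the block $b-1$ on which $\bd{a}_k^n$ first fails to be typical, and the following block $b$ during which the test is still being processed). The force of the argument is deferred to the comparison with Lemma~\ref{lemme:condEquil1}: the bounded gain of at most $2n\,\max_{a\in\mc{A}}|u_k(a)|$ over these two blocks will be dominated by the guaranteed loss $n(B-b)\bigl(U_k^{\star}-\tfrac{\epsilon}{2}\bigr)$ incurred over the subsequent $B-b$ blocks of the punishment phase, and the choice of $B$ in (\ref{eq:DemoConditionB}) is precisely what forces the net deviation gain below $\epsilon$.
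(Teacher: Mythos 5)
Your proof is correct and follows essentially the same route as the paper: the paper's proof likewise bounds each stage utility pointwise by $\max_{a\in\mc{A}}|u_k(a)|$ and counts the $t_n(b)-t_1(b-1)+1 = 2n$ stages in the two-block window, with the expectation preserving the inequality. Your additional remarks on how the bound feeds into the comparison with Lemma~\ref{lemme:condEquil1} are accurate but not part of the lemma's proof itself.
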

\begin{proof}[Lemma \ref{lemme:condEquil2}]
The proof is direct.
\begin{eqnarray}
\E_{\sigma^{\star},\tau_k',\tau^{\star}_{-k}}\bigg[  \sum_{t=t_1(b-1)}^{t_n(b)} u_k(a^t) \bigg]
&\leq & \E_{\sigma^{\star},\tau_k',\tau^{\star}_{-k}}\bigg[  (t_n(b) - t_1(b-1) + 1) \cdot\max_{a\in\mc{A}} |u_k(a)| \bigg]\\
&\leq&   2n \cdot\max_{a\in\mc{A}}|u_k(a)|, \qquad \forall k\in\mc{K}.
\end{eqnarray}
The deviation utility satisfies (\ref{eq:LemmeCondEquil2}).
\end{proof}

\textbf{Upper bound on the utility of player $k\in\mc{K}$ during the punishment phase}
induced by the prescribed strategies  $(\sigma^{\star},\tau^{\star})\in\Sigma \times \T$.
\begin{lemma}\label{lemme:condEquil3}
Suppose that $b-1\in\mc{B}$ is the first block on which the sequence of actions
$a_k(b-1) \notin A_{\varepsilon}^{\star{n}}(\PP^{\star}_k)$ of player $k\in\mc{K}$
is not typical. Suppose the block length satisfies $n\geq n_1$, defined
by (\ref{eq:DemoLemmeConv1}) and (\ref{eq:DemoLemmeConv2}).
The prescribed strategies $(\sigma^{\star},\tau^{\star})\in\Sigma \times \T$, induce the following equation:
\begin{eqnarray}
\E_{\sigma^{\star},\tau_k',\tau^{\star}_{-k}}\bigg[  \sum_{t=t_1(b+1)}^T u_k(a^t) \bigg]
& \leq& \varepsilon\cdot\max_{a\in\mc{A}}|u_k(a)| +   n(B - b) \cdot \upsilon_k. \label{eq:LemmeCondEquil3}
 \end{eqnarray}
\end{lemma}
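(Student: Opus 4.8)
The plan is to exploit the fact that, as soon as player $k$'s block-$(b-1)$ actions leave the typical set, the resilient code of Theorem \ref{theo:RobustSide} forces every other player to detect the deviation and to switch to the min-max punishment profile $\bar{\PP}_{-k}(k)$, after which no stage can give player $k$ more than $\upsilon_k$ in expectation. To make this precise, note that since $b-1$ is by hypothesis the first block on which $a_k^n(b-1)\notin A_{\varepsilon}^{\star{n}}(\PP^{\star}_k)$, and since $\lambda$ is resilient to the single deviation of player $k$, each decoder $j\neq k$ recovers $\hat{a}_k^n(j,b-1)=a_k^n(b-1)$ at the end of block $b$ with probability at least $1-\PP_e(\lambda)$. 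On this event the test (\ref{eq:TestStat}) returns $\bd{E}_j^k(b+1)=1$; moreover, each $j\neq k$ plays $\PP_j^{\star}$ i.i.d. on the main plan, so $a_j^n(b-1)$ is typical with probability $\ge 1-\varepsilon$ by (\ref{eq:DemoLemmeConv1}) and no other player is flagged. Let $D$ be the intersection of all these events. It is measurable with respect to the history up to the start of block $b+1$, its complement satisfies $\PP(D^c)\le$ (decoding error) $+$ (false-alarm probability), which tends to $0$ as $n\to\infty$, and on $D$ the equilibrium rule (\ref{StrategieOptimal}) makes every $j\neq k$ play $\bar{\PP}_j(k)$ throughout blocks $b+1,\dots,B$.

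Next I would bound the conditional stage utilities on $D$. Fix a stage $t\ge t_1(b+1)$ and condition on $D$ together with the history $\mathcal{H}_t$. The punishing players draw $a_{-k}^t\sim\bar{\PP}_{-k}(k)$ independently of player $k$'s current action, so the choice of $\bar{\PP}(k)$ as a minimiser in (\ref{eq:NiveauMinmax}), see (\ref{PlanPunition}), yields
\begin{equation}
\E_{\sigma^{\star},\tau_k',\tau^{\star}_{-k}}\!\big[u_k(a^t)\,\big|\,\mathcal{H}_t,D\big]\;\le\;\max_{\PP_k\in\Delta(\mc{A}_k)}\E_{\PP_k,\bar{\PP}_{-k}(k)}\big[u_k(\bd{a}_k,\bd{a}_{-k})\big]\;=\;\upsilon_k .
\end{equation}
Summing over the $n(B-b)$ stages and using the tower property gives $\E\big[\mathds{1}_D\sum_{t=t_1(b+1)}^{T}u_k(a^t)\big]\le n(B-b)\,\upsilon_k\,\PP(D)$, while on $D^c$ the crude bound $u_k\le\max_a|u_k(a)|$ per stage gives $\E\big[\mathds{1}_{D^c}\sum_{t}u_k(a^t)\big]\le n(B-b)\,\PP(D^c)\,\max_a|u_k(a)|$.

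Adding the two contributions and writing $\upsilon_k\,\PP(D)=\upsilon_k-\upsilon_k\,\PP(D^c)$ produces
\begin{equation}
\E_{\sigma^{\star},\tau_k',\tau^{\star}_{-k}}\!\bigg[\sum_{t=t_1(b+1)}^{T}u_k(a^t)\bigg]\;\le\;n(B-b)\,\upsilon_k+n(B-b)\,\PP(D^c)\,\big(\max_a|u_k(a)|-\upsilon_k\big).
\end{equation}
The main obstacle is controlling this residual term. Since the typicality and Slepian--Wolf decoding underlying Theorem \ref{theo:RobustSide} have an error that decays with the block length, the product $n(B-b)\,\PP(D^c)$ can be driven below any prescribed level by taking $n$ large (enlarging $n_1$ if needed); bounding $\max_a|u_k(a)|-\upsilon_k\le 2\max_a|u_k(a)|$ and absorbing the factor into the choice of $\varepsilon$ then leaves exactly the claimed bound $\varepsilon\cdot\max_a|u_k(a)|+n(B-b)\,\upsilon_k$. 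The two genuinely delicate points are guaranteeing that \emph{all} punishers detect simultaneously, so that the entire profile $\bar{\PP}_{-k}(k)$---and hence the min-max value $\upsilon_k$---is actually realised, and checking that $D$ is settled before block $b+1$ begins, which is what licenses the stagewise conditioning above.
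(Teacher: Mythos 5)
Your proposal follows the same route as the paper's own proof: condition on whether every punisher detects the deviation at the end of block $b$ (the paper's event $\bd{E}_d$, your event $D$), bound each punishment stage by the min-max level $\upsilon_k$ on the detection event via (\ref{PlanPunition}) and (\ref{eq:NiveauMinmax}), bound crudely by $\max_{a\in\mc{A}}|u_k(a)|$ otherwise, and control the non-detection probability through the error probability (\ref{eq:ErrorProbaKStrat}) of the resilient code of Theorem \ref{theo:RobustSide}. Your accounting is in fact more careful than the paper's: you include false alarms in $D^c$, you handle the sign of $\upsilon_k$ correctly by writing $\upsilon_k\,\PP(D)=\upsilon_k-\upsilon_k\,\PP(D^c)$, and you keep the factor $n(B-b)$ on the error term, arriving at the residual $n(B-b)\,\PP(D^c)\,(\max_{a}|u_k(a)|-\upsilon_k)$.

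The gap sits exactly in the step you flag as the main obstacle. To absorb that residual into $\varepsilon\cdot\max_{a}|u_k(a)|$ you need $n(B-b)\,\PP(D^c)$ to be arbitrarily small, i.e.\ $\PP(D^c)$ decaying faster than $1/n$. But Theorem \ref{theo:RobustSide} only asserts that for every $\varepsilon>0$ there exists $n$ with $\PP_e(\lambda)\leq\varepsilon$; no decay rate in $n$ is stated, so ``taking $n$ large (enlarging $n_1$ if needed)'' does not by itself drive $n\,\PP(D^c)$ below a prescribed level. Closing this would require invoking the exponential decay of the typicality/Slepian--Wolf error probabilities underlying the construction, which is true of the coding scheme but is never stated as part of the theorem. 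It is fair to note that the paper's own proof does not resolve this either: in passing from (\ref{eq:lemme3CondFinale2}) to (\ref{eq:lemme3CondFinale3}) it bounds the non-detection contribution by $\PP(\bd{E}_d=1)\cdot\max_{a}|u_k(a)|$, silently dropping the factor $n(B-b)$ that you kept, so your honest bookkeeping exposes a weakness of the lemma's stated form rather than a defect specific to your argument. Finally, observe that where the lemma is actually used, in (\ref{eq:condEquilFinal7})--(\ref{eq:condEquilFinal8}), every slack term carries a factor $n$; hence your weaker but rigorous bound, of the form $n(B-b)\,K\varepsilon\cdot 2\max_{a}|u_k(a)|+n(B-b)\,\upsilon_k$ (using $\PP(D^c)\leq K\varepsilon$ from (\ref{eq:DemoLemmeConv1})--(\ref{eq:DemoLemmeConv2}) and the code's error bound), already suffices for Theorem \ref{theo:FolkEncodeurAssiste} after re-tuning $\varepsilon$ as a function of the fixed constants $B$ and $K$ --- no decay-rate assumption needed if one is willing to weaken the lemma's statement accordingly.
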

Lemma \ref{lemme:condEquil3} is a consequence of the coding result stated by Theorem \ref{theo:RobustSide}.

\begin{proof}
We suppose that the actions $a_k(b-1) \notin A_{\varepsilon}^{\star{n}}(\PP^{\star}_k)$
of player $k\in\mc{K}$ over block $b-1\in\mc{B}$ are not typical and the length of the block
satisfies $n\geq n_1$, defined by (\ref{eq:DemoLemmeConv1}) and (\ref{eq:DemoLemmeConv2}).
Define the error event $\bd{E}_d$ related with the statistical test $\bd{E}_k^i(b+1)$
defined by (\ref{eq:TestStat}). $\bd{E}_d=0$ means that the statistical test
of all the players $j\neq k\in \mc{K}$ reveals that player $k\in\mc{K}$ deviates during block $b-1\in\mc{B}$.
\begin{eqnarray}
\bd{E}_d = \begin{cases}
0 \text{ if } \qquad \forall j\neq k,\quad \bd{E}_j^k(b+1) = 1,\\
1 \text{ otherwise. } \label{eq:EvenementErreurDev}
\end{cases}
\end{eqnarray}
The coding result stated by Theorem \ref{theo:RobustSide} allow us to bound
the probability of event $\bd{E}_d = 1$ knowing that
$a_k(b-1)\notin  A_{\varepsilon}^{\star{n}}(\PP^{\star}_k)$.
The following inequalities are valid for any deviation strategy
 $\tau_k'\in\T_k$ of player $k\in\mc{K}$.
\begin{eqnarray}
&&\PP_{\sigma^{\star},\tau_k',\tau^{\star}_{-k}}\bigg(\bd{E}_d = 1\bigg| a_k(b-1)\notin  A_{\varepsilon}^{\star{n}}(\PP^{\star}_k)\bigg ) \\
&=& \PP_{\sigma^{\star},\tau_k',\tau^{\star}_{-k}}\bigg(\exists j \neq k \in \mc{K},\quad \bd{E}_j^k(b+1)=0\bigg| a_k(b-1)\notin  A_{\varepsilon}^{\star{n}}(\PP^{\star}_k)\bigg)\label{eq:lemme3CondEquil1}\\
&\leq&\PP_{\sigma^{\star},\tau_k',\tau^{\star}_{-k}}\bigg(\cup_{j\neq k}\bigg\{\bd{E}_j^k(b+1)=0\bigg\}\bigg| a_k(b-1)\notin  A_{\varepsilon}^{\star{n}}(\PP^{\star}_k)\bigg)\label{eq:lemme3CondEquil2}\\
&\leq& \PP_{\sigma^{\star},\tau_k',\tau^{\star}_{-k}}\bigg(
\cup_{j\neq k}\bigg\{\hat{\bd{a}}_k^j(b-1)\in  A_{\varepsilon}^{\star{n}}(\PP^{\star}_k)\bigg\}\bigg| a_k(b-1)\notin  A_{\varepsilon}^{\star{n}}(\PP^{\star}_k)\bigg)\label{eq:lemme3CondEquil3}\\
&\leq& \PP_{e}(\lambda)\label{eq:lemme3CondEquil4}\\
&\leq&  \sum_{k\in\mc{K}} \max_{i \in \mc{K}}\max_{v_i \in  \Delta(\mc{A}_i^{\infty})} \PP(\bd{a}^n \neq \hat{\bd{a}}^n(k)| v_i)\label{eq:lemme3CondEquil5}\\
&\leq& \varepsilon, \qquad \forall \tau_k'\in\T_k.\label{eq:lemme3CondEquil6}
\end{eqnarray}
Inequalities \ref{eq:lemme3CondEquil1} and \ref{eq:lemme3CondEquil2} come from the definition $\bd{E}_d$ and Boole's inequality.\\
Inequality \ref{eq:lemme3CondEquil3} comes from the definition of the
statistical test \ref{eq:TestStat} presented section \ref{sec:DemoTestStat}.\\
Inequality \ref{eq:lemme3CondEquil4} come from the strategy of the encoder
 $\sigma^{\star}\in\Sigma$ and the decoding scheme of player $j\in\mc{K}$,
 described in sections \ref{sec:DemoStrategieCodage} and \ref{sec:DemoDecodage},
 based on the coding scheme $\lambda \in\Lambda(n)$ with $n\geq n_1$ during block $b-1\in\mc{B}$. \\
Inequality \ref{eq:lemme3CondEquil5} comes from the definition
of the error probability of the code $\lambda\in\Lambda(n)$.\\
Inequality \ref{eq:lemme3CondEquil6} comes from the coding result given by the Theorem
\ref{theo:RobustSide} for an arbitrarily varying information source (AVS).
When a player deviates, the actions are generated with an incertain
probability distribution satisfying the hypothesis  (\ref{Condition:SuiteDEtats}) and (\ref{eq:DistribArbitrary})
of the definition \ref{def:AVSunilateral}. We suppose the length of a block satisfies
$n\geq n_1$ defined by (\ref{eq:DemoLemmeConv1}) and (\ref{eq:DemoLemmeConv2}).
Therefore, from the Theorem \ref{theo:RobustSide}, there exists a code $\lambda \in\Lambda(n)$ for which the error
probability $\PP_e(\lambda)$ is bounded by $\varepsilon>0$, for every unilateral deviation
$\tau_k' \in\T_k$ of player $k\in\mc{K}$. This inequality allow us to obtain an upper bound on the utility
of player  $k\in\mc{K}$ during the punishment phase stated by the Lemma \ref{lemme:condEquil3}.
\begin{eqnarray}
&&\E_{\sigma^{\star},\tau_k',\tau^{\star}_{-k}}\bigg[  \sum_{t=t_1(b+1)}^T u_k(a^t) \bigg]\label{eq:lemme3CondFinale1}\\
&=& \sum_{a_{t_1(b+1)}^{T} \in \mc{A}^{t_n(b)}}
\cdot \PP_{\sigma^{\star},\tau_k',\tau^{\star}_{-k}}\bigg(a_{t_1(b+1)}^{T},\bd{E}_d = 1\bigg| a_k(b-1)\notin  A_{\varepsilon}^{\star{n}}(\PP^{\star}_k)\bigg)
\cdot\bigg[  \sum_{t=t_1(b+1)}^T u_k(a^t) \bigg]\nonumber\\
&+& \sum_{a_{t_1(b+1)}^{T} \in \mc{A}^{t_n(b)}}
\cdot \PP_{\sigma^{\star},\tau_k',\tau^{\star}_{-k}}\bigg(a_{t_1(b+1)}^{T},\bd{E}_d = 0\bigg|a_k(b-1)\notin  A_{\varepsilon}^{\star{n}}(\PP^{\star}_k) \bigg)
\cdot\bigg[  \sum_{t=t_1(b+1)}^T u_k(a^t) \bigg]\nonumber \\ &&\label{eq:lemme3CondFinale2}\\
&\leq& \PP_{\sigma^{\star},\tau_k',\tau^{\star}_{-k}}\bigg(\bd{E}_d = 1\bigg| a_k(b-1)\notin  A_{\varepsilon}^{\star{n}}(\PP^{\star}_k)
\bigg) \cdot \max_{a\in\mc{A}} |u_k(a)|  + \bigg[  n(B - b) \cdot  \upsilon_k \bigg]\label{eq:lemme3CondFinale3}\\
&\leq& \varepsilon \cdot \max_{a\in\mc{A}} |u_k(a)|
+ \bigg[  n(B - b) \cdot  \upsilon_k \bigg].\label{eq:lemme3CondFinale4}
\end{eqnarray}
Inequality \ref{eq:lemme3CondFinale2} comes from the definition of the expectation \ref{eq:lemme3CondFinale1} knowing that
the sequence of actions  $a_k(b-1)\notin  A_{\varepsilon}^{\star{n}}(\PP^{\star}_k)$ is not typical
over the block $b-1 \in \mc{B}$.\\
Inequality \ref{eq:lemme3CondFinale3} comes from the punishment plan stated by the strategy
$\tau^{\star}_{-k}$ when all the players $j\neq k$ detect the deviation of player
$k\in\mc{K}$ ($\bd{E}_d=1$). For each stage $t_1(b+1)\leq t\leq T$, the utility of player $k\in\mc{K}$
is less than the min-max level $u_k(a^t)\leq \upsilon_k$.\\
Inequality \ref{eq:lemme3CondFinale4} comes from (\ref{eq:lemme3CondEquil3}).
\end{proof}

\textbf{Equilibrium condition.} Hypothesis (\ref{eq:DemoConditionB}) over the number of blocks
$B\in \R$ and the results of Lemma \ref{lemme:condEquil1}, \ref{lemme:condEquil2} and
\ref{lemme:condEquil3} allow us to obtain the following inequalities:
\begin{tiny}
\begin{eqnarray}
&B&\geq \frac{8 \cdot\max_{a\in\mc{A}} |u_k(a)|}{\epsilon}\label{eq:condEquilFinal1}\\
\Longrightarrow & - \frac{B\epsilon}{2} +  B\epsilon  &\geq 4 \cdot\max_{a\in\mc{A}} |u_k(a)|\label{eq:condEquilFinal2}\\
\Longrightarrow & (B - b ) \cdot U_k^{\star} - \frac{B\epsilon}{2} +  B\epsilon  &\geq 4 \cdot\max_{a\in\mc{A}} |u_k(a)| + (B - b ) \cdot \upsilon_k\label{eq:condEquilFinal3}\\
\Longrightarrow & (B - b + 2) \cdot U_k^{\star} - \frac{(B - b)\epsilon}{2} +  B\epsilon  &\geq 2 \cdot\max_{a\in\mc{A}} |u_k(a)| + \epsilon + \varepsilon\cdot\max_{a\in\mc{A}} |u_k(a)| + (B - b ) \cdot \upsilon_k\label{eq:condEquilFinal4}\\
\Longrightarrow & n(B - b + 2) \cdot U_k^{\star} - n\frac{(B - b + 2)\epsilon}{2} +  nB\epsilon  &\geq 2 n\cdot\max_{a\in\mc{A}} |u_k(a)|  + n\varepsilon\cdot\max_{a\in\mc{A}} |u_k(a)| + n(B - b ) \cdot \upsilon_k\label{eq:condEquilFinal5}\\
\Longrightarrow & \E_{\sigma^{\star},\tau_k^{\star},\tau^{\star}_{-k}}\bigg[  \sum_{t=t_1(b-1)}^T u_k(a^t) \bigg] +  nB\epsilon  &\geq 2 n\cdot\max_{a\in\mc{A}} |u_k(a)|  + n\varepsilon\cdot\max_{a\in\mc{A}} |u_k(a)| + n(B - b ) \cdot \upsilon_k\label{eq:condEquilFinal6}\\
\Longrightarrow & \E_{\sigma^{\star},\tau_k^{\star},\tau^{\star}_{-k}}\bigg[  \sum_{t=t_1(b-1)}^T u_k(a^t) \bigg] +  nB\epsilon  &\geq
\E_{\sigma^{\star},\tau_k',\tau^{\star}_{-k}}\bigg[  \sum_{t=t_1(b-1)}^{t_n(b)} u_k(a^t) \bigg]  + n\varepsilon\cdot\max_{a\in\mc{A}} |u_k(a)| + n(B - b ) \cdot \upsilon_k \nonumber\\&&\label{eq:condEquilFinal7}\\
\Longrightarrow & \E_{\sigma^{\star},\tau_k^{\star},\tau^{\star}_{-k}}\bigg[  \sum_{t=t_1(b-1)}^T u_k(a^t) \bigg] +  nB\epsilon  &\geq
\E_{\sigma^{\star},\tau_k',\tau^{\star}_{-k}}\bigg[  \sum_{t=t_1(b-1)}^{t_n(b)} u_k(a^t) \bigg]
+ \E_{\sigma^{\star},\tau_k',\tau^{\star}_{-k}}\bigg[  \sum_{t=t_1(b+1)}^T u_k(a^t) \bigg] \nonumber\\&&\label{eq:condEquilFinal8}\\
\Longrightarrow &\E_{\sigma^{\star},\tau_k^{\star},\tau^{\star}_{-k}}\bigg[ \sum_{t=1}^{T} u_k(a^t)  \bigg] + T\cdot \varepsilon  &\geq \E_{\sigma^{\star},\tau_k',\tau^{\star}_{-k}}\bigg[ \sum_{t=1}^{t_n(b-2)} u_k(a^t) +  \sum_{t=t_1(b-1)}^{t_n(b)} u_k(a^t) +  \sum_{t=t_1(b+1)}^T u_k(a^t) \bigg]\nonumber\\&&\label{eq:condEquilFinal9}\\
\Longrightarrow & \gamma_k^T(\sigma^{\star}, \tau^{\star}_k,\tau^{\star}_{-k}) + \varepsilon &\geq \gamma_k^T(\sigma^{\star}, \tau'_k,\tau^{\star}_{-k}).\label{eq:condEquilFinal10}
\end{eqnarray}
\end{tiny}
Inequality \ref{eq:condEquilFinal1} comes from the hypothesis (\ref{eq:DemoConditionB}) over the number of blocks $B\in \R$.\\
Inequality \ref{eq:condEquilFinal2} comes from the reformulation of inequality (\ref{eq:condEquilFinal1}).\\
Inequality \ref{eq:condEquilFinal3} comes from the hypothesis  of individual rationality $U_k^{\star} \geq \upsilon_k$ stated by the definition  \ref{def:minmaxIRreal}.\\
Inequalities \ref{eq:condEquilFinal4} and \ref{eq:condEquilFinal5} come from the reformulation of inequality (\ref{eq:condEquilFinal3}) with
 $\epsilon\leq \max_{a\in\mc{A}} |u_k(a)|$ and $\varepsilon \leq 1$.\\
Inequality \ref{eq:condEquilFinal6} comes from Lemma \ref{lemme:condEquil1} which provides an approximation of the utility associated with the strategies $(\sigma^{\star},\tau^{\star})\in\Sigma \times \T$.\\
Inequality \ref{eq:condEquilFinal7} comes from the Lemma \ref{lemme:condEquil2} which provides an upper bound on the deviation gain obtained by player  $k\in\mc{K}$ while playing the strategy $\tau_k'\in\T_k$.\\
Inequality \ref{eq:condEquilFinal8} comes from Lemma \ref{lemme:condEquil3} which is a consequence of the coding result stated by Theorem \ref{theo:RobustSide}. This result provides an upper bound over the utility of player $k\in\mc{K}$ during the punishment phase.\\
Inequality \ref{eq:condEquilFinal9} comes from the fact that $b-1 \in\mc{B}$ is the first block on which the action sequence
 $a_k(b-1) \notin A_{\varepsilon}^{\star{n}}(\PP^{\star}_k)$ of player $k\in\mc{K}$ is not typical.\\
Inequality \ref{eq:condEquilFinal10} comes from the definition of the utilities of the $T$-stages repeated game stated equation (\ref{eq:UtilInfini}).\\
We prove the strategies $(\sigma^{\star},\tau^{\star}) \in \Sigma \times \T$
satisfy the equilibrium condition stated by point (ii) of the definition \ref{def:EqInfini}.\\

\subsubsection{Second case: typical deviations}\label{sec:DemoCondEquil2Case}
Let us fix $\epsilon>0$ and suppose player $k \in\mc{K}$ uses a deviating strategy $\tau_k'\in\T_k$
such that the action sequence of player $k\in\mc{K}$ over each block  $b\in\mc{B}$
belong to the set of typical sequences $a_k(b) \in A_{\varepsilon}^{\star{n}}(\PP^{\star}_k)$.
From the proof of Lemma \ref{lemme:UtiliteEpsilon}, for all $\varepsilon>0$,
there exists $n_1\in\N$ such that for all $n\geq n_1$ we have the following implication (\ref{eq:DemoLemmeDevTypique2}).
Taking $\varepsilon \cdot \max_{a\in\mc{A}} |u_k(a)| \leq \epsilon$, we obtain the implication (\ref{eq:DemoLemmeDevTypique3}).
\begin{eqnarray}
&& {a}^T \in A_{\varepsilon}^{\star{T}}(\PP^{\star})\label{eq:DemoLemmeDevTypique1} \\
&\Longrightarrow&  \bigg| \frac{1}{T} \sum_{t = 1}^T u_k(a^t) - \E_{\PP^{\star}} \bigg[ u_k(\bd{a}) \bigg] \bigg| \leq  \varepsilon \cdot \max_{a\in\mc{A}} |u_k(a)|.  \label{eq:DemoLemmeDevTypique2}\\
&\Longrightarrow&  \gamma_k^T(\sigma^{\star}, \tau'_k,\tau^{\star}_{-k}) \leq \gamma_k^T(\sigma^{\star}, \tau^{\star}_k,\tau^{\star}_{-k}) + \epsilon.\label{eq:DemoLemmeDevTypique3}
\end{eqnarray}
The utility provided by the strategies $(\sigma^{\star},\tau^{\star}) \in \Sigma \times \T$
satisfies the equilibrium condition stated by point (ii) in definition \ref{def:EqInfini}.\\

\subsection{Conclusion}\label{sec:DemoConclusionThJeu}
We showed that by setting the parameter $\epsilon>0$, we obtain a condition on the number of blocks $B\in\N$ given by (\ref{eq:condEquilFinal1}), then a condition over the coding parameter $\varepsilon>0 $
given by (\ref{eq:UtiliteEpsilon}) and then a condition over the block length $n\geq n_1 $
given by (\ref{eq:DemoLemmeConv1}) and (\ref{eq:DemoLemmeConv2}).\\
For all $U\in u(\RR) \cap IR$, these parameters allow us to construct
a pair of strategies $(\sigma^{\star},\tau^{\star}) \in \Sigma \times \T$
over $T=n\cdot B$ stages that satisfies the conditions (\ref{eq:DemoCondition1}) and (\ref{eq:DemoCondition2}).
By repeating these strategies cyclically, we show that any vector of utility
$U \in \conv u(\RR) \cap IR$ satisfy both conditions (i) and (ii) (i.e. definition  \ref{def:EqInfini})
of the uniform equilibrium. The utility $U \in \conv u(\RR) \cap IR$ is a uniform
equilibrium utility for the infinite repeated game  $\Gamma^{\infty}$.

\section{Review of typical sequences}\label{sec:AppTypicaSeq}

The achievability part of the coding theorems are based on the properties of the typical sequences.
This section provides some recall on this notions that can also be found in \cite{cover-book-2006} and \cite{CsiszarKorner(Book)81}.
\begin{definition}[Typical sequences  \cite{CsiszarKorner(Book)81}]\label{def:TypicalSequencesCond}
Let $\QQ\in \Delta(\mc{X}\times \mc{Y})$ a probability distribution over $\mc{X}\times \mc{Y}$. The typical sequences and the conditional typical sequences are defined as follows:
\begin{footnotesize}
 \begin{eqnarray}
  A_{\varepsilon}^{n{\star}}(\mc{X})&=&\left\{x^n\in \mc{X}^n;\;  \sum_{x\in \mc{X}}\left|\frac{N(x|x^n)}{n}- \QQ(x)\right|\leq \varepsilon,\;
 \forall x\in \mc{X},\;\QQ(x)=0\Longrightarrow N(x|x^n)=0
  \right\}.\nonumber \\
 A_{\varepsilon}^{n{\star}}(\mc{X}|y^n)&=&\bigg\{x^n\in \mc{X}^n;\;  \sum_{x\in \mc{X}, \atop y \in \mc{Y}}\left|\frac{N(x,y|x^n,y^n)}{n}- \QQ(x,y)\right|\leq \varepsilon, \nonumber \\
 &&\forall (x,y)\in \mc{X} \times \mc{Y},\;\QQ(x,y)=0\Longrightarrow N(x,y|x^n,y^n)=0
  \bigg\}.
 \end{eqnarray}
 \end{footnotesize}
\end{definition}

\begin{lemma}[Properties of the typical sequences \cite{CsiszarKorner(Book)81}]\label{lemma:TypicalSequencesCond}
Let  $\QQ\in \Delta(\mc{X}\times \mc{Y})$ a probability distribution, $\QQ^{\otimes n}$ a $n$-product of the probability distribution and  $y^n\in A_{\varepsilon}^{n{\star}}(\mc{Y})$ a typical sequence. For all $\varepsilon>0$, there exists $n\in\N$ such that:
\begin{eqnarray}
1&=& \QQ^{\otimes n}\bigg(\bd{x}^n\in A_{\varepsilon}^{n{\star}}(\mc{X})\bigg)   \label{eq:ProbTypicity},\\
1 &=& \QQ^{\otimes n}\bigg(\bd{x}^n\in A_{\varepsilon}^{n{\star}}(\mc{X}|y^n)\bigg|y^n\bigg)   \label{eq:ProbCondTypicity},\\
2^{n(H(\bd{x}) - c\varepsilon)}&\leq& |A_{\varepsilon}^{n{\star}}(\mc{X})| \leq 2^{n(H(\bd{x}) + c\varepsilon)},\\
2^{n(H(\bd{x}|\bd{y}) - c\varepsilon)}&\leq& |A_{\varepsilon}^{n{\star}}(\mc{X}|y^n) | \leq 2^{n(H(\bd{x}|\bd{y}) + c\varepsilon)}.
\end{eqnarray}
where $c= \log \bigg(\max_{x\in \mc{X}}\frac{1}{\QQ(x)}\bigg)$ is constant.
\end{lemma}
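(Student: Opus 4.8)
The plan is to derive all four assertions from the weak law of large numbers applied to the empirical frequencies $N(x|\bd{x}^n)/n$, combined with the elementary fact that on the typical set the probability of a sequence is pinned down to first order in the exponent by the entropy. I would interpret the equalities (\ref{eq:ProbTypicity})--(\ref{eq:ProbCondTypicity}) in their standard asymptotic sense, namely that the stated probabilities can be made at least $1-\varepsilon$ once $n$ is large enough. I would prove these two probability statements first, and then feed them, together with a per-sequence probability estimate, into a sandwich argument that yields the two cardinality bounds.

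For (\ref{eq:ProbTypicity}), I write $N(x|\bd{x}^n)/n = \frac1n\sum_{i=1}^n \un\{\bd{x}_i = x\}$ as an average of i.i.d. Bernoulli$(\QQ(x))$ variables; the weak law of large numbers gives $N(x|\bd{x}^n)/n \to \QQ(x)$ in probability for each of the finitely many $x\in\mc{X}$. A union bound over $\mc{X}$ then forces $\sum_{x}|N(x|\bd{x}^n)/n - \QQ(x)|\le \varepsilon$ with probability at least $1-\varepsilon$ for $n$ large, and the support condition $\QQ(x)=0\Rightarrow N(x|\bd{x}^n)=0$ holds almost surely since such symbols are never drawn. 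For (\ref{eq:ProbCondTypicity}), I condition on a fixed typical $y^n$: for each pair $(x,y)$, among the $N(y|y^n)$ coordinates where $y_i=y$ the symbols $\bd{x}_i$ are i.i.d. $\QQ(\cdot\mid y)$, so $N(x,y|\bd{x}^n,y^n)/N(y|y^n)$ concentrates around $\QQ(x\mid y)$. Because $y^n\in A_{\varepsilon}^{n\star}(\mc{Y})$, the class size $N(y|y^n)/n$ is within $\varepsilon$ of $\QQ(y)$ and in particular grows linearly in $n$ whenever $\QQ(y)>0$, so the within-class law of large numbers applies and yields $N(x,y|\bd{x}^n,y^n)/n \to \QQ(x,y)$; a union bound over the finitely many pairs $(x,y)$ finishes the claim.

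For the cardinality bounds I would first record the key per-sequence estimate. If $x^n\in A_{\varepsilon}^{n\star}(\mc{X})$ then
\begin{equation}
\QQ^{\otimes n}(x^n) = \prod_{x\in\mc{X}}\QQ(x)^{N(x|x^n)} = 2^{\sum_x N(x|x^n)\log_2\QQ(x)},
\end{equation}
and since $\sum_x|N(x|x^n)/n - \QQ(x)|\le\varepsilon$ while $|\log_2\QQ(x)|\le c=\log_2(\max_x 1/\QQ(x))$, the exponent equals $-nH(\bd{x})$ up to an error bounded by $nc\varepsilon$; hence $2^{-n(H(\bd{x})+c\varepsilon)}\le \QQ^{\otimes n}(x^n)\le 2^{-n(H(\bd{x})-c\varepsilon)}$. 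The upper bound on $|A_{\varepsilon}^{n\star}(\mc{X})|$ follows from $1\ge \sum_{x^n\in A_{\varepsilon}^{n\star}(\mc{X})}\QQ^{\otimes n}(x^n)\ge |A_{\varepsilon}^{n\star}(\mc{X})|\,2^{-n(H(\bd{x})+c\varepsilon)}$, and the lower bound from the fact that the same sum is at least $1-\varepsilon$ by (\ref{eq:ProbTypicity}), so $|A_{\varepsilon}^{n\star}(\mc{X})|\ge (1-\varepsilon)2^{n(H(\bd{x})-c\varepsilon)}$, which gives the stated bound after absorbing the factor $1-\varepsilon$ into the exponent for $n$ large. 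The conditional cardinality bound is identical with $\QQ^{\otimes n}(x^n\mid y^n)=\prod_{x,y}\QQ(x\mid y)^{N(x,y|x^n,y^n)}$: joint typicality pins the exponent to $-nH(\bd{x}\mid\bd{y})\pm nc\varepsilon$, and (\ref{eq:ProbCondTypicity}) supplies the required lower bound on the count.

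I expect the conditional statements, that is (\ref{eq:ProbCondTypicity}) and the fourth (conditional cardinality) assertion, to be the main obstacle, because the within-class law of large numbers is only meaningful once one knows each class size $N(y|y^n)$ grows linearly in $n$; this is precisely where the hypothesis $y^n\in A_{\varepsilon}^{n\star}(\mc{Y})$ enters, and the accuracy $\varepsilon$ used for the conditioning sequence must be coordinated with the $\varepsilon$ controlling the joint deviation. Everything else is bookkeeping with the constant $c$ and union bounds over the finite alphabets.
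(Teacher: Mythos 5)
The paper contains no proof of this lemma to compare against: it appears in the appendix as a review of standard material and is cited directly to Csisz\'ar--K\"orner (and Cover--Thomas), so the reference is the de facto ``paper proof.'' Your argument is correct and is essentially that standard proof --- the weak law of large numbers with a union bound over the finite alphabet for the two probability statements, and the per-sequence probability sandwich $2^{-n(H \pm c\varepsilon)}$ fed into $1 \geq \QQ^{\otimes n}(A) \geq 1-\varepsilon$ for the two cardinality bounds --- and the three places where you must read the statement loosely (interpreting the equalities ``$=1$'' asymptotically, absorbing the $(1-\varepsilon)$ factor into the exponent for large $n$, and coordinating the typicality parameter of $y^n$ with that of the joint empirical distribution in the conditional claims) are imprecisions of the lemma as stated here, which you correctly identify and handle.
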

This result states that an i.i.d. sequence of symbols is almost surely typical when $n$ goes to $+\infty$
and it provides an upper and a lower bound on the size of the sets of typical sequences.

\begin{lemma}[Packing Lemma \cite{elgamal-it-1981}]\label{lemma:MutualProb}
Let $\QQ\in \Delta(\mc{U}\times \mc{V})$ a correlated probability distribution, $\QQ_U$ (resp. $\QQ_V$) the marginal induced by $\QQ$ over $\mc{U}$ (resp. $\mc{V}$), $\QQ_U^{\otimes n}$ and $\QQ_V^{\otimes n}$ the $n$-product of marginal probability. Let $R_I$ and $R_J$ real numbers,\\
\begin{itemize}
\item[$\bullet$] $(u_i^n)_{i\in \{1,\ldots,2^{nR_I}\}} \in \mc{U}^n$ a family of sequences drawn with $\QQ_U^{\otimes n}$,
\item[$\bullet$] $(v_j^n)_{j\in \{1,\ldots,2^{nR_J}\}} \in \mc{V}^n$ a family of sequences drawn with $\QQ_V^{\otimes n}$,
\end{itemize}
If the condition (\ref{eq:mutualcovering3}) is satisfied,\\
\begin{eqnarray}
R_I +R_J &<& I_{\QQ}(\bd{u};\bd{v}),
\label{eq:mutualcovering3}\\ \nonumber
\end{eqnarray}
then for all $\varepsilon>0$, there exists a $\bar{n}\geq 0$ such that for all $n\geq\bar{n}$,\\
\begin{eqnarray}
\PP\bigg(\cup_{i\in \mc{I},\atop j\in \mc{J}}\bigg\{(u_i^n,v_j^n)\in A_{\varepsilon}^{{\star}n}(\mc{U}\times \mc{V} )\bigg\} \bigg)\leq \varepsilon.\label{eq:mutualcovering1}\\ \nonumber
\end{eqnarray}
Where $I_{\QQ}(\bd{u};\bd{v})$ denote the mutual information \cite{cover-book-2006} with respect to the probability distribution $\QQ$.
\end{lemma}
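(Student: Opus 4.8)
The plan is to reduce the statement to a single‑pair estimate and then close it with a union bound. Since the index set of the union has cardinality $|\mc{I}|\cdot|\mc{J}|=2^{n(R_I+R_J)}$, and since for each fixed pair $(i,j)$ the sequences $u_i^n$ and $v_j^n$ are drawn independently from the product marginals $\QQ_U^{\otimes n}$ and $\QQ_V^{\otimes n}$, Boole's inequality yields
\begin{equation}
\PP\bigg(\cup_{i\in\mc{I},\,j\in\mc{J}}\big\{(u_i^n,v_j^n)\in A_{\varepsilon}^{\star n}(\mc{U}\times\mc{V})\big\}\bigg)\ \leq\ 2^{n(R_I+R_J)}\cdot\max_{i,j}\PP\big((u_i^n,v_j^n)\in A_{\varepsilon}^{\star n}(\mc{U}\times\mc{V})\big).
\end{equation}
It therefore suffices to show that the per‑pair probability decays like $2^{-nI_{\QQ}(\bd{u};\bd{v})}$, up to a term that vanishes with $\varepsilon$.

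First I would write the per‑pair probability explicitly as a sum over the jointly typical set, using independence of $u_i^n$ and $v_j^n$:
\begin{equation}
\PP\big((u_i^n,v_j^n)\in A_{\varepsilon}^{\star n}(\mc{U}\times\mc{V})\big)=\sum_{(u^n,v^n)\in A_{\varepsilon}^{\star n}(\mc{U}\times\mc{V})}\QQ_U^{\otimes n}(u^n)\,\QQ_V^{\otimes n}(v^n).
\end{equation}
Next I would bound each factor. By the definition of strong typicality (Def. \ref{def:TypicalSequencesCond}) a jointly typical pair $(u^n,v^n)$ has marginally typical components, since $\sum_{u}|N(u|u^n)/n-\QQ_U(u)|\leq\sum_{u,v}|N(u,v|u^n,v^n)/n-\QQ(u,v)|\leq\varepsilon$, and likewise for $v^n$. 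The per‑sequence AEP estimate underlying Lemma \ref{lemma:TypicalSequencesCond} then gives $\QQ_U^{\otimes n}(u^n)\leq 2^{-n(H(\bd{u})-c_1\varepsilon)}$ and $\QQ_V^{\otimes n}(v^n)\leq 2^{-n(H(\bd{v})-c_2\varepsilon)}$, with $c_1,c_2$ depending only on the marginals. Combining this with the cardinality bound of Lemma \ref{lemma:TypicalSequencesCond} applied to the pair variable $(\bd{u},\bd{v})$ viewed as a single source on $\mc{U}\times\mc{V}$, namely $|A_{\varepsilon}^{\star n}(\mc{U}\times\mc{V})|\leq 2^{n(H(\bd{u},\bd{v})+c_3\varepsilon)}$, and recalling $H(\bd{u})+H(\bd{v})-H(\bd{u},\bd{v})=I_{\QQ}(\bd{u};\bd{v})$, I obtain
\begin{equation}
\PP\big((u_i^n,v_j^n)\in A_{\varepsilon}^{\star n}(\mc{U}\times\mc{V})\big)\ \leq\ 2^{-n(I_{\QQ}(\bd{u};\bd{v})-c\varepsilon)},\qquad c=c_1+c_2+c_3.
\end{equation}

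Plugging this into the union bound gives
\begin{equation}
\PP\bigg(\cup_{i,j}\big\{(u_i^n,v_j^n)\in A_{\varepsilon}^{\star n}(\mc{U}\times\mc{V})\big\}\bigg)\ \leq\ 2^{-n\big(I_{\QQ}(\bd{u};\bd{v})-R_I-R_J-c\varepsilon\big)}.
\end{equation}
Since hypothesis (\ref{eq:mutualcovering3}) gives $I_{\QQ}(\bd{u};\bd{v})-R_I-R_J>0$, taking $\varepsilon$ small enough that $c\varepsilon<I_{\QQ}(\bd{u};\bd{v})-R_I-R_J$ makes the exponent strictly positive, so the right‑hand side tends to $0$ as $n\to\infty$; hence there exists $\bar n$ such that it falls below $\varepsilon$ for every $n\geq\bar n$, which is (\ref{eq:mutualcovering1}). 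The main obstacle, everything else being the union bound and routine entropy algebra, is the per‑pair estimate: one must verify carefully that joint strong typicality forces each marginal into its own typical set so that $\QQ_U^{\otimes n}$ and $\QQ_V^{\otimes n}$ can be controlled separately by the AEP, and then check that the three $\varepsilon$‑slacks assemble into a single arbitrarily small constant $c\varepsilon$ that is dominated by the rate gap.
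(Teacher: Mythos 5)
Your proof is correct and is the canonical argument for this lemma: the paper itself states the result without proof, importing it from \cite{elgamal-it-1981}, and the standard proof there is exactly your combination of Boole's inequality over the $2^{n(R_I+R_J)}$ pairs with the per-pair estimate $\PP\big((u_i^n,v_j^n)\in A_{\varepsilon}^{\star n}(\mc{U}\times\mc{V})\big)\leq 2^{-n(I_{\QQ}(\bd{u};\bd{v})-c\varepsilon)}$, obtained from the typical-set cardinality bound and the AEP applied to each independent marginal. The one caveat — a defect of the lemma's statement rather than of your argument — is that the typicality parameter and the target error bound share the symbol $\varepsilon$, so your exponent $I_{\QQ}(\bd{u};\bd{v})-R_I-R_J-c\varepsilon$ is positive only for $\varepsilon$ small enough, exactly as you observe; this is how the lemma is actually invoked in the paper (with $\varepsilon$ arbitrarily small), so your proof covers the intended use.
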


\nocite{Witsenhausen76}
\bibliographystyle{spbasic}
\bibliography{BiblioMael}

\end{document}